\newlength{\hlp}
\newcommand{\leftbox}[2]{\settowidth{\hlp}{$#1$}\makebox[\hlp][l]{${#1}{#2}$}}
\newcommand{\rightbox}[2]{\settowidth{\hlp}{$#2$}\makebox[\hlp][r]{${#1}{#2}$}}
\theoremstyle{plain}
\newtheorem{theorem}{Theorem}[section]
\newtheorem{lemma}[theorem]{Lemma}
\newtheorem{prop}[theorem]{Proposition}
\newtheorem{corol}[theorem]{Corollary}
\theoremstyle{definition}
\newtheorem{defi}[theorem]{Definition}
\newtheorem{remark}[theorem]{Remark}
\DeclareMathOperator{\Cyl}{Cyl}
\DeclareMathOperator{\Cone}{Cone}
\def\dim{\mathop{\rm dim}\nolimits}
\newcommand{\N}{\mathbbm{N}}
\newcommand{\Z}{\mathbbm{Z}}
\newcommand{\Q}{\mathbbm{Q}}
\def\id{\mathrm{id}}
\def\ndg{\textup{ndg}}
\def\bs{\mbox{\boldmath$\sigma$}}
\def\bt{\mbox{\boldmath$\tau$}}
\newcommand\makevec[1]{{\boldsymbol{#1}}}
\def\aa{\makevec{a}}
\def\bb{\makevec{b}}
\def\cc{\makevec{c}}
\def\xx{\makevec{x}}
\def\ra{\rightarrow}
\renewcommand\:{\colon}
\newcommand{\ProofEndBox}{{\ifhmode\unskip\nobreak\hfil\penalty50 \else
          \leavevmode\fi\quad\vadjust{}\nobreak\hfill$\Box$
                      \finalhyphendemerits=0 \par}}%
\newcommand{\proofend}{\ProofEndBox\smallskip}
\newlength{\fparwidth}
\newcommand{\iipefig}[1]  
{\immediateFigure{\IPE{#1.ipe}}}
\def\immediateFigure#1{%
\smallskip\begin{center}#1\end{center}\smallskip }
\long\def\onefigure#1#2{
\begin{figure*}[tbp]
\begin{center}
#1
\end{center}
\caption{#2}
\end{figure*}
}
\def\immediateFigure#1{%
\smallskip\begin{center}#1\end{center}\smallskip }
\newcommand{\labfig}[2]  
{\onefigure{\mbox{\includegraphics{#1}}}{\label{f:#1} #2} }
\newcommand{\labfigw}[3]  
{\onefigure{\mbox{\includegraphics[width=#2]{#1}}}{\label{f:#1}
#3}}
\newcommand{\immfig}[1]  
{\immediateFigure{\mbox{\includegraphics{#1}}}}
\newcommand{\immfigw}[2] 
{\immediateFigure{\mbox{\includegraphics[width=#2]{#1}}}}
\def\indef#1{\emph{#1}}
\newcommand{\heading}[1]{\vspace{1ex}\par\noindent{\bf\boldmath #1}}
\newcommand{\marrow}{\marginpar{\boldmath$\longleftarrow$}}
\newcommand{\marek}[1]{\ifhmode\newline\fi\marrow \textsf{\color{orange}*** (MAREK: ) #1\newline}}
\newcommand{\martin}[1]{\ifhmode\newline\fi\marrow \textsf{\color{red}{*** (MARTIN: ) #1\newline}}}
\newcommand{\jirka}[1]{\ifhmode\newline\fi\marrow \textsf{\color{magenta}*** (JIRKA: ) #1\newline}}
\newcommand{\uli}[1]{\ifhmode\newline\fi\marrow \textsf{\color{cyan}{*** (ULI: ) #1\newline}}}
\newcommand{\lukas}[1]{\ifhmode\newline\fi\marrow \textsf{\color{red}{*** (LUKAS: ) #1\newline}}}
\def\del#1{\st{#1}}
\newenvironment{insbl}[1][]{\marek{\ul{#1 The following block will be INSERTED}}}
{\marek{\ul{End of INSERTED block}}}
\newcommand{\marrow}{}
\newcommand{\marek}[1]{}
\newcommand{\uli}[1]{}
\newcommand{\jirka}[1]{}
\newcommand{\martin}[1]{}
\newcommand{\lukas}[1]{}
\def\del#1{}
\def\kamsymb{{\rm b}}
\def\ethsymb{{\rm c}}
\def\massymb{{\rm a}}
\def\epflsymb{{\rm d}}
\newcommand\thedim{k}
\newcommand{\sddim}{{t}}
\newcommand{\then}{{n}}
\newcommand{\sphdim}{{p}}
\newcommand{\thelen}{{m}}
\newcommand{\theeqn}{{q}}
\DeclareMathOperator{\sd}{Sd}
\newcommand{\sdit}{\sd^{\sddim}}
\DeclareMathOperator{\size}{\mathsf{size}}
\DeclareMathOperator{\Ex}{\mathcal{K}}
\newcommand{\anick}[1]{Y^{4}_{#1}}
\newcommand{\rcone}{\mathop\mathrm{\widetilde{\Cone}}\nolimits}
\newcommand{\rcyl}{\mathop\mathrm{\widetilde{\Cyl}}\nolimits}
\newcommand{\skel}[2]{#2^{(#1)}}
\newcommand{\lastvertex}{\mathop\mathrm{lastv}\nolimits}
\renewcommand{\circ}{}
\newcommand{\chsimp}[2]{
D^{#1}\!\!\raisebox{-0.5ex}{\begin{tikzpicture}\draw (0ex,0ex) -- (0.8ex,1ex);\end{tikzpicture}}_{#2}}
\newcommand{\gmc}[1]{%
\settowidth{\hlp}{$\scriptstyle#1$}
\addtolength{\hlp}{10pt}
\ifthenelse{\lengthtest{\hlp<15pt}}{\setlength{\hlp}{15pt}}{}
\xymatrix@1@C=\hlp{\ar@{~>}[r]^-{#1}&}}
\newcommand{\embed}[1]{\xymatrix@1@C=15pt{{}\ar@{c->}[r]^{#1} & {}}}
\title{Extendability of continuous maps is undecidable\thanks{
This research was supported by the  ERC Advanced Grant No.~267165. The
research of M.~\v{C}.~was supported by the project
CZ.1.07/2.3.00/20.0003 of the Operational Programme Education
for Competitiveness of the Ministry of Education, Youth and
Sports of the Czech Republic. The research by M.\,K.\ and J.\,M.\ was
supported by the Center of Excellence -- Inst.\ for Theor.\
        Comp.\ Sci., Prague (project P202/12/G061 of GA~\v{C}R).
The research of L.~V.~was supported by the Center of Excellence -- Eduard \v{C}ech Institute (project P201/12/G028 of GA~\v{C}R).
The research by U.\,W.\ was supported by the Swiss National
Science Foundation (grants SNSF-200020-138230 and
SNSF-PP00P2-138948).}
}
\author{Martin \v{C}adek$^\massymb$ \and Marek Kr\v{c}\'al$^{\kamsymb}$ \and
Ji\v{r}\'{\i} Matou\v{s}ek$^{\kamsymb,\ethsymb}$
\and Luk\'a\v{s}
Vok\v{r}\'{\i}nek$^\massymb$
 \and Uli Wagner$^{\epflsymb}$}
\begin{document}
\maketitle
 {\renewcommand\thefootnote{\massymb} \footnotetext{Department
  of Mathematics and Statistics,
  Masaryk University, Kotl\'a\v{r}sk\'a~2, 611~37~~Brno,
  Czech Republic}}
 {\renewcommand\thefootnote{\kamsymb} \footnotetext{Department
  of Applied Mathematics,
  Charles University, Malostransk\'{e} n\'{a}m.~25,
  118~00~~Praha~1,  Czech Republic} }
 {\renewcommand\thefootnote{\ethsymb}
  \footnotetext{Institute of  Theoretical Computer Science, ETH
  Zurich, 8092~Zurich, Switzerland} }
  {\renewcommand\thefootnote{\epflsymb}
\footnotetext{Institut de Math\'{e}matiques de G\'{e}om\'{e}trie et Applications, {\'E}cole
Polytechnique F\'{e}d\'{e}rale de Lausanne, EPFL SB MATHGEOM, MA C1 553, Station 8, 1015
Lausanne, Switzerland}}

\begin{abstract}
We consider two basic problems of algebraic topology, the \emph{extension problem}
and the \emph{computation of higher homotopy groups}, from the point of view of computability
and computational complexity.

The \emph{extension problem} is the following: Given topological spaces $X$ and $Y$, a subspace $A\subseteq X$, and a (continuous) map $f\:A\to Y$, decide whether $f$ can be extended to a continuous map $\bar{f}\:X\to Y$. All spaces are given as finite simplicial complexes and the map $f$ is simplicial.

Recent positive algorithmic results, proved in a series of companion papers, show that for $(k-1)$-connected $Y$, $k\ge 2$,
the extension problem is algorithmically solvable if the dimension of $X$
is at most $2k-1$, and even in polynomial time when $k$ is fixed.

Here we show that the condition $\dim X\leq 2k-1$ cannot be relaxed: for $\dim X=2k$, the extension problem with $(k-1)$-connected $Y$
becomes undecidable. Moreover, either the target space $Y$ or
the pair $(X,A)$ can be fixed in such a way that the problem remains
undecidable.

Our second result, a strengthening of a result of Anick, says that the computation of $\pi_k(Y)$ of a $1$-connected simplicial complex $Y$ is \#P-hard when $k$ is considered as a part of the input.
\end{abstract}

\section{Introduction}
One of the central themes in algebraic topology is to understand
the structure of all \emph{continuous} maps
$X\to Y$, for given topological spaces $X$ and $Y$ (all maps
between topological spaces in this paper are assumed to be
continuous). For topological purposes, two maps $f,g\:X\to Y$
are usually considered equivalent if they are
\emph{homotopic}, i.e., if one can be continuously deformed
into the other\footnote{More precisely, $f$ and $g$ are
homotopic, in symbols $f\sim g$, if there is a
map $F\colon X\times [0,1]\to Y$ such that
$F(\cdot,0)=f$ and $F(\cdot,1)=g$. With this notation,
$[X,Y]=\{[f]: f\colon X\to Y\}$, where $[f]=\{g: g\sim
f\}$ is the \emph{homotopy class} of $f$.}; thus, the object
of interest is $[X,Y]$, the set of all homotopy classes of
maps $X\to Y$.

Many of the celebrated results throughout the history of
topology can be cast as information about $[X,Y]$ for
particular spaces $X$ and~$Y$. In   particular,  one of the important
challenges propelling the research in algebraic topology has been
the computation of the \emph{homotopy groups of spheres}\footnote{
We recall that the $\thedim$th homotopy group $\pi_\thedim(Y)$ of a space $Y$ is defined as the set of all
homotopy classes of \emph{pointed} maps $f\:S^\thedim\to Y$, i.e.,
maps $f$ that send a distinguished
\emph{basepoint} $s_0\in S^\thedim$ to a distinguished basepoint $y_0\in Y$ (and the homotopies $F$ also
satisfy $F(s_0,t)=y_0$ for all $t\in[0,1]$). Strictly speaking, one should
write $\pi_\thedim(Y,y_0)$ but for a path-connected $Y$, the choice of
$y_0$ does not matter. Moreover, if $Y$ is \emph{simply connected}, i.e., if $\pi_1(Y)$ is trivial,
then the pointedness of the maps does not matter either and one can identify $\pi_\thedim(Y)$ with
$[S^\thedim,Y]$. For $\thedim\geq 1$,
 each $\pi_\thedim(Y)$ is a group, which for $\thedim\ge 2$ is Abelian;
the definition of the group operation will be reviewed in Section~\ref{s:homotopy-groups}.}
$\pi_\thedim(S^n)$, for which only partial results have been obtained in spite of an enormous
effort (see, e.g., \cite{Ravenel,Kochman}).

A closely related question is the
\emph{extension problem}: given $A\subset X$ and a map
$f\:A\to Y$, can it be extended to a map $X\to Y$? For
example, the famous \emph{Brouwer Fixed-Point Theorem}
can be re-stated as non-extendability of the identity map $S^n\to
S^n$ to the ball $D^{n+1}$ bounded by the sphere $S^n$. See
\cite{Steenrod:CohomologyOperationsObstructionsExtendingContinuousFunctions-1972}
for a very clear and accessible introduction to the extension problem, including further
examples and applications and covering the earlier developments until the late 1950s.


\heading{Computational homotopy theory.}
In this paper, we consider the (theoretical) \emph{computational complexity} of
homotopy-theoretic questions such as the extension problem, the homotopy
classification of maps, and the computation of homotopy groups. More precisely,
we prove hardness and undecidability results that complement recent positive
algorithmic results obtained in a series of companion papers \cite{CKMSVW11,pKZ1,polypost}.
To put our results into context, we first give more background.

By classical uncomputability results in topology (see, e.g., the survey
\cite{Soare:ComputabilityDifferentialGeometry-2004}), most of these problems are \emph{algorithmically
unsolvable} if we place no restriction on the space $Y$ 
Indeed, by a result of Adjan and of Rabin,
it is undecidable whether the fundamental group $\pi_1(Y)$ of a given
finite simplicial complex $Y$ is trivial, even if $Y$ is assumed to be $2$-dimensional.
The triviality of $\pi_1(Y)$ is equivalent to $[S^1,Y]$ having only one element,
represented by the constant map, and so $[S^1,Y]$ is uncomputable
in general. Moreover, by the Boone--Novikov theorem, it is undecidable
whether a given pointed map $f\:S^1\to Y$ is homotopic to a constant map,
and this homotopic triviality is equivalent to the extendability
of $f$ to the 2-dimensional ball $D^2$. Therefore, the extension problem
is undecidable as well.\footnote{For undecidability
results concerning numerous more loosely related topological
problems we refer to
\cite{Soare:ComputabilityDifferentialGeometry-2004,
NabutovskyWeinberger:AlgorithmicAspectsHomeomorphismProblem-1999,
NabutovskyWeinberger:AlgorithmicUnsolvabilityTrivialityProblemMultidimensionalKnots-1996} and references therein.}

In these results, the difficulty stems from the intractability of the fundamental
group of $Y$. Thus, a reasonable restriction is to assume that
$\pi_1(Y)$ is trivial (which in general cannot be tested, but in many
cases of interest it is known), or more generally, that $Y$ is
\emph{$\thedim$-connected}, meaning that $\pi_i(Y)$ is trivial for all
$i\leq\thedim$
(equivalently, every map $S^i\to Y$, $i\leq\thedim$,
can be extended to $D^{i+1}$).
A basic and important example
of a $(\thedim-1)$-connected space is the sphere~$S^\thedim$.

For a long time, the only positive result concerning the computation of $[X,Y]$ was that of Brown~\cite{Brown:FiniteComputabilityPostnikovComplexes-1957}, who showed that $[X,Y]$ is computable under the assumption that $Y$ is  $1$-connected \emph{and} that all the higher homotopy groups $\pi_k(Y)$, $2\leq k\leq\dim X$,
are \emph{finite} (the second assumption is rather strong and \emph{not} satisfied if $Y$ is
a sphere, for example). Brown also gave an algorithm that, given $k\geq 2$ and a finite $1$-connected simplicial complex $Y$, computes $\pi_k(Y)$.

In the 1990s, three independent collections of works appeared
with the goal of making various more advanced methods
of algebraic topology \emph{effective} (algorithmic):
by Sch\"on \cite{Schoen-effectivetop}, by Smith
\cite{smith-mstructures}, and by Sergeraert, Rubio, Dousson, and Romero
 (e.g., \cite{Sergeraert:ComputabilityProblemAlgebraicTopology-1994,RubioSergeraert:ConstructiveAlgebraicTopology-2002,RomeroRubioSergeraert,SergRub-homtypes};
also see \cite{SergerGenova} for an exposition).
New algorithms for computing higher homotopy groups
follow from these methods; see Real \cite{Real96} for an algorithm
based on Sergeraert et~al.

An algorithm that computes $\pi_\thedim(Y)$ for a given $1$-connected
simplicial complex $Y$ in polynomial time for every \emph{fixed}
$\thedim\ge 2$ was recently presented in \cite{polypost},
also relying on \cite{pKZ1} and on the methods of effective homology
developed earlier by Sergeraert et al.





The problem of computing $[X,Y]$ was addressed in \cite{CKMSVW11}, where
it was shown that its structure is computable assuming that
$Y$ is $(\thedim-1)$-connected and $\dim(X)\leq 2k-2$,
for some integer $\thedim\geq 2$. These assumptions are sometimes
summarized by saying that $X$ and $Y$ are \emph{in the stable range}.

As observed in \cite{polypost}, the methods of \cite{CKMSVW11} can also be
used to obtain an algorithmic solution of the extension problem.
Here $\dim X$ can even be $1$ beyond the stable
range\footnote{In the border case $\dim X=2\thedim-1$, the algorithm
just decides the existence of an extension, while
for $\dim X\leq2\thedim-2$ it also yields a
classification of \emph{all} possible extensions up to homotopy.};
thus, given finite simplicial complexes $A\subseteq X$ and $Y$
and a simplicial map $f\:A\to Y$, where $Y$ is $(\thedim-1)$-connected
and $\dim X\leq 2\thedim-1$, $\thedim\geq 2$,
it can be decided algorithmically whether $f$
can be extended to a continuous map $X\to Y$.
The algorithm again runs in polynomial time for $\thedim$ fixed, and
the same holds for the algorithm mentioned above for
computing $[X,Y]$ in the stable range.

\paragraph{New undecidability results.}
For the algorithms for homotopy classification and extendability, we have two types of assumptions: The first is that
the dimension of $X$ is suitably bounded in terms of the connectivity of $Y$ (in the stable range or at most one more).
This is essential for the algorithms to work at all.\footnote{We remark that the stable range assumption guarantees that
$[X,Y]$ has a canonical Abelian group structure, which we exploit heavily (for instance, it means that $[X,Y]$ has a finite
description even when it is an infinite set). In the special case $\pi_k(Y)\cong [S^k,Y]$, by contrast, the group structure has a different origin and is available for all dimensions $k$.}
The second assumption is that the relevant dimensional parameter $k$ is fixed, which guarantees that the algorithm runs in polynomial time.

Our main result is that for the extension problem, the first assumption
is necessary and sharp.
%

\begin{theorem}
\label{t:undecide} Let $\thedim \geq 2$ be fixed.
\begin{enumerate}
\item[\rm (a)] {\rm (Fixed target) }
There is a \emph{fixed} $(\thedim-1)$-connected finite simplicial complex $Y=Y_\thedim$ such that the following problem is algorithmically unsolvable: Given finite simplicial complexes $A\subseteq X$ with $\dim X=2\thedim$ and a simplicial map $f\:A\to Y$, decide whether there exists a continuous map $X\to Y$ extending $f$. For $\thedim$ even, we can take $Y_\thedim$ to be the sphere $S^\thedim$.
\item[\rm (b)] {\rm (Fixed source) }
There exist fixed finite simplicial complexes $A=A_\thedim$ and $X=X_\thedim$ with $A\subseteq X$ and $\dim X=2\thedim$ such that the following problem is algorithmically unsolvable: Given a $(\thedim-1)$-connected finite simplicial complex $Y$
and a simplicial map $f\:A\to Y$, decide whether there exists a continuous map $X\to Y$ extending~$f$.
\end{enumerate}
\end{theorem}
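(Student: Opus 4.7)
The plan is to reduce from Hilbert's Tenth Problem: given a Diophantine equation $p(x_1,\ldots,x_n) = 0$ (undecidable by Matiyasevich), construct an extension problem that is solvable iff $p$ has an integer solution. The topological engine is that for $(k-1)$-connected $Y$ and $\dim X \le 2k$, the only obstruction to extending $f\colon A\to Y$ to $X$ (after having already extended over the $(2k-1)$-skeleton, which the stable-range algorithms of \cite{CKMSVW11,polypost} handle) lives in $H^{2k}(X,A;\pi_{2k-1}(Y))$, and this obstruction is \emph{quadratic} in the data on the $k$-skeleton, via Whitehead products. Arithmetic can thus be encoded topologically: addition comes from the group structure on $\pi_k$, and multiplication comes from the bilinear Whitehead bracket $[{-},{-}]\colon \pi_k\times\pi_k\to \pi_{2k-1}$.

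I would first set up the basic gadget. Let $A$ be a wedge $\bigvee_i S^k$ of $k$-spheres, one for each variable $x_i$ (plus auxiliary spheres). A map $f\colon A\to Y$ is then specified by a tuple of elements $\alpha_i\in\pi_k(Y)$. Attach a single $2k$-cell $e^{2k}$ to $A$ along a word $w$ in the Whitehead algebra of $A$ representing a monomial, say $\iota_i*\iota_j$; extending $f$ across $e^{2k}$ is then possible iff $[\alpha_i,\alpha_j]=0$ in $\pi_{2k-1}(Y)$. By chaining such cells together and by prescribing $f$ on further spheres to serve as fixed constants, a Diophantine polynomial can be realized as a system of extension constraints over the integer lattice $\pi_k(Y)^n$, provided $Y$ has a sufficiently rigid quadratic structure on $\pi_{2k-1}(Y)$.

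For \textbf{part (a)} with $k$ even, take $Y=S^k$. Then $\pi_k(S^k)=\Z$ (degree) and the Hopf invariant gives a free summand $\Z\subseteq \pi_{2k-1}(S^k)$ on which $[\iota_k,\iota_k]$ has Hopf invariant $\pm 2$; so $[\alpha\iota_k,\beta\iota_k]=\alpha\beta\cdot[\iota_k,\iota_k]$, which up to a factor of $2$ detects the product $\alpha\beta$. After reducing Hilbert's Tenth Problem modulo this factor of $2$ (which is standard), a polynomial equation over $\Z$ translates into an extension problem with target $S^k$ and source of dimension $2k$. For $k$ odd, $[\iota_k,\iota_k]$ is torsion in $\pi_{2k-1}(S^k)$, so $S^k$ is not rich enough; instead one uses a fixed $Y_k$ obtained from $S^k$ by attaching cells (for instance a suitable Moore space, or $S^k$ glued to another sphere by a map engineered so that the bracket on $\pi_k(Y_k)$ has an $\Z$-valued component). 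For \textbf{part (b)}, one inverts the construction: fix a pair $(X,A)$ rich enough to carry the universal family of extension problems arising above (a large wedge of $S^k$'s with a large but uniform system of $2k$-cells, parameterized combinatorially), and let the polynomial $p$ be encoded in the target $Y$ instead. Concretely, $Y$ is built from a wedge of $k$-spheres by attaching $(k+1)$- and $(2k)$-cells that force precisely the relations corresponding to $p$ in $\pi_k$ and $\pi_{2k-1}$, while keeping $Y$ $(k-1)$-connected.

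The main obstacle is the second point above: showing that the Whitehead-product encoding really is \emph{faithful}, i.e., that an extension across the attached $2k$-cells exists if and only if the associated polynomial relation holds, not merely that one direction is immediate. This requires a careful analysis of the obstruction cocycle on a $2k$-dimensional CW pair, using that for $Y$ $(k-1)$-connected and $\dim X=2k$ the secondary Postnikov data reduce to the bracket on $\pi_k(Y)\otimes\pi_k(Y)\to \pi_{2k-1}(Y)$ composed with the attaching maps; in the fixed-target case I would use the James splitting / EHP sequence for $\Sigma\Omega S^k$ to pin down this bracket precisely, and in the fixed-source case I would build $Y$ as a two-stage Postnikov system whose $k$-invariant encodes exactly the polynomial $p$.
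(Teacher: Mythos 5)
Your high-level strategy (reduce from Hilbert's tenth problem, encode unknowns as degrees in $\pi_k$, get quadratic terms from bilinearity of the Whitehead product) is indeed the paper's strategy, but as written the proposal has several genuine gaps. First, your basic gadget has the quantifiers backwards: you put the variable $k$-spheres inside $A$ and let $f$ ``specify a tuple $\alpha_i\in\pi_k(Y)$'', but $f$ is part of the \emph{input}; the integers encoding a Diophantine solution must be free choices of the \emph{extension} (in the paper they are the degrees of the restriction of the candidate extension to the $k$-spheres of $X\simeq S^k_1\vee\cdots\vee S^k_r$, while $A$ is a wedge of $(2k-1)$-spheres on which $f$ encodes the right-hand sides $b_q$). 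Second, the step you yourself flag as ``the main obstacle''---faithfulness of the encoding---is exactly the mathematical core and is not carried out; moreover the machinery you propose (obstruction cocycles, EHP/James splitting, two-stage Postnikov systems) is unnecessary: the paper gets faithfulness elementarily from Hilton's theorem for $\pi_{2k-1}$ of a wedge of spheres together with the Hopf invariant (infinite order of $[\iota,\iota]$ for $k$ even, of $[\iota_1,\iota_2]$ in general). Third, you ignore that the topology only realizes \emph{restricted} quadratic systems: with $Y=S^k\vee S^k$ for $k$ odd the equations that arise are skew-symmetric bilinear, and for $k$ even (using products $[\nu_i,\nu_j]$, $i<j$) they have no square or linear terms; one must separately prove that solvability of these restricted systems (the paper's Q-SYM and Q-SKEW) is still undecidable, which is a nontrivial arithmetic lemma, not a routine ``reduction modulo a factor of $2$''.

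For part (b) the proposal would not work as stated: a ``large wedge of $S^k$'s with a large but uniform system of $2k$-cells, parameterized combinatorially'' cannot be a \emph{fixed} finite pair, since the Diophantine systems have unbounded numbers of variables and equations. The paper's fixed source is tiny: $A=S^{2k-1}$ and $X\simeq S^k$ (resp.\ $S^k\vee S^k$ for $k$ odd), with the inclusion realizing $[\iota,\iota]$ (resp.\ $[\iota_1,\iota_2]$); the entire system is pushed into $Y$, built from a wedge of $r$ $k$-spheres and $s$ $(2k-1)$-spheres by attaching $2k$-cells along $[\nu_i,\nu_j]-\sum_q a^{(q)}_{ij}\mu_q$ (and $[\nu_i,\nu_i]$), and into $f$, which encodes the $b_q$; the single homotopy equation $h_*[\iota,\iota]=[f]$ then unpacks into the whole system because the classes $\mu'_q$ remain integrally independent in $\pi_{2k-1}(Y)$ (again Hilton plus the computation of $\pi_{2k-1}$ after cell attachment). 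Your suggestion of attaching $(k+1)$-cells would moreover alter $\pi_k(Y)$ and risks destroying the needed free structure. Finally, the theorem is about finite simplicial complexes and simplicial maps produced algorithmically from the equations; your constructions are cellular with attaching maps given only up to homotopy, and the conversion to simplicial data (the paper does this via simplicial approximation, generalized mapping cylinders, and the $B_*\circ\sd$ subdivision) is an essential part of the proof that the proposal omits entirely.
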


The theorem is stated
in terms of simplicial complexes since these are a standard input
model for topological spaces in computational topology that we assume may be most familiar to most readers.
For the purposes of our reductions, we actually work with \emph{simplicial sets} (see Section~\ref{s:simplicial-sets}),
which offer a more flexible, but still purely combinatorial, way of representing topological spaces. The simplicial sets
are then converted into simplicial complexes by a suitable subdivision.

When constructing $A$, $X$ and $Y$ as simplicial sets, we can furthermore ensure that $Y$ has a certain additional property,
namely that it is \emph{$(\thedim-1)$-reduced}, which provides an immediate certificate that $Y$
is $(\thedim-1)$-connected; this is proved in \ref{a:reduced}. Thus, in particular, the difficulty of the extension problem does not lie in verifying the $(\thedim-1)$-connectedness of $Y$.

While most of the previous undecidability results in topology
rely on the word problem in groups and its relatives,
our proof of Theorem~\ref{t:undecide} relies on undecidability
of \emph{Hilbert's tenth problem}, which is the solvability of
a system of polynomial Diophantine equations, i.e., the existence
of an integral solution of a system of the form
\begin{equation}\label{e:dioph}
p_i(x_1,\ldots,x_r)=0, \ \ \ i=1,2,\ldots,s,
\end{equation}
where $p_1,\ldots,p_s$ are $r$-variate polynomials with integer
coefficients. This problem is undecidable by a celebrated
result of Matiyasevich \cite{Matiyasevich-Diophantineness-1970},
building on earlier work by Davis, Putnam, and Robinson;
also see \cite{Matiyasevich:Hilbert-1993,Mazur:SurveyUndecidabilityNumberTheory-1994} for additional background and further references.

\heading{On the hardness of computing $[X,Y]$.} When $\dim X=2k$ and $Y$
is $(k-1)$-connected, we can no longer equip $[X,Y]$  with the group
structure of the stable range.
Thus, it is not clear in what sense the
the potentially infinite set $[X,Y]$ could be computed in general.
A natural computational problem in this setting is to decide whether
$|[X,Y]|>1$; in other words, whether there is a homotopically nontrivial map
$X\to Y$ for given simplicial complexes $X$ and $Y$ as above.

We can prove that this problem is NP-hard for every \emph{even} $k\ge 2$;
in order to keep this paper reasonably concise, the proof is
to be presented in the PhD. thesis of the second author.
The reduction is very similar
to that of Theorem~\ref{t:undecide}. We can show that the problem is
at least as hard as deciding the existence of a nonzero integral solution
of the quadratic system \eqref{e:qsym} defined in
Section~\ref{s:undecide-diophantine} below with all the constant terms
$b_q$ equal to zero.
This problem may well be undecidable, but as far as we know,
the best known lower bound is that of NP-hardness.

\heading{\#P-hardness.} Our second result concerns the problem of computing the higher homotopy groups
$\pi_\then(Y)\cong [S^\then,Y]$ for a simply connected space $Y$, if $\then$ is not considered fixed but
part of the input ($\then$ is given in unary encoding).\footnote{Note that with a unary encoding of $\then$, the size of input is
significantly larger than with a binary (or decimal encoding), and hence the hardness
result is correspondingly stronger.} Anick \cite{Anick-homotopyhard} proved that this problem is \#P-hard,\footnote{Somewhat
informally, the class of  \#P-hard problems consists of computational
problems that should return a natural number
(as opposed to YES/NO problems) and are at least as hard as
counting the number of all Hamiltonian cycles in a given graph,
or counting the number of subsets with zero sum for a given set of integers,
etc. These problems are clearly at least as hard as NP-complete
problems, and most likely even less tractable.}
where $Y$ can even be assumed to be a $4$-dimensional space.%
\footnote{Actually, the hardness already applies to the potentially easier
problem of computing the \emph{rational homotopy groups}
$\pi_\then(Y)\otimes \Q$; practically speaking, one asks
only for the rank of $\pi_\then(Y)$,
i.e., the number of direct summands isomorphic to $\Z$.}

However, Anick's hardness result has the following caveat: it assumes that the input space $Y$ is given in a very concise form, as a
\emph{cell complex} with the  degrees of the \emph{attaching maps}
encoded in binary (see Section~\ref{s:anick-complexes} for a review
of the construction). A straightforward way of converting this cell complex
to a simplicial complex yields a $4$-dimensional simplicial complex with an exponential number of simplices, which renders the hardness result meaningless for simplicial complexes.
In Section~\ref{s:proofs},  we provide a different way of converting
Anick's concise encoding of the input space $Y$ into a homotopy
equivalent\footnote{Spaces $X$ and $Y$ are homotopy equivalent if
there are maps $f\:X\to Y$ and $g\:Y\to X$ such that the compositions $fg$
and $gf$ are homotopic to
identities. From the point of view of homotopy theory, such
$X$ and $Y$ are indistinguishable and, in particular,
$\pi_k(X)=\pi_k(Y)$ for all $k\geq 0$.}
simplicial complex that can be constructed in polynomial time, and
in particular, has only polynomially many simplices.
This yields the following result:

\begin{theorem}\label{t:sharpP} It is \#P-hard to compute
the rank of $\pi_\then(Y)$ (i.e. the number of summands of
$\pi_\then(Y)$ isomorphic to $\Z$) for a given number $\then\in \N$ (encoded
in \emph{unary}) and a given simply connected $4$-dimensional simplicial complex $Y$.
\end{theorem}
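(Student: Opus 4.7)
The plan is to start from Anick's hardness result and to produce, from Anick's concise cell-complex encoding of the input space $Y^{\mathrm{cell}}$, a simplicial complex $Y$ that is homotopy equivalent (hence has the same homotopy groups) but has only polynomially many simplices in the bit-length of the encoding. Since $Y$ remains of dimension $4$ and simply connected, and since $\pi_n$ is a homotopy invariant, the \#P-hardness for $Y^{\mathrm{cell}}$ transfers verbatim to the simplicial setting, giving the theorem.

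The main obstacle is that Anick's attaching maps are maps from $S^3$ to a wedge of $S^3$'s whose degree components are integers $d$ written in \emph{binary}, and hence possibly of exponential magnitude. A naive simplicial realization of a degree-$d$ self-map of $S^3$ costs $\Omega(d)$ simplices, which would wipe out the hardness. To get around this, for each such high-degree attaching map I would build a small ``doubling gadget.'' Writing $d=\sum_{i=0}^{\ell}b_i2^i$ with $\ell=O(\log d)$, the gadget is a telescope of $O(\log d)$ copies of mapping cylinders $\rcyl$ of degree-$2$ simplicial self-maps of $S^3$, suitably glued according to the binary digits $b_i$, so that the composed attaching map has total degree $d$ up to homotopy. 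Each elementary piece (sphere, cylinder, degree-$2$ map) admits a fixed-size simplicial model, so the overall complex has size polynomial in the input bit-length. The degree computation reduces to the fact that degree is additive under wedge and multiplicative under composition for self-maps of $S^3$.

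Given the gadget, the remaining steps are assembly and verification. First, replace each $4$-cell of $Y^{\mathrm{cell}}$ by the mapping cone of the corresponding gadget, keeping lower-dimensional cells intact; this preserves dimension $4$ because every component (spheres, mapping cylinders and cones of $3$-spheres) lives in dimension $\le 4$. Second, triangulate the result by standard means (barycentric or prismatic subdivision of the cylinders and cones), all of which increase the number of simplices by only a constant factor. Third, verify that the resulting $Y$ is simply connected --- this follows from $Y^{\mathrm{cell}}$ being simply connected and the chosen gadgets being themselves simply connected $3$-connected pieces --- and that the transformation runs in polynomial time in the bit-length of Anick's encoding.

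The hard part is genuinely the doubling gadget in the second paragraph: one must argue carefully that the mapping telescope, after collapsing each intermediate $S^3$, is homotopy equivalent to the original mapping cone $S^3\cup_{d}D^{4}$ and not merely has the same homology. This can be done either by an explicit chain of elementary collapses, or by invoking the fact that mapping cylinders of composable maps glue to the mapping cylinder of the composition up to homotopy equivalence. Once this equivalence is established, Theorem~\ref{t:sharpP} follows from Anick's reduction without further work.
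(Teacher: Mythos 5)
Your overall strategy---realize the exponentially large coefficients by a chain of $O(\log d)$ mapping cylinders of a fixed degree-$2$ simplicial map of $S^3$, glue according to the binary expansion, collapse to simulate the cell attachment, and triangulate at the end---is exactly the route the paper takes (Lemma~\ref{lem:poly-size-power-2-attachment}, Lemma~\ref{lem:cylinder-composition}, Proposition~\ref{p:poly-gmc}, Proposition~\ref{p:gmc_cell_attach}, and the passage to $B_*\sd$). However, there is a genuine gap caused by your description of the input. Anick's $4$-cells are attached to a wedge $W=S^2_1\vee\cdots\vee S^2_r$ of \emph{two}-spheres, and the attaching class of the $\theeqn$th cell is $\varphi_\theeqn=\sum_i a^{(\theeqn)}_{ii}\,\iota_i\circ\eta+\sum_{i<j}a^{(\theeqn)}_{ij}[\iota_i,\iota_j]\in\pi_3(W)$ (Corollary~\ref{c:twospheres}); it is \emph{not} a map into a wedge of $3$-spheres classified by ``degree components''. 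A doubling gadget that only manipulates degrees of self-maps of $S^3$ therefore does not by itself produce a simplicial model of $\varphi_\theeqn$. You additionally need fixed, hard-wired simplicial representatives of the Hopf map and of the Whitehead product as maps $\sdit(\Sigma^3)\to\Sigma^2$ and $\sdit(\Sigma^3)\to\Sigma^2\vee\Sigma^2$, the observation that precomposing such a representative with your degree-$c$ gadget realizes $c$ times its homotopy class (Lemma~\ref{lem:scalar-multiplication-degree}), and a polynomial-size device for \emph{adding} the many distinct terms of $\varphi_\theeqn$ (the sum gadget $\chsimp{3}{\thelen}$ together with the folding-map cylinder; Lemmas~\ref{lem:poly-size-wedge} and~\ref{l:poly-size-sum}). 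None of this appears in your sketch, and it is where a substantial part of the work lies; as written, your construction would produce $4$-cells attached to a wedge of $3$-spheres by degree maps, which is not Anick's space and to which his theorem does not apply.

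A second, more local error: the statement that ``the mapping telescope, after collapsing each intermediate $S^3$, is homotopy equivalent to the original mapping cone'' is false. Collapsing an intermediate sphere splits the telescope into a wedge of a contractible piece and the mapping cone of a degree-$2$ map (already its $H_3$ is $\Z/2$ rather than $\Z/2^\ell$). The correct operation is to collapse only the \emph{top} rim of the composed cylinder, after first establishing that the concatenation of cylinders is a generalized mapping cylinder for the composite map---which is your alternative remark, made precise in Lemma~\ref{lem:cylinder-composition}---and then proving that collapsing the upper rim yields the mapping cone up to homotopy equivalence (Proposition~\ref{p:gmc_cell_attach}). With these two repairs (the Hopf/Whitehead representatives plus sum gadget, and the top-rim collapse) your argument coincides with the paper's proof.
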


\heading{Outline of the proof of Theorem~\ref{t:undecide}. }
We aim to present our results in a way that makes the statements of
the results and the main steps and ideas accessible while assuming
only a moderate knowledge of topology on the side of the reader.

We thus review a number of basic topological concepts and provide
proofs for various assertions and facts that may be rather elementary
for topologists. On the other hand, some of the proofs assume a slightly
stronger topological background, since reviewing every single notion and
fact would make the paper too lengthy.

For proving Theorem~\ref{t:undecide},
we present an algorithm that converts a given system of Diophantine equations
into an instance of the extension problem; i.e., it constructs simplicial
complexes $A$, $X$ and $Y$ and a map $f\:A\ra Y$ such that there is an extension
of $f$ to all of $X$ iff the given system of equations is solvable.
Moreover, as stated in the theorem, there are actually
two versions of the reduction: The first uses a fixed target
space $Y=Y_\thedim$ and encodes the equations into $A$, $X$, and $f$.
The second uses a fixed pair $(X_\thedim,A_\thedim)$
of source complexes and encodes the equations into $f$ and~$Y$.

We will actually work only with \emph{quadratic} Diophantine equations
of a slightly special form
(which is sufficient; see Section~\ref{s:undecide-diophantine}).
The unknowns are represented by the degrees of restrictions of
the desired extension $\bar f$ to suitable $\thedim$-dimensional
spheres. The quadratic terms in the equations are obtained
using the \emph{Whitehead product}, which is a binary
operation that, for a space $Z$, assigns to
elements $\alpha\in\pi_k(Z)$ and $\beta\in\pi_\ell(Z)$
an element $[\alpha,\beta]\in\pi_{k+\ell-1}(Z)$;
see Section~\ref{s:Whitehead}.

Here is a rough outline of the proof strategy. First we focus on
Theorem~\ref{t:undecide}~(a) (fixed target) with $\thedim$ even,
which is the simplest among our constructions.
\begin{itemize}
\item The spaces $X$ and $A$ are simplest to describe
 as  cell complexes.
The subcomplex $A$ is a union of $r$ spheres
$S^{2\thedim-1}$, which intersect only at a single common point.
This union is called a wedge sum and denoted by
$A=S^{2\thedim-1}\vee\cdots\vee S^{2\thedim-1}$.
The space $X$ is homotopy equivalent to another wedge sum,
of $s$ spheres $S^\thedim$; i.e.,
 $X\simeq S^\thedim\vee\cdots\vee S^\thedim$.

\item The fixed $(\thedim-1)$-connected target space $Y$ is
 the $\thedim$-sphere~$S^\thedim$.

\item Maps $X\to S^\thedim$ can be described completely by their restrictions
to the $\thedim$-spheres in the wedge sum.
Each such restriction is characterized, uniquely up to homotopy,
by its degree---this can be an arbitrary integer.
Thus, a potential extension $\bar f$ can be encoded into a vector
$\xx=(x_1,\ldots,x_r)$ of integers.

\item Similarly, the map $f\:A\to S^\thedim$ can be described by its restrictions to the $(2\thedim-1)$-spheres in the wedge sum.
Crucially for our construction, the homotopy group $\pi_{2\thedim-1}(S^\thedim)$ has an element of infinite order, namely, the Whitehead square
$[\iota,\iota]$, where $\iota$ is the identity $S^\thedim\to S^\thedim$
(we still assume $\thedim$ even). We will work with maps $f$
whose restriction to the $\theeqn$th sphere is (homotopic to)
an integral multiple $b_\theeqn[\iota,\iota]$, for some (unique) integer
$b_\theeqn$.
Thus, $f$ is specified by the vector $\bb=(b_1,\ldots,b_s)$ of
these integers.

\item Given arbitrary integers $a_{ij}^{(\theeqn)}$, $1\le i<j\le r$,
$\theeqn=1,2,\ldots,s$, we construct the pair $(X,A)$ in such a way that,
taking $\bar f\:X\to Y$ specified by $\xx$ as above,
the restriction of $\bar f$ to
the $q$th sphere of $A$ is homotopic to
$\sum_{i< j} a_{ij}^{(\theeqn)}x_ix_j[\iota,\iota]$
(here the addition and multiplication by integers are
performed in $\pi_{2\thedim-1}(S^\thedim)$).
Since $[\iota,\iota]$ is an element of
$\pi_{2\thedim-1}(S^\thedim)$ of infinite order,
$\bar f$ is an extension of $f$ iff
 $\sum_{i< j} a_{ij}^{(\theeqn)}x_ix_j=b_\theeqn$ for
all $\theeqn=1,2,\ldots,s$.

\item In this way, we can simulate an arbitrary system of quadratic equations
by an extension problem. Some more work is still needed
to describe $X$ and $A$ as finite simplicial complexes and $f$
as a simplicial map.

\item For Theorem~\ref{t:undecide}~(a) with $\thedim$ odd,
the Whitehead square $[\iota,\iota]$ as above no longer has infinite
order. Instead, we use $Y=S^\thedim\vee S^\thedim$ and
replace $[\iota,\iota]$ by the Whitehead product
$[\iota_1,\iota_2]$ of the inclusions of the two spheres
into~$Y$. This leads to skew-symmetric systems of quadratic equations,
and showing that these are still undecidable needs some work
(see~Section~\ref{s:undecide-diophantine}).
\end{itemize}

In Theorem~\ref{t:undecide}~(b) with $\thedim$ even, the (fixed) source
space $X$ is homotopy equivalent to $S^\thedim$ and $A=S^{2\thedim-1}$.
Under the homotopy equivalence $X\simeq S^\thedim$, the inclusion $A\hookrightarrow X$
becomes the Whitehead
square $[\iota,\iota]$, and for $\thedim$ odd it is replaced by
$[\iota_1,\iota_2]$. In both cases, the system of quadratic equations is
encoded into the structure of the cell complex $Y$ and the map $f\:A\to Y$.

\section{Diophantine equations and undecidability}
\label{s:undecide-diophantine}




We will need to work with quadratic Diophantine equations of two special forms:

{\renewcommand\theequation{Q-SYM}
\begin{equation}\label{e:qsym}
\sum_{1\leq i< j\leq r} a_{ij}^{(\theeqn)}x_i x_j=b_\theeqn,\ \ \ \theeqn=1,2,\ldots,s,
\end{equation}}%
where $a_{ij}^{(\theeqn)},b_\theeqn\in\Z$ and $x_1,\ldots,x_r$ are the
unknowns (i.e., the left-hand sides are quadratic
forms with no square terms), and {\renewcommand\theequation{Q-SKEW}
\begin{equation}\label{e:qskew}
\sum_{1\leq i<j\leq r} a_{ij}^{(\theeqn)} (x_iy_j-x_jy_i)= b_\theeqn,\ \ \ \theeqn=1,2,\ldots,s,
\end{equation}}%
with  $a_{ij}^{(\theeqn)},b_\theeqn\in\Z$ and unknowns
$x_1,\ldots,x_r$, $y_1,\ldots,y_r$ (so here we deal
with skew-symmetric bilinear forms).


\begin{lemma} The solvability
of  the system {\rm \eqref{e:qsym}}, as well as that of {\rm
\eqref{e:qskew}}, in the integers are algorithmically
undecidable.
\end{lemma}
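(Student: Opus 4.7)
My plan is to derive the undecidability of both \eqref{e:qsym} and \eqref{e:qskew} by reducing from Hilbert's tenth problem, which is undecidable by Matiyasevich's theorem already mentioned in the introduction. I first reduce an arbitrary polynomial Diophantine system to \eqref{e:qsym}, and then reduce \eqref{e:qsym} to \eqref{e:qskew}.

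\textbf{Undecidability of \eqref{e:qsym}.} By the standard trick of introducing an auxiliary variable for every product of two existing ones, any polynomial Diophantine system over $\Z$ can be rewritten as a system of quadratic equations of the shape $\sum_{i\le j}\alpha_{ij}x_ix_j+\sum_i c_ix_i+c_0=0$. To match the format of \eqref{e:qsym}, I must eliminate the square terms $x_i^2$ and the linear terms $c_ix_i$. For each $x_i$, I introduce a ``copy'' $x_i'$ and replace $x_i^2$ by $x_ix_i'$. I also introduce two fresh variables $u,v$ together with the \eqref{e:qsym}-equation $uv=1$; this forces $u\in\{-1,1\}$ and $v=u$. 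Every linear term $c_ix_i$ is then rewritten as $c_i\,ux_i$, and the equality $x_i'=x_i$ is enforced by the \eqref{e:qsym}-equation $ux_i'-ux_i=0$, valid since $u\ne 0$. The resulting system has a solution whenever the original one does (take $u=v=1$, $x_i'=x_i$). Conversely, any solution satisfies $x_i'=x_i$ and either $u=1$, recovering the original equation, or $u=-1$, in which case the substitution $y_i:=-x_i$ turns the rewritten equation back into the original one.

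\textbf{Reducing \eqref{e:qsym} to \eqref{e:qskew}.} Representing each \eqref{e:qsym}-unknown $t_i$ by a pair $(x_i,y_i)$ of \eqref{e:qskew}-unknowns, the expression $x_iy_j-x_jy_i$ is a $2\times 2$ determinant, so \eqref{e:qskew} is invariant under the diagonal action of $SL_2(\Z)$ on the pairs. I add two auxiliary pairs $(x_{-1},y_{-1})$, $(x_0,y_0)$ together with the single equation $x_{-1}y_0-x_0y_{-1}=-1$, which says that they form a $\Z$-basis of $\Z^2$ of determinant $-1$; an appropriate $SL_2(\Z)$-transformation then lets me assume $(x_{-1},y_{-1})=(0,1)$ and $(x_0,y_0)=(1,0)$. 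Under this normalization $x_0y_i-x_iy_0=y_i$ and $x_{-1}y_i-x_iy_{-1}=-x_i$, so the linking \eqref{e:qskew}-equations $(x_0y_i-x_iy_0)+i(x_{-1}y_i-x_iy_{-1})=0$ for $i=1,\ldots,r$ force $y_i=ix_i$, yielding $x_iy_j-x_jy_i=(j-i)x_ix_j$. Multiplying a given \eqref{e:qsym}-equation through by $K=\mathrm{lcm}\{j-i:1\le i<j\le r\}$ to clear denominators, the corresponding \eqref{e:qskew}-equation $\sum_{i<j}(Ka_{ij}^{(q)}/(j-i))(x_iy_j-x_jy_i)=Kb_q$ becomes equivalent to it under the above normalization.

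\textbf{Main obstacle.} Both reductions require simulating linear conditions inside a purely quadratic (and, in the second case, skew-symmetric bilinear) language. In the first reduction, this is the $uv=1$ trick; the subtlety is that $u$ is only pinned down up to a sign, so one must verify that the unwanted branch $u=-1$ still encodes a solution of the original system, which it does via the global sign flip $x_i\mapsto-x_i$. In the second reduction, it is the $SL_2(\Z)$-normalization step: the key point is that the single ``basis equation'' $x_{-1}y_0-x_0y_{-1}=-1$, combined with the diagonal $SL_2(\Z)$-invariance of \eqref{e:qskew}, really is enough to reduce to the standard basis $(0,1),(1,0)$, after which the linking equations behave as required. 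Once these two points are in place, the verification of both directions of each reduction is routine.
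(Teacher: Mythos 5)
Your proposal is correct, and its overall architecture (Hilbert's tenth problem $\to$ general quadratic systems $\to$ \eqref{e:qsym} $\to$ \eqref{e:qskew}) is the same as the paper's; the first reduction is essentially the paper's gadget in different notation (your $uv=1$ with $u,v$ playing the role of the paper's $x_0,x_0'$, the same sign-flip bijection disposing of the branch $u=-1$), and the key point of the second reduction --- normalizing an auxiliary pair by a determinant-one integral change of basis under which all expressions $x_iy_j-x_jy_i$ are invariant --- is also exactly the paper's trick. Where you genuinely diverge is in how \eqref{e:qsym}-products are encoded inside \eqref{e:qskew}: the paper introduces, for each variable, a primed copy and replaces $x_ix_j$ by $x_iy_j'-x_j'y_i$, using three linking equations per index so that after normalization $y_i=0$, $x_i'=0$, $y_i'=x_i$ and the coefficients $a_{ij}^{(\theeqn)}$ carry over unchanged; you instead use a single pair $(x_i,y_i)$ per variable, one linking equation forcing $y_i=ix_i$, so that $x_iy_j-x_jy_i=(j-i)x_ix_j$, and you absorb the unwanted factor $(j-i)$ by multiplying each equation by $K=\mathrm{lcm}\{j-i\}$ and dividing the coefficients accordingly. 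Your encoding is more economical in variables and linking equations, at the price of index-dependent coefficients and the rescaling step (note $K$ can be exponentially large in $r$, which is harmless for undecidability but would matter if one cared about sizes of coefficients); the paper's encoding keeps the coefficients of the original system untouched. Both verifications (forward assignment, and the backward direction via the unimodular transformation, whose determinant is indeed $+1$ given your choice of right-hand side $-1$ and target basis) go through as you indicate.
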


\begin{proof} First, it is well known and easy to see that the solvability
of a general \emph{quadratic} system of Diophantine equations
is no easier than the solvability of an arbitrary Diophantine system
(\ref{e:dioph}), and thus undecidable.\footnote{
The idea is to represent higher-degree monomials in the
general system using new variables; e.g., for the monomial $x^3y$
we can introduce new variables $t_1,t_2$, new quadratic equations $t_1=x^2$
and $t_2=xy$, and replace $x^3y$ by $t_1t_2$.}

First we show undecidability for \eqref{e:qsym}; this system
differs from a general quadratic system
only by the lack of linear terms and squares.
Given a general quadratic system
\begin{equation}\label{e:GQ}
\sum_{1\leq i,j\leq r} a_{ij}^{(\theeqn)} x_ix_j+\sum_{1\leq i\leq r} b_{i}^{(\theeqn)}x_i= c_\theeqn,\ \ \ \theeqn=1,\ldots,s,
\end{equation}
we add new variables $x_0$, $x_0'$ and $x_1',\ldots,x_r'$,
and we replace the terms $x_ix_j$ with $x_ix_j'$ and $x_i$ with $x_ix_0'$.
We also add the following equations
\[x_0x_0'=1;\ \ \ x_ix_0'-x_0x_i'=0,\ \ \ i=1,\ldots,s.\]
The resulting system is of the form \eqref{e:qsym}
(assuming an indexing of the variables such that the $x_i$ precede the $x'_i$)
 and it forces $x_0=x_0'=\pm1$, and $x_i=x_i'$.
Thus, each of its solutions corresponds
either to a solution of the original system
\eqref{e:GQ}
(when $x_0=x_0'=1$), or to a solution of
the system obtained from \eqref{e:GQ} by changing the sign of all
the linear terms (when $x_0=x_0'=-1$). Since
 there is an obvious bijection $x_i\mapsto -x_i$ between the solutions of
\eqref{e:GQ} and those of the system with negated linear terms,
 the solvability of the constructed system \eqref{e:qsym} is equivalent
to the solvability of~\eqref{e:GQ}.
\medskip

Next, we show that \eqref{e:qskew} is no easier than \eqref{e:qsym}.
Given a general system \eqref{e:qsym},
 we add new variables $x_0$, $y_0$, $x_0'$, $y_0'$ and, for each $i=1,\ldots,r$, also $x_i'$, $y_i$ and $y_i'$. We replace
each term $x_ix_j$ in the original system \eqref{e:qsym} by the antisymmetric
expression $x_iy_j'-x_j'y_i$, and we add the following equations
(for $i=1,2,\ldots,r$):
\[x_0y_0'-x_0'y_0=1,\ \ \ x_0y_i-x_iy_0=0,\ \ \ x_0'y_i'-x_i'y_0'=0,\ \ \ (x_0y_i'-x_i'y_0)-(x_iy_0'-x_0'y_i)=0.\]
This gives a system of the form \eqref{e:qskew}, which we call
the \emph{new system}.

It is clear that each solution of \eqref{e:qsym} yields a solution of
the new system. Conversely, supposing that the new system has
a solution, we claim that it also has a solution with
$x_0=y_0'=1$ and $y_0=x_0'=0$. Once we have a solution satisfying
these additional conditions, it is easy to check that $x_1,\ldots,x_r$
form a solution of the original system.

To verify the claim, for notational convenience, let us index the $x$ and $y$
variables in the new system by the set $I=\{0,1,\ldots,r,0',1',\ldots,r'\}$,
where $x_{i'}=x'_i$ and $y_{i'}=y'_i$. We suppose that
$(x_i,y_i:i\in I)$ form a solution of the new system.
Since $x_0y_0'-x_0'y_0=1$, the $2\times 2$ matrix
$\left(\begin{smallmatrix} x_0 & x_0' \\ y_0 & y_0' \end{smallmatrix}\right)$
has determinant $1$ and thus an integral inverse matrix, which we denote by~$T$.

Let us define new values $(\bar x_i,\bar y_i:i\in I)$ by
$\left(\begin{smallmatrix} \bar x_i \\ \bar y_i \end{smallmatrix}\right)=
T\cdot\left(\begin{smallmatrix} x_i \\ y_i \end{smallmatrix}\right)$, $i\in I$.
We have $\bar x_0=\bar y_0'=1$ and $\bar y_0=\bar x_0'=0$,
and it remains to show that the $\bar x_i$ and $\bar y_i$ satisfy the new
system. This is because, for every $i,j\in I$, we have
\begin{eqnarray*}
\bar x_i\bar y_j-\bar x_j\bar y_i&=&
\det\begin{pmatrix} \bar x_i & \bar x_j \\ \bar y_i & \bar y_j \end{pmatrix}
=\det\left(T\cdot\begin{pmatrix} x_i & x_j \\ y_i & y_j \end{pmatrix}\right)\\
&=&\det T\cdot\det\begin{pmatrix} x_i & x_j \\ y_i & y_j \end{pmatrix}
=x_iy_j-x_jy_i.
\end{eqnarray*}
\end{proof}

\section{Cell complexes and simplicial sets}
\label{s:simplicialCW}

This section and the next one mostly present known material from topology;
in several cases we need to adapt results from the literature to our needs,
which is sometimes best done by re-proving them. Readers may want to skim
these two sections quickly and return to them later when needed.

Here we review two basic ways
of building topological spaces from simple pieces:
\emph{cell complexes} and \emph{simplicial sets}.
Cell complexes, also known as \emph{CW complexes}, are fairly standard
in topology, and we will use them for a simple description of the
various spaces in our proofs. Simplicial sets are
perhaps less well known, and for us, they will mainly be
a convenient device for converting cell complexes into simplicial
complexes. Moreover, they are of crucial importance in the algorithmic
results mentioned in the introduction.
For a thorough discussion of simplicial complexes, simplicial sets,
cell complexes, and the connections between the three,
we refer to \cite{FritschPiccinini:CellularStructures-1990}.

\subsection{Cell complexes}
\label{s:cell-complexes}

In the case of cell complexes, the building blocks are topological disks of various dimensions,
called \emph{cells}, which can be thought of as being completely ``flexible'' and which can be
glued together in an almost arbitrary continuous fashion. Essentially the only condition is that each $n$-dimensional
cell has to be attached along its boundary to the \emph{$(n-1)$-skeleton} of the space, i.e., to the part
that has already been built, inductively, from lower-dimensional cells.
The formal definition is as follows.

We recall that if $X$ and $Y$ are topological spaces and if $f\:A\rightarrow Y$ is a map defined on a subspace $A\subseteq X$, then the space $X\cup_f Y$ obtained by \emph{attaching $X$ to $Y$ via $f$} is defined as
the quotient of the disjoint union $X\sqcup Y$ under the equivalence relation generated by the identifications $a\sim f(a)$, $a\in A$.

A \emph{closed} or \emph{open} $n$-\emph{cell} is a space homeomorphic to the
closed $n$-dimensional unit disk $D^n$ in $n$-dimensional Euclidean space or its
interior $\mathring{D}^n$, respectively; a point is regarded as both a closed and an open
$0$-cell.

An \emph{$m$-dimensional cell complex}\footnote{Cell complexes can be also infinite-dimensional, in which case some care has to be taken in defining their topology, but we will deal with cell complexes made of finitely many cells, and thus finite-dimensional.} $X$ is the last term of an inductively constructed sequence of spaces $\skel{0}{X}\subseteq \skel{1}{X}\subseteq \skel{2}{X}\subseteq \ldots\subseteq \skel{m}{X}=X$, called the \emph{skeleta} of $X$:

\begin{enumerate}
\item $X^{(0)}$ is a discrete set of points (possibly infinite) that are regarded as $0$-cells.
\item Inductively, the $n$-skeleton $\skel{n}{X}$ is formed by attaching closed
$n$-cells $D^n_i$ (where $i$ ranges over some arbitrary index set) to $\skel{n-1}{X}$ via
\emph{attaching maps} $\varphi_i\:S^{n-1}_i=\partial D^n \to \skel{n-1}{X}$.
Formally, we can consider all attaching maps together as defining a map $\varphi=\sqcup_i\varphi_i$
from the disjoint union $\bigsqcup_i S_i^{n-1}$ to $\skel{n-1}{X}$ and form
$\skel{n}{X}=\big(\bigsqcup_iD_i^n\big) \cup_\varphi \skel{n-1}{X}$.
%
%
\end{enumerate}

For every closed cell $D_i^n$, one has a \emph{characteristic map}\footnote{The composition of the inclusion $D_i^n \hookrightarrow \big(\bigsqcup_iD_i^n\big)\sqcup \skel{n-1}{X}$ with the quotient map.} $\Phi_i\:D_i^n \to \skel{n}{X}\subseteq X$, which restricts to an embedding on the interior $\mathring{D}_i^n$. The image $\Phi_i(\mathring{D}_i^n)$
is commonly denoted by $e_i^n$, and it follows from the construction that every point of $X$ is contained in a unique open cell (note that these are in general not open subsets of $X$, however).

As a basic example, the $n$-sphere is a cell complex with one $n$-cell and one $0$-cell, obtained by attaching $D^n$ to a point $e^0$ via the constant map that maps all of $S^{n-1}$ to $e_0$.


\heading{Subcomplexes
.}
A \emph{subcomplex} $A\subseteq X$ is a subspace that is closed and a union of open cells of $X$.
In particular, for each cell in $A$, the image of its attachment map is contained in $A$, so $A$ is itself
a cell complex (and its cell complex topology agrees with the subspace topology inherited from $X$).


\heading{The homotopy extension property.} An important fact is
that cell complexes
 have the so-called \emph{homotopy extension property}:
Suppose that $X$ is a cell complex and that $A\subseteq X$ is a subcomplex.
If we are given a map $f_0\:A\to Y$ into a some space $Y$, an extension
$\bar{f}_0\:X\to Y$ of $f_0$ and a homotopy $H\:A\times [0,1]$ between $f_0$
and some other map $f_1\:A\to Y$, then $H$ can be extended to a homotopy
$\bar{H}\:X\times [0,1]\to Y$ between $\bar{f}_0$ and some extension
$\bar{f}_1\:X\to Y$ of $f_1$. Here is an immediate consequence:

\begin{corol}\label{c:ext} For a cell complex $X$, subcomplex
$A\subseteq X$, and a space $Y$,
the extendability of a map $f\:A\to Y$ to $X$ depends only  on the homotopy class of $f$
in $[A,Y]$.
Moreover, the map $f\:A\to Y$ has an extension $\bar{f}\:X\to Y$ iff there exists a map $g\:X\to Y$ such that the diagram
$$
\xymatrix{
A\ar[r]^f  \ar[d]_i & Y\\
X\ar[ru]_g &
}$$
commutes up to homotopy, i.e., $gi\sim f$.
\end{corol}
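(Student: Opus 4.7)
The plan is to derive both claims directly from the homotopy extension property (HEP) stated immediately before the corollary. Nothing deeper is needed; the only care required is in setting up the homotopies in the right direction.

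For the first claim, suppose $f_0,f_1\:A\to Y$ are homotopic via $H\:A\times[0,1]\to Y$ with $H(\cdot,0)=f_0$ and $H(\cdot,1)=f_1$, and assume $f_0$ admits an extension $\bar f_0\:X\to Y$. I would apply HEP to the pair $(X,A)$ with the map $\bar f_0$ on $X$ and the homotopy $H$ starting from $f_0=\bar f_0|_A$; this yields a homotopy $\bar H\:X\times[0,1]\to Y$ with $\bar H(\cdot,0)=\bar f_0$ such that $\bar H|_{A\times[0,1]}=H$. Then $\bar f_1:=\bar H(\cdot,1)\:X\to Y$ is an extension of $f_1$. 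Since the roles of $f_0$ and $f_1$ can be interchanged (the reverse homotopy is also a homotopy), extendability is indeed a property of the homotopy class of $f$ in $[A,Y]$.

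For the second claim, the forward direction is immediate: if $\bar f\:X\to Y$ extends $f$, then $g:=\bar f$ satisfies $gi=f$, which is in particular homotopic to $f$. For the converse, assume $g\:X\to Y$ satisfies $gi\sim f$. The map $gi=g|_A$ is the restriction of $g$, so it trivially extends to $X$ (via $g$ itself). By the first part of the corollary applied to the homotopy class of $gi$, the homotopic map $f$ also extends, giving the desired $\bar f\:X\to Y$.

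There is no real obstacle here: the whole content of the corollary is an unpacking of HEP together with the trivial observation that the restriction of a globally defined map extends trivially. I would write the argument in two short paragraphs, one per bullet, explicitly naming the homotopy $\bar H$ produced by HEP so that it is clear which endpoint gives the desired extension.
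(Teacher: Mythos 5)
Your argument is correct and is exactly the intended one: the paper states this corollary as an immediate consequence of the homotopy extension property without writing out details, and your two-step unpacking (HEP transfers extendability along a homotopy; then the converse of the second claim reduces to the first applied to $gi\sim f$) is precisely that argument.
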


\heading{Cellular maps and cellular approximation.} A map $f\:X\to Y$ between cell complexes is called \emph{cellular} if it maps skeleta to skeleta, i.e., $f(\skel{n}{X})\subseteq \skel{n}{Y}$ for every $n$.

The \emph{cellular approximation theorem} (see \cite[Thm.~4.8]{Hatcher}) states that every continuous map
$f\:X\to Y$ between cell complexes is homotopic to a cellular one; moreover, if the given map $f$ is already cellular on some subcomplex $A\subseteq X$, then the homotopy can be taken to be \emph{stationary} on $A$ (i.e., the image of every point in $A$ remains fixed throughout).

\subsection{Simplicial sets}
\label{s:simplicial-sets}

For certain constructions it is advantageous to use a special type of
cell complexes with an additional structure that allows for a purely combinatorial description; the latter
also facilitates representing and manipulating
the objects in question, simplicial sets, on a computer.
We refer to \cite{Friedm08} for a very friendly thorough introduction to simplicial sets.

Intuitively, a simplicial set can be thought of as a kind of hybrid or compromise between a simplicial complex (more special) on the
one hand and a cell complex (more general) on the other hand. Like in the case of simplicial
complexes, the building blocks (cells) of which a simplicial set is constructed are simplices
(vertices, edges, triangles, tetrahedra, \ldots), and the boundary of each $n$-simplex $\Delta^n$ is attached
to the lower-dimensional skeleton by identifications that are linear on each proper face (subsimplex) of $\Delta^n$; thus, these identifications can be described combinatorially by maps between the vertex sets
of the simplices.\footnote{More precisely, the vertex set of each simplex is equipped with an ordering, and the identifications are required to be weakly order-preserving maps (not necessarily injective) between the vertex sets.} However, the attachments are more general than the one permitted for simplicial complexes; for example, one may have several
1-dimensional simplices connecting the same pair of vertices,
a 1-simplex forming a loop, two edges of a 2-simplex
identified to create a cone, or the boundary of a 2-simplex
all contracted to a single vertex, forming an $S^2$.
\immfig{simplset}
 Moreover, one keeps track of certain additional information that might seem superfluous but turns
 out to be very useful for various constructions. For instance, even if the identifications force some
 $n$-simplex to be collapsed to something lower-dimensional (so that it could be discarded for the purposes
of describing the space as a cell complex), it will still be formally kept on record as a \emph{degenerate}
$n$-simplex; for instance, the edges of the triangle with a boundary contracted to a point (the last example above) do not disappear---formally, each of them keeps a phantom-like
existence of a degenerate $1$-simplex.

Formally, a simplicial set $X$ is given by a sequence
$(X_0,X_1,X_2,\ldots)$ of mutually disjoint sets, where the
elements of $X_\then$ are called the \emph{$\then
$-simplices of $X$} (we note that, unlike for simplicial
complexes, a simplex in a simplicial set need not be
determined by the set of its vertices; indeed, there can be
many simplices with the same vertex set). The $0$-simplices
are also called \emph{vertices}.

For every $\then \ge 1$, there are $\then +1$ mappings
$\partial_0,\ldots,\partial_\then\:X_\then\to
X_{\then-1}$ called \indef{face operators}; the intuitive
meaning is that for a simplex $\sigma\in X_\then$,
$\partial_i\sigma$ is the face of $\sigma$ opposite to the
$i$th vertex. Moreover, there are $\then +1$ mappings
$s_0,\ldots,s_\then\:X_\then\to X_{\then+1}$ called
the \emph{degeneracy operators}; the approximate meaning of $s_i\sigma$ is the degenerate simplex
which is geometrically identical to $\sigma$, but with the
$i$th vertex duplicated. A simplex is called
\indef{degenerate} if it lies in the image of some $s_i$;
otherwise, it is \indef{nondegenerate}. We write $X^\ndg$ for
the set of all nondegenerate simplices of~$X$.
A simplicial set
is called \emph{finite} if it has only finitely many nondegenerate
simplices (if $X$ is nonempty, there are always infinitely many
degenerate simplices, at least one for every positive dimension).

There are natural axioms that the $\partial_i$
and the $s_i$ have to satisfy, but we will not list them here,
since we won't really use them. Moreover, the usual definition
of simplicial sets uses the language of category theory
and is very elegant and concise; see, e.g., \cite[Sec.~4.2]{FritschPiccinini:CellularStructures-1990}.

If $A$ and $X$ are simplicial sets such that $A_n\subseteq X_n$ for every $n$ and the face and degeneracy
operators of $A$ are the restrictions of the corresponding operators of $X$, then we call $A$ a \emph{simplicial subset}
of $X$.

\heading{Examples.} Here we sketch some basic examples of
simplicial sets; again, we won't provide all details,
referring to \cite{Friedm08}. Let $\Delta^{\sphdim}$ denote the
standard $\sphdim$-dimensional simplex regarded as a simplicial
set. For $\sphdim=0$, $(\Delta^0)_\then$ consists of a single simplex,
denoted by $0^{\then}$,  for every $\then=0,1,\ldots$;  $0^0$ is the
only nondegenerate simplex. The face and degeneracy operators
are defined in the only possible way.

For $\sphdim=1$, $\Delta^1$ has two $0$-simplices (vertices), say
$0$ and $1$, and in general there are $\then+2$ simplices in
$(\Delta^1)_\then$; we can think of the $i$th one as containing
$i$ copies of the vertex $0$ and $\then+1-i$ copies of the vertex
$1$, $i=0,1,\ldots,\then+1$. For $\sphdim$ arbitrary, the $\then$-simplices
of $\Delta^\sphdim$ can be thought of as all nondecreasing
$(\then+1)$-term sequences with entries in $\{0,1,\ldots,\sphdim\}$;
the ones with all terms distinct are nondegenerate.

In a similar fashion, every simplicial complex $K$ can be
converted into a simplicial set $X$ in a canonical way;
first, however, we need to fix a linear ordering of the
vertices. The nondegenerate $\then$-simplices of $X$ are in
one-to-one correspondence with the $\then$-simplices of $K$, but
many degenerate simplices show up as well.

\heading{Geometric realization.}
Like a simplicial complex, every simplicial set $X$ defines a topological space
$|X|$, the \indef{geometric realization of $X$}, which is unique up to homeomorphism.
More specifically, $|X|$ is a cell complex with one $n$-cell for every \emph{nondegenerate}
 $n$-simplex of $X$, and these cells are glued together according to the identifications
implied by the face and degeneracy operators (we omit the precise definition of the attachments,
since we will not really use it and refer to the literature, e.g., to \cite{Friedm08} or \cite[Sec.~4.3]{FritschPiccinini:CellularStructures-1990}).

\heading{Simplicial maps.}
Simplicial sets serve as a combinatorial way of describing a topological
space; in a similar way, simplicial maps provide a combinatorial description of
continuous maps.

A \indef{simplicial map} $f\:X\to Y$ of simplicial sets $X,Y$
consists of maps $f_\then\:X_\then\to Y_\then$, $\then=0,1,\ldots$, that
commute with the face and degeneracy operators.

A simplicial map $f\:X\to Y$ induces a continuous, in fact, a cellular map
$|f|\:|X|\to|Y|$ of the geometric realizations in a natural
way (we again omit the precise definition). Often we will
take the usual liberty of omitting $|\cdot|$ and not
distinguishing between simplicial sets and maps and their
geometric realizations.

Of course, not all continuous maps are induced by simplicial
maps. However, simplicial maps can be used to \emph{approximate}
arbitrary continuous maps up to homotopy.
The \emph{simplicial approximation theorem} (which
may be most familiar in the context of simplicial complexes) says that for an arbitrary
continuous map $\varphi\:|X|\to|Y|$ between the geometric realizations of
simplicial sets, with $X$ finite, there exist a \emph{sufficiently fine subdivision} $X'$ of
$X$ and a simplicial map $f\:X'\to Y$ whose geometric realization is homotopic to $\varphi$;
see Section~\ref{s:simplicial-approx} for more details.

\heading{Encoding finite simplicial sets.} A finite simplicial complex can be encoded in a
straightforward way by listing the vertices of each simplex.

For simplicial sets, the situation is a bit more complicated, since the simplices are no longer
uniquely determined by their vertices, but if $X$ is finite,
then we can encode $X$ by the set $X^\ndg$
of its nondegenerate simplices (which we assume to be numbered
from $1$ to $N$, where $N$ is the total number
of nondegenerate simplices), plus a little bit of additional information.

The simple but crucial fact (see, e.g. \cite[Thm.~4.2.3]{FritschPiccinini:CellularStructures-1990}) we need
is that every simplex $\sigma$ can be written uniquely as  $\sigma=s\tau$, where $\tau$ is nondegenerate and
$s$ is a degeneracy, i.e., a composition $s=s_{i_k}\circ \ldots \circ s_{i_1}$ of degeneracy operators
where $k=\dim \sigma-\dim \tau$ (in particular, $\sigma$ is nondegenerate itself if $\sigma=\tau$ and $s$ is the
identity). Thus, as mentioned above, degenerate simplices $\sigma$ do not need to be encoded explicitly but
can be represented by $s\tau$ when needed, where the degeneracy $s$ can be encoded by the sequence $(i_k,\ldots,i_1)$ of indices of its components.\footnote{Moreover, this sequence is unique, by the simplicial set axioms that we have not specified, if one stipulates $i_k<\ldots<i_1$.} The extra information we need to encode $X$, in addition to the list of its nondegenerate simplices, is how these fit together. Specifically, for $\sigma \in X_n^\ndg$ and $0\leq i\leq n$, the
$i$th face can be written uniquely as $\partial_i\sigma \in X_{n-1}=s\tau$ with $\tau$ nondegenerate, and for each $\sigma$,
we record the $(n+1)$-tuple of pairs $(\tau,s)$.

Similarly, if $f\:X\to Y$ is a simplicial map between finite simplicial sets, then given the encodings of $X$ and $Y$,
we can encode $f$ by expressing, for each $\sigma\in X_n^\ndg$, the image $f(\sigma)=s\tau$, with $\tau\in Y_m^\ndg$ and recording the list of triples $(\sigma,\tau,s)$.

For a finite simplicial set $X$, we define $\size(X)$ as the number
of nondegenerate simplices.
If the dimension of $X$ is bounded by some number $d$,
then the number of bits in the encoding of $X$ described above
is bounded by $O(\size(X)\log\size(X))$, with the constant
of proportionality depending only on~$d$.

The notion of size will be a convenient tool that allows us to ensure that
our reductions can be carried out in polynomial time, without analyzing
the running time in complete detail, which we feel would be cumbersome
and not very enlightening.

More specifically, our reductions will be composed of a sequence of various basic constructions
of simplicial sets, which will be described in the next subsection.

For each of these basic constructions, it is straightforward to check\footnote{A notable exception are
subdivisions, for which we provide more detail in an appendix.} that when we apply them
to finite simplicial sets of \emph{bounded dimension}, both the running time of the construction
(the number of steps needed to compute the encoding of the output from the encoding of the input)
as well as the size of the output simplicial set are polynomial in the size of the input. Thus, to ensure
polynomiality of the overall reduction, it will be enough to take care that we combine only a polynomial
number of such basic constructions, that the size of every intermediate
simplicial set constructed during the reduction remains polynomial in the initial input, and that the dimension
remains bounded.


\subsection{Basic constructions}
\label{s:general-constructions}

In this subsection, we review several basic constructions for cell complexes and
simplicial sets. (One advantage of simplicial sets over simplicial
complexes is that various operations on topological spaces, in particular Cartesian
products and quotients, have natural counterparts for simplicial sets.
This is where the degeneracy operators and degenerate simplices
turn out to be necessary.) For more
details, we refer to \cite{Hatcher,FritschPiccinini:CellularStructures-1990}.


\heading{Pointed and $k$-reduced simplicial sets and cell complexes.} Several of the
constructions are defined for pointed spaces. We recall that a \emph{pointed space}
$(X,x_0)$ is a topological space $X$ with a choice of a distinguished point $x_0\in X$ (the
\emph{basepoint}). If $X$ is a cell complex or a simplicial set then we
will always assume that the basepoint to be a \emph{vertex} (i.e., a $0$-cell or $0$-simplex, respectively).
A \emph{pointed map} $(X,x_0)\to(Y,y_0)$ of pointed spaces (cell complexes,
simplicial sets) is a continuous (cellular, simplicial) map sending $x_0$ to $y_0$.
Homotopies of pointed maps are also meant to be pointed; i.e., they must keep the
image of the basepoint fixed. The reader may recall that, for example,
the homotopy groups $\pi_k(Y)$ are defined as homotopy
classes of pointed maps. The set of pointed homotopy classes  of pointed maps $X\to Y$ will be denoted by $[X,Y]_*$.

A simplicial set $X$ is called \emph{$k$-reduced},
$k \geq 0$, if it has a single
vertex and no nondegenerate simplices in dimensions $1$
through $k$. Similarly, a cell complex $X$ is $k$-reduced if it has a single vertex
and no cells of dimensions $1$ up to $k$. It is then necessarily
$k$-connected.

If $(Y,y_0)$ is a $0$-reduced cell complex (or simplicial set), then any cellular (or simplicial)
map from a pointed complex $(X,x_0)$ into $Y$ is automatically pointed. Moreover, if $Y$ is
$1$-reduced, then every homotopy is pointed, too, and
thus $[X,Y]$ is canonically isomorphic to $[X,Y]_*$.

\heading{Products. } If $X$ and $Y$ are cell complexes, then their
Cartesian product $X\times Y$ has a natural cell complex structure
whose $\then$-cells are products $e^p\times e^q$, where $p+q=\then$ and
$e^p$ and $e^p$ range over the $p$-cells of $X$ and the $q$-cells of $Y$, respectively.

Furthermore, if $X$ and $Y$ are simplicial sets then there is a formally
very simple way to define their product $X\times Y$: one sets
$(X\times Y)_\then:=X_\then\times Y_\then$ for every $\then$, and the
face and degeneracy operators work componentwise;
e.g., $\partial_i(\sigma,\tau):=(\partial_i\sigma,\partial_i\tau)$.
As one would expect from a good definition, the product of
simplicial sets corresponds to the Cartesian product of their
geometric realizations, i.e., $|X\times Y|\cong|X|\times |Y|$.%
\footnote{To be more precise, the above equality
holds literally, with the product topology on the right hand side, only under
suitable assumptions on $X$ and $Y$, e.g., if both $X$ and $Y$ have only
countably many simplices. In the general case, one has to interpret the product
$|X|\times |Y|$ differently, in the category of so-called \emph{$k$-spaces},
and the same subtlety arises for products of cell complexes,
see, e.g., the discussion in the respective appendices in \cite{FritschPiccinini:CellularStructures-1990,Hatcher}.
For the spaces we will encounter, however, this issue will not arise and the product will be the same as the usual product
of topological spaces.}
The apparent simplicity of the definition hides some intricacies,
though, as one can guess after observing that, for example, the
product of two 1-simplices is not a simplex---so the above
definition has to imply some canonical way of triangulating the product.

\begin{remark}
A pair $(s\sigma,t\tau)$ of degenerate simplices in the factors may yield a nondegenerate simplex
in the product, if the degeneracies $s$ and $t$ are composed of different degeneracy operators $s_i$.
However, $\dim(X\times Y)=\dim X +\dim Y$, so the product contains no nondegenerate simplices of
dimension larger than $\dim X+\dim Y$,
 and hence $\size(X\times Y)$ is at most $\size(X)\times \size(Y)$ times some
factor that depends only on the dimension%
\footnote{This follows from the fact about realizations mentioned above. Another way of seeing this is that if $\dim \sigma=p$, $\dim \tau=q$ and $\dim(s\sigma)=\dim(t\tau)=n> p+q$ then $s$ and $t$ involve
$n-p$ and $n-q$ degeneracy operators $s_i$ with $i\leq n$, respectively, so there must be a repetition since $n-p+n-q>n$.
Without further reflection, this immediately implies that $\size(X\times Y)\leq \size(X)\cdot \size(Y)\cdot (\dim X)!(\dim Y)!$.

In fact, the factor is only singly exponential in the dimensions. For instance, for a product $\Delta^p\times \Delta^q$ of two standard simplices, the vertices of $\Delta^p\times \Delta^q$ correspond to the grid points in
$\{0,\ldots,p\}\times \{0,\ldots,q\}$, and the non-degenerate $k$-simplices correspond
to subsets of size $k+1$ of the grid that are weakly monotone in both coordinates (weakly monotone
paths of length $k$). Thus, the number of non-degenerate simplices of full dimension $p+q$ equals
$\binom{p+q}{p}$, and the number of all non degenerate simplices is at most $4^{p+q}$, say. Thus,
$\size(X\times Y)\leq \size(X)\cdot\size(Y)\times 4^{\dim X+\dim Y}$, say.}
$\dim(X\times Y)$.

Moreover, if the dimensions are bounded, the product can be constructed in polynomial time.
\end{remark}

\heading{Quotients and attachments.}
If $X$, $Y$ and $A$ are cell complexes with $A\subseteq X$ and if $f\:A\to Y$ is a cellular map, then
the space $X\cup _f Y$ obtained by attaching $X$ to $Y$ along $f$ is also a cell complex in a natural
way (see, e.g., \cite[Sec.~2.3]{FritschPiccinini:CellularStructures-1990}). In particular
, $X/A$ is a cell complex, with cells corresponding to the cells of $X$ not contained in $A$, plus one additional $0$-cell (corresponding to the image of $A$ under the quotient map).

Similarly, if $X$ is a simplicial set and if $\sim$ is an equivalence relation on
each $X_n$ that is compatible with the face and degeneracy operators, then
the quotient $X/\sim$ is also a simplicial set. In particular, this includes \emph{simplicial attachments}
$X\cup_f Y$ of simplicial sets along a simplicial map $f\:A\to Y$ defined on a simplicial subset $A\subseteq X$,
and quotients $X/A$ by simplicial subsets. These constructions are compatible with geometric realizations.
i.e., e.g., $|X\cup_f Y|\cong |X|\cup_{|f|} |Y|$.

Moreover, the size of $X\cup_f Y$ is at most the size of $X$ plus the size of $Y$,
and in bounded dimension, the attachment can be constructed in polynomial time.


\paragraph{Wedge sum (or wedge product).}
If $X_1,\ldots, X_\thelen$ are pointed spaces,
 then their \emph{wedge sum} $X_1\vee \cdots \vee X_\thelen$
is simply the disjoint union of the $X_i$ with the basepoints identified (this is a very special type of
attachment). If the $X_i$ are cell complexes or simplicial sets, then so is their wedge sum.

Later we will need the following bijection:
\begin{equation}\label{e:wedge}
[X_1\vee X_2\vee \cdots\vee X_\thelen,Y]_*\xrightarrow\cong [X_1,Y]_*\times [X_2,Y]_*\times\dots\times
[X_\thelen,Y]_*
\end{equation}
where the components of this map are given by the restrictions
 to the respective $X_i$.

\heading{Mapping cylinder and mapping cone.} For a map $f\:X\ra Y$,
 the \emph{mapping cylinder} of $f$ is the space
$\Cyl(f)$ defined as the quotient of $(X\times [0,1])\sqcup Y$ under the identifications $(x,0)\sim f(x)$
for each $x\in X$.
The \emph{mapping cone} $\Cone(f)$ is defined as the quotient $\Cyl(f)/(X\times\{1\})$ of $\Cyl(f)$ with the subspace
$X\times\{1\}$ collapsed into a point.

By the discussion concerning attachments, if $X$ and $Y$ are cell complexes and $f$ is cellular then
$\Cyl(f)$ and $\Cone(f)$ are cell complexes as well. Moreover, if $f$ is a simplicial map between simplicial sets,
then by taking the analogous simplicial attachments and quotients, we obtain simplicial sets, denoted
by $\Cyl(f)$ and $\Cone(f)$ as well, and called the \emph{simplicial
mapping cylinder} and \emph{simplicial mapping cone}, respectively. The simplicial constructions are compatible
with geometric realizations; i.e., for example, $|\Cyl(f)|\cong\Cyl(|f|)$.
\begin{center}
\includegraphics{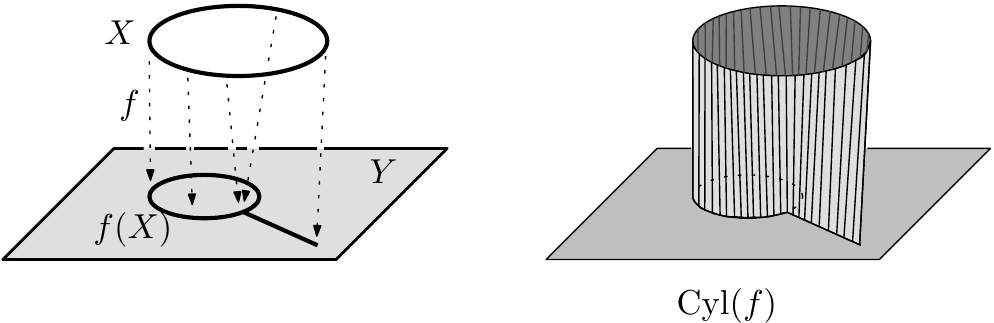}
\end{center}

We will use the mapping cylinder in our construction to replace an
arbitrary map $f\:X\to Y$ by an inclusion $X\hookrightarrow \Cyl(f)$,
 which has the same homotopy properties as $f$.
A more precise statement is given in the following lemma (see,
e.g., \cite[Corollary~0.21]{Hatcher}).

\begin{lemma}
\label{lem:mapping-cylinder} Let $f\:X\to Y$ be a continuous map between topological spaces. We consider $X\cong X\times\{1\}$ and $Y$ as subspaces
of $\Cyl(f)$ and denote the corresponding inclusion maps\footnote{More precisely, the inclusion maps are given as the composition of the respective inclusions $X\cong X\times\{1\} \subseteq X\times [0,1]\sqcup Y$ and
$Y\subseteq X\times [0,1]\sqcup Y$ with the quotient map $X\times [0,1]\sqcup Y \to \Cyl(f)$.}
by $i_X\:X\hookrightarrow \Cyl(f)$ and $i_Y\:Y\hookrightarrow \Cyl(f)$.
\begin{enumerate}
\item[\textup{(a)}] $Y$ is a strong deformation retract%
\footnote{We recall that a deformation retraction of a space $X$ onto a subspace $A$ is a map $H\:X\times[0,1]\to X$ such that $H(x,0)=x$ and $H(x,1)\in A$ for all $x\in X$ and $H(a,1)=a$ for all $a\in A$. Thus, a deformation retraction witnesses that the inclusion map $i_A\:A\hookrightarrow X$ is a homotopy equivalence with a homotopy inverse $r=H(\cdot,1)\:X\to A$ that is a retraction, i.e., that restricts to the identity on $A$.

A deformation retraction is called \emph{strong} if it keeps $A$ fixed pointwise throughout, i.e., if $H(a,t)=a$ for all $a\in A$ and $t\in [0,1]$ (some authors include this directly in the definition of a deformation retraction).}
of $\Cyl(f)$.
\item[\textup{(b)}] $X$ (considered as a subspace via $i_X$) is a strong deformation retract of
$\Cyl(f)$ iff $f$ is a homotopy equivalence.
\item[\textup{(c)}] $i_X\sim i_Y\circ f$ are homotopic as maps $X\to\Cyl(f)$.
\item[\textup{(d)}]If $f\colon X\rightarrow Y$ is a homotopy equivalence and if $g\colon Y\rightarrow X$ is a homotopy inverse for $f$, then
 $i_X\circ g\sim i_Y$ as well.
\end{enumerate}
\end{lemma}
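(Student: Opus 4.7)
The plan is to establish the four parts largely by constructing explicit homotopies and then combining them formally. For part (a), I would write down the obvious sliding deformation retraction $H\:\Cyl(f)\times[0,1]\to \Cyl(f)$ defined by $H((x,s),t)=(x,(1-t)s)$ on $X\times[0,1]$ (which becomes $f(x)\in Y$ when the second coordinate is $0$) and $H(y,t)=y$ on $Y$. One checks continuity on the quotient, and verifies the three defining properties: $H(\cdot,0)=\id$, $H(\cdot,1)$ lands in $Y$, and $H$ is stationary on $Y$.

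For part (c), the required homotopy from $i_Y\circ f$ to $i_X$ is essentially the restriction of $H$ above to the ``top'' copy of $X$ traversed in reverse: $G(x,t)=(x,t)\in \Cyl(f)$ gives $G(x,0)=f(x)=i_Y(f(x))$ (using the identification $(x,0)\sim f(x)$) and $G(x,1)=(x,1)=i_X(x)$. Part (d) is then a purely formal chain of homotopies: $i_X\circ g \sim (i_Y\circ f)\circ g = i_Y\circ (f\circ g) \sim i_Y\circ \id_Y = i_Y$, where the first homotopy is (c) precomposed with $g$ and the second uses that $g$ is a homotopy inverse of~$f$.

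For part (b), the forward direction is short: if $X$ is a strong deformation retract of $\Cyl(f)$ with retraction $r$, then $i_X$ is a homotopy equivalence with homotopy inverse $r$; combining (a), which makes $i_Y$ a homotopy equivalence, with (c), which says $i_X\sim i_Y\circ f$, forces $f$ to be a homotopy equivalence by the two-out-of-three property. The converse is the part I expect to be the main obstacle, since ``homotopy equivalence'' alone does not immediately yield a \emph{strong} deformation retract. The cleanest route is to invoke the standard fact (see, e.g., Hatcher, Cor.~0.20) that if $A\hookrightarrow Z$ is a cofibration and a homotopy equivalence, then $A$ is a strong deformation retract of~$Z$. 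The inclusion $i_X\:X\hookrightarrow \Cyl(f)$ is a cofibration because $\Cyl(f)$ is built from $X$ by the standard mapping-cylinder attachment (in the cell-complex and simplicial set settings relevant to this paper it is literally a subcomplex inclusion, which always has the homotopy extension property). Meanwhile, when $f$ is a homotopy equivalence with inverse $g$, the composite $i_X\circ g\sim i_Y$ from (d) shows $i_X$ has a one-sided homotopy inverse $g\circ r_Y$, where $r_Y$ is the deformation retraction of (a); combined with $i_Y\circ f\sim i_X$ from (c), one checks $i_X$ is a homotopy equivalence, and the cofibration criterion upgrades this to a strong deformation retract, completing the proof.
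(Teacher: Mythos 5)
Your proof is correct and is essentially the standard argument: the paper gives no proof of this lemma, citing Hatcher's Corollary~0.21, whose proof follows exactly your route---explicit sliding homotopies for (a), (c), (d), and for (b) the criterion that an inclusion which is both a cofibration and a homotopy equivalence is a strong deformation retract (Hatcher, Cor.~0.20). The only point worth tightening is that $i_X$ is a cofibration for \emph{arbitrary} topological spaces, not just in the cell-complex or simplicial setting: $\Cyl(f)\times[0,1]$ retracts onto $\Cyl(f)\times\{0\}\cup X\times[0,1]$ via the radial retraction of the square $[0,1]^2$ onto its bottom and right edges (which is compatible with the identification $(x,0)\sim f(x)$ because it collapses the left edge to a point), so the lemma holds in the stated generality.
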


\heading{Reduced mapping cone and mapping cylinder.}
If $X$ and $Y$ and $f$ are pointed, with basepoints $x_0$ and $y_0$, it will be technically
convenient, particularly in Section~\ref{s:poly-constructions}, to consider the spaces $\rcyl(f)$ and $\rcone(f)$,
called the \emph{reduced mapping cylinder} and the \emph{reduced mapping cone}, respectively, that are obtained from $\Cyl(f)$ and $\Cone(f)$ by collapsing the segment $x_0\times [0,1]$ (whose lower end is identified with $y_0$) to a single point. We will  apply this construction only
to cellular or simplicial mapping cylinders and cones, in which case contracting the subcomplex $x_0\times [0,1]$ is a homotopy equivalence.

Moreover, if $f$ is a homotopy equivalence then we may assume that its homotopy inverse $g$ is pointed as well and that
the homotopies $f\circ g\simeq \id_Y$ and $g\circ X\simeq \id_X$ keep the basepoints fixed (see \cite[Corollary~0.19]{Hatcher}). It follows that Lemma~\ref{lem:mapping-cylinder} remains true if we take $C=\rcyl(f)$ as the \emph{reduced} mapping cylinder (the inclusions are given as those into $\Cyl(f)$, followed by the quotient map $\Cyl(f)\to \rcyl(f)$, which does not make any identifications within $X$ or within $Y$).

By the remarks concerning the size of simplicial products and attachments, the size of the (reduced or unreduced) simplicial mapping cylinder or cone is at most the size of $X$ plus the size of $Y$, times a factor depending only on $\dim X$.

\subsection{Subdivisions and simplicial approximation}
\label{s:simplicial-approx}

For simplicial complexes, there is the well-known notion of \emph{barycentric subdivision} (see, e.g., \cite[\S 15]{Munkres}).
An analogous notion of subdivision, called \emph{normal subdivision}, can also be defined for simplicial sets.
Informally speaking, the normal subdivision $\sd(X)$ of a simplicial set $X$ is defined by barycentrically subdividing each simplex of $X$ and then glueing these subdivided simplices together according to the identifications implied by the face and degeneracy operators of $X$.
We refer to \cite[Section~4.6]{FritschPiccinini:CellularStructures-1990} for the precise formal definition and just state the facts that we will need in what follows.

For the standard simplex $\Delta^p$, the nondegenerate $k$-simplices of $\sd(\Delta^p)$ correspond to
chains of proper inclusions of nondegenerate simplices (faces) of $\Delta^p$. It follows that $\sd(\Delta^p)$
has $(p+1)!$ nondegenerate $p$-simplices and, in general, at most $2^{p+1}(p+1)!$ nondegenerate simplices
of any dimension. Consequently, for any simplicial set $X$, the size of $\sd(X)$ is at most the size of $X$
times a factor that depends only on $\dim X$ and which can be bounded from above by $2^{\dim X+1}(\dim X+1)!$.
Moreover, if the dimension is bounded, $\sd(X)$ can be constructed in time polynomial in $\size(X)$.

If $f\:X\to Y$ is a simplicial map, then subdivision also induces a map $\sd(f)\:\sd(X)\to \sd(Y)$, and this is compatible with compositions; i.e., $\sd(f\circ g)=\sd(f)\circ \sd(g)$.

For each simplicial set $X$, there is a simplicial map $\lastvertex_X\:\sd(X)\to X$, called the \emph{last vertex map},%
\footnote{On the standard simplex $\Delta^n$, seen as a simplicial set, this map is defined by sending a chain
$(\sigma_0,\ldots,\sigma_k)$ (a $k$-simplex of $\sd(\Delta^n)$) to the simplex $[v_0,\ldots,v_k]$, where $v_i$ is the last vertex of the simplex $\sigma_i$ (recall that the vertices in each simplex are ordered).} which is a \emph{homotopy equivalence} that is compatible with simplicial maps $f\:X\to Y$, i.e., $f\circ \lastvertex_X=\lastvertex_Y\circ \sd(f)$.\footnote{In the language of category theory, $\sd$ is a functor and $\lastvertex$ is a natural transformation between $\sd$ and the identity functor on simplicial sets.} \footnote{In fact, it is true that $X$ and $\sd(X)$ are not only homotopy equivalent but homeomorphic (as one might expect given the terminology `` subdivision''). However, for simplicial sets this is a decidedly nontrivial result, see  \cite[Cor.~4.6.5]{FritschPiccinini:CellularStructures-1990}. The difficulty is related to the fact that there is no way of defining this homeomorphism for all simplicial sets in such a way that it becomes compatible with simplicial maps. For our purposes, the natural homotopy equivalence $\lastvertex_X$ will be sufficient and more convenient.}

There is also a simplicial approximation theorem for simplicial sets, which uses iterated normal subdivisions.
Specifically, the $\sddim$-fold iterated normal subdivision of a simplicial set is defined inductively as $\sdit(X):=\sd(\sd^{\sddim-1}(X))$,
where $\sd^0(X):=X$.

\begin{theorem}[{\cite[Thm.~4.~6.~25]{FritschPiccinini:CellularStructures-1990}}]
\label{t:s-apx-sset}
Let $X$ and $Y$ be simplicial sets such that $X$ has only finitely many nondegenerate simplices,
and let $f\:|X|\rightarrow |Y|$ be a continuous map. Then there exist a finite integer $\sddim$ (which depends on $f$) and a
simplicial map $g\:\sdit(X)\to Y$ such that  $|g|$ is homotopic to the composition $f\circ |\lastvertex_X^t|$
of $f$ with the iterated last vertex map $\lastvertex_X^t\:\sdit(X)\to Y$.
\end{theorem}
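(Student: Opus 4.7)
The plan is to adapt the classical simplicial approximation strategy to the simplicial-set setting, using iterated normal subdivision to reduce the mesh of the source and then building $g$ from a vertex-by-vertex choice guided by open stars. For each vertex $w\in Y_0$ let $\mathrm{st}(w)\subseteq|Y|$ denote the open star: the union of the interiors of all nondegenerate simplices incident to $w$; these form an open cover of $|Y|$.

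First I would apply the Lebesgue number lemma to the open cover $\{f^{-1}(\mathrm{st}(w)) : w\in Y_0\}$ of $|X|$. Since $X$ has only finitely many nondegenerate simplices, $|X|$ is compact metrizable, so there exists $\delta>0$ such that every subset of $|X|$ of diameter below $\delta$ lies in some $f^{-1}(\mathrm{st}(w))$. A direct estimate on a single standard simplex (barycentric subdivision shrinks diameters by a factor of $\dim/(\dim+1)$), transferred to $|X|$ via the quotient description of geometric realization, shows that the images under $|\lastvertex_X^\sddim|$ of closed stars of vertices of $\sdit(X)$ have diameter tending to $0$ as $\sddim\to\infty$; I choose $\sddim$ large enough that every such image has diameter below $\delta$.

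Next I would construct the vertex map of $g$: for each vertex $v$ of $\sdit(X)$ choose $g(v)\in Y_0$ so that $f(|\lastvertex_X^\sddim|(\mathrm{st}(v)))\subseteq\mathrm{st}(g(v))$. The crucial combinatorial check is that if $v_0,\ldots,v_k$ span a nondegenerate simplex of $\sdit(X)$, then $g(v_0),\ldots,g(v_k)$ span a simplex of $Y$: the open stars $\mathrm{st}(g(v_i))$ have a common point, namely the image of any interior point of $[v_0,\ldots,v_k]$, and in any simplicial set a nonempty intersection of vertex stars forces those vertices to lie on some nondegenerate simplex. Since nondegenerate simplices of a normal subdivision are indexed by flags of nondegenerate simplices and are therefore determined by their vertex sets, the assignment $v\mapsto g(v)$ extends uniquely to a simplicial map $g\:\sdit(X)\to Y$.

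The homotopy $|g|\simeq f\circ|\lastvertex_X^\sddim|$ is then obtained by straight-line interpolation inside the affine simplices of $|Y|$: for any point $x\in|\sdit(X)|$ lying in the open star of a vertex $v$, both $|g|(x)$ and $f(|\lastvertex_X^\sddim|(x))$ lie in the closure of a common simplex of $|Y|$ containing $g(v)$, so one can connect them by a line segment in that simplex, and these segments fit together continuously. The main obstacle will be the simplicial-set bookkeeping --- open stars, degenerate simplices, and the possibility that several nondegenerate simplices share a vertex set --- which is precisely why one must pass to $\sdit(X)$ rather than work with $X$: after subdivision the nondegenerate simplices are chain-indexed and behave just like those of a simplicial complex, so the classical ``common star point implies common simplex'' argument applies verbatim.
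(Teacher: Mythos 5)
The paper does not prove this statement at all --- it is quoted from Fritsch--Piccinini \cite[Thm.~4.6.25]{FritschPiccinini:CellularStructures-1990} --- so the real question is whether your classical open-star argument survives the passage from simplicial complexes to simplicial sets. It does not, and the failure is exactly at the step you flag as ``the crucial combinatorial check.'' In a simplicial set, a simplicial map is \emph{not} determined by a vertex assignment: the common point of the stars $\mathrm{st}(g(v_i))$ lies in the interior of a unique nondegenerate simplex $\tau$ of $Y$, and the $g(v_i)$ do occur among the vertices of $\tau$, but (i) a vertex can occur in $\tau$ with multiplicity and several nondegenerate simplices of $Y$ can share the same vertex set, so ``the simplex spanned by $g(v_0),\ldots,g(v_k)$'' is not well defined; and (ii) a simplicial map of simplicial sets must be compatible with the vertex orderings --- on each nondegenerate source simplex it is given by a simplex $s\tau$ of $Y$ together with a weakly monotone map of ordered vertex lists --- and the star condition gives you no monotonicity whatsoever. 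This order-compatibility problem already obstructs the classical argument for \emph{ordered} simplicial complexes, which is one reason the simplicial-set version is a genuinely harder theorem.

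Your proposed repair --- ``after subdivision the nondegenerate simplices are chain-indexed and determined by their vertex sets'' --- is also false for general $X$: e.g.\ for the model $\Sigma^1$ of $S^1$ with one vertex and one nondegenerate $1$-simplex, $\sd(\Sigma^1)$ has two distinct nondegenerate $1$-simplices with the same two-element vertex set, and iterating $\sd$ does not in general produce a simplicial complex. This is precisely why the paper introduces the separate construction $B_*(\sd(\cdot))$ in Proposition~\ref{p:doublesubdiv} to manufacture honest simplicial complexes; if $\sdit$ alone did the job, that appendix would be unnecessary. (The straight-line homotopy at the end needs similar care, since the characteristic maps of simplices of $|Y|$ may identify boundary points, but that is a secondary issue.) So the Lebesgue-number/mesh part of your argument is fine, but the construction of $g$ as an actual simplicial map of simplicial sets is a genuine gap; filling it is the content of the Fritsch--Piccinini proof, which proceeds by a substantially more delicate argument than the verbatim classical one.
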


To convert arbitrary simplicial sets into homotopy-equivalent (in fact, homeomorphic) simplicial complexes, another subdivision-like operation is needed, (see, e.g., \cite{Jardine:SimplicialApproximation-2004}). Given a simplicial set $Z$, one can define a simplicial complex $B_\ast(Z)$ inductively,  by
introducing a new vertex $v_\sigma$ for every \emph{nondegenerate} simplex $\sigma$, and then replacing $\sigma$ by the cone with apex $v_\sigma$ over $B_*(\partial \sigma)$. If the simplicial set $Z$ has a certain regularity property---which is satisfied, for instance, if $Z=\sd(X)$---then $B_\ast(Z)$ and $Z$ are homotopy equivalent (in fact, homeomorphic).\footnote{As an illustration that this fails for general simplicial sets, consider the case where $Z=\Sigma^p$ is the simplicial set model of the $d$-sphere with only two nondegenerate simplices, one in dimension $0$ and one in dimension $d$. In this case, $B_\ast (\Sigma^d)$ is a $1$-dimensional simplex.}  We summarize the properties that we need in the following proposition (for completeness, we provide a proof in \ref{a:subdivisions}).

\begin{prop}
\label{p:doublesubdiv}
If $X$ is a simplicial set,
 then the \emph{twofold subdivision} $B_*(\sd(X))$ is
a simplicial complex.
Moreover, there is a simplicial map $\gamma_X\:B_*(\sd(X))\to X$,
which is a homotopy equivalence.
For a simplicial subset $A\subseteq X$, $B_*(\sd(A))$ is a subcomplex of
$B_*(\sd(X))$ and $\gamma_X|_A=\gamma_A$.\footnote{In fact, the construction
$B_*\sd$ is functorial and $\gamma$ is a natural transformation
(like the construction $\sd$ and the map $\lastvertex$),
but we will never use this stronger fact.}

If $X$ is finite and of bounded dimension, there are algorithms that construct the simplicial
complex $B_*(\sd(X))$ and evaluate the map $\gamma_X$, both in polynomial time.
\end{prop}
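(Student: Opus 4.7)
The plan is to exhibit $B_*(\sd(X))$ as the order complex of the face poset of nondegenerate simplices of $\sd(X)$, to build $\gamma_X$ as a composition of two ``last vertex''-style maps, and to deduce the homotopy equivalence from the fact that $\sd(X)$ is a \emph{regular} simplicial set (every nondegenerate simplex has pairwise distinct vertices).

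First I would unwind the inductive definition of $B_*$: for any simplicial set $Z$, the vertices of $B_*(Z)$ are labels $v_\tau$ for $\tau$ a nondegenerate simplex of $Z$, and the simplices of $B_*(Z)$ are precisely the tuples $\{v_{\tau_0},\ldots,v_{\tau_k}\}$ with $\tau_0<\tau_1<\cdots<\tau_k$ a chain in the face poset of nondegenerate simplices of $Z$; distinct chains yield distinct vertex sets, so $B_*(Z)$ is automatically a simplicial complex. Specialising to $Z=\sd(X)$, and using that a nondegenerate simplex of $\sd(X)$ is by construction a chain of nondegenerate simplices of $X$, this identifies $B_*(\sd(X))$ concretely as the order complex of chains-of-chains in $X^\ndg$.

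Next I would factor $\gamma_X$ as
\[
B_*(\sd(X))\;\xrightarrow{\alpha_X}\;\sd(X)\;\xrightarrow{\lastvertex_X}\;X,
\]
where $\alpha_X$ sends a vertex $v_\tau$ to the last vertex of $\tau$ in the induced total order on the vertices of $\tau$. This extends to a simplicial map, because if $\tau_0<\cdots<\tau_k$ is a chain in the face poset of $\sd(X)$, then each $\tau_i$ is a face of $\tau_k$, and the last-vertex operation is weakly monotone under the face relation; hence the images form a weakly increasing sequence of vertices of $\tau_k$, i.e., a (possibly degenerate) simplex of $\sd(X)$. The second factor $\lastvertex_X$ is a homotopy equivalence by the last-vertex theorem quoted in the text, and $\gamma_X$ is defined as the composition.

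The main obstacle is verifying that $\alpha_X$ itself is a homotopy equivalence, and this is where regularity of $\sd(X)$ enters: the vertices of any nondegenerate $k$-simplex $\tau=(\sigma_0<\cdots<\sigma_k)$ of $\sd(X)$ are pairwise distinct (being the entries of a strict chain), so the attaching map of $\tau$ is injective on its vertex set. Restricted to the closed subcomplex of $B_*(\sd(X))$ lying over such a $\tau$, the map $\alpha_X$ is combinatorially the standard last-vertex map from the barycentric subdivision of $\Delta^k$ to $\Delta^k$, which is a homeomorphism. A cell-by-cell induction on the skeleta of $\sd(X)$, gluing these local homeomorphisms along common faces---or, equivalently, appealing to \cite[Cor.~4.6.11]{FritschPiccinini:CellularStructures-1990}---then yields that $\alpha_X$ is a homotopy equivalence. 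Functoriality for a simplicial subset $A\subseteq X$ is immediate, since $\sd$ and $B_*$ both carry simplicial subsets to simplicial subcomplexes and $\alpha$, $\lastvertex$ only depend on the vertex ordering inherited from $X$, so $\gamma_X|_{B_*(\sd(A))}=\gamma_A$. Finally, for bounded dimension $\sd(X)$ has size polynomial in $\size(X)$, and $B_*(\sd(X))$ is the order complex of a poset of polynomial size and bounded height; hence both the complex $B_*(\sd(X))$ and the simplicial map $\gamma_X$ can be enumerated directly from the encodings in polynomial time.
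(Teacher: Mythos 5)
Your overall plan (factor a candidate $\gamma_X$ through $\sd(X)$ and quote the last-vertex theorem for one factor) is reasonable, but the step carrying all the real content --- that the flattening map $B_*(\sd(X))\to\sd(X)$ is a homotopy equivalence --- rests on two false claims. First, a nondegenerate simplex of $\sd(X)$ is \emph{not} in general a chain of nondegenerate simplices of $X$: the intermediate terms of the defining chain may be degenerate and faces may be identified. For $X=\Sigma^2$ (one vertex, one nondegenerate $2$-simplex --- exactly the kind of simplicial set used throughout the paper), the chains of nondegenerate simplices of $X$ would only produce a $1$-simplex, whereas $\sd(\Sigma^2)=\sd(\Delta^2)/\sd(\partial\Delta^2)$ has six nondegenerate $2$-simplices; so your identification of $B_*(\sd(X))$ with the order complex of ``chains-of-chains in $X^\ndg$'' is wrong. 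Second, and decisively, $\sd(X)$ is \emph{not} regular in your sense: the nondegenerate $2$-simplex of $\sd(\Sigma^2)$ coming from the flag $v_0<e_{01}<\Delta^2$ has two of its three vertices equal to the basepoint after the boundary is collapsed. Hence the closed subcomplex of $B_*(\sd(X))$ over a nondegenerate $k$-simplex $\tau$ is not the barycentric subdivision of $\Delta^k$ mapping by the standard last-vertex map, and your cell-by-cell gluing argument for $\alpha_X$ collapses. (Two smaller points: the last-vertex map $\sd(\Delta^k)\to\Delta^k$ is a homotopy equivalence but not a homeomorphism; and specifying $\alpha_X$ on vertices does not yet define a simplicial map into a simplicial set --- one must assign to each chain a simplex of $\sd(X)$ via face/degeneracy operators and check that this is independent of how each $\tau_i$ is realized as a face of $\tau_k$, which is precisely the point of the paper's ``subchain'' formalism and its canonical operators.)

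This is exactly where the paper does not argue from scratch: after describing $B_*(\sd(X))$ and $\gamma_X$ explicitly in terms of chains of face operators (allowing degenerate intermediate simplices) and the subchain relation, it obtains the homotopy equivalence by comparing with $\sd_*(\sd(X))$ and citing \cite[Prop.~4.6.3]{FritschPiccinini:CellularStructures-1990} and \cite[Cor.~4.3]{Jardine:SimplicialApproximation-2004}: both the map $\sd_*(\sd(X))\to B_*(\sd(X))$ and the map $\sd_*(\sd(X))\to X$ are homotopy equivalences, and the triangle commutes, so $\gamma_X$ is one too. The relevant ``regularity'' of $\sd(X)$ that makes the flattening harmless is the weaker property used in those references, not pairwise-distinct vertices of nondegenerate simplices. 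To repair your proof you would either have to prove that comparison result yourself (essentially redoing Jardine's argument) or cite it as the paper does; the polynomial-time and subcomplex/compatibility statements are then fine, provided the enumeration is done from the correct combinatorial description of the chains rather than from chains in $X^\ndg$.
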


\newdir{ >}{{}*!/-5pt/@{>}}

\section{Homotopy groups}
\label{s:homotopy-groups}

We review some further facts about homotopy groups that we will
need. For more details see, e.g., \cite[Section
4.1]{Hatcher}.

\subsection{Basic facts}
So far, we used the definition of the $\then$th homotopy
group $\pi_\then(X,x_0)$ of a pointed space $(X,x_0)$ as the set of homotopy classes
of pointed maps $(S^\then,p_0)\to (X,x_0)$, where $p_0\in S^\then$ is an
arbitrarily chosen basepoint, and the homotopies are required to
keep the basepoint fixed. Equivalently,
the elements of $\pi_\then(X,x_0)$
can be viewed as homotopy classes $[f]$ of maps $f\:(D^\then,\partial D^\then)\to (X,x_0)$ sending
all of $\partial D^\then$ to $x_0$, modulo homotopies that keep the image of $\partial D^\then$ fixed (as before, we will often drop the basepoint from the notation).\footnote{The claimed equivalence is obtained by identifying
$S^\then$ with
the quotient $D^\then/\partial D^\then$ of the $\then$-disk by its boundary and $p_0$ with the
image of $\partial D^\then$ under the quotient map.}

In what follows, we will also need the Abelian group operation in $\pi_\then(X,x_0)$, $\then\geq 2$, which can be defined
as follows: Suppose $f_1,\ldots, f_\thelen$ are maps $(D^\then,\partial D^\then)\to (X,x_0)$. Suppose we have
a cellular decomposition of $D^\then$ as a cell complex $\chsimp\then\thelen$ with
$\then$-cells $e_1^\then,\ldots,e_\thelen^\then$
(in Section~\ref{s:sum}
 below we will provide a concrete geometric construction of
$\chsimp\then\thelen$).
Then we can define a map $f$ from $D^\then\cong\chsimp\then\thelen$ to $(X,x_0)$ representing the homotopy class $[f_1]+\ldots +[f_\thelen]$
by sending the $(\then-1)$-skeleton of $\chsimp\then\thelen$ to $x_0$, and by defining the restriction of $f$ to each open cell $e_i^\then$ to be $f_i$.%
%

A very important special case of homotopy groups
are those of spheres. We will use
the following well-known facts:\begin{itemize}
\item
The sphere $S^\then$ is $(\then-1)$-connected.
\item
For all $\then\ge 1$, $\pi_\then(S^\then)$ is isomorphic to $\Z$
and generated by the homotopy class $\iota$ of the identity.
For each map $\varphi\:S^\then\to S^\then$, there is a unique
integer $a\in\Z$ such that $[\varphi]=a\iota$; it is
called the \indef{degree} of $\varphi$ and denoted by $\deg\varphi$.
The degree is obviously invariant under homotopy.
\item
We have $\pi_3(S^2)\cong \Z$. The group is generated by the
famous \emph{Hopf map}\footnote{See \cite[Ex.~4.45]{Hatcher} for the definition (which is not difficult, but which we will not need).} $\eta\:S^3\to S^2$.
\end{itemize}

We will also need the following simple fact:
\begin{lemma}
\label{lem:scalar-multiplication-degree} Let $g\colon
S^\then\rightarrow S^\then$ be a map of degree $b\in \Z$. Then, for
any map $f\colon S^\then\rightarrow X$, we have $[f\circ g]=b\cdot [f]\in \pi_\then(X)$.
\end{lemma}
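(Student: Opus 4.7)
The plan is to reduce the statement to two standard ingredients: the fact that $\deg g = b$ means literally $[g] = b\cdot \iota$ in $\pi_n(S^n)$, and the fact that precomposition with a fixed map $f\colon S^n\to X$ induces a group homomorphism
\[
f_{\#}\colon \pi_n(S^n)\longrightarrow \pi_n(X),\qquad [\alpha]\mapsto [f\circ \alpha].
\]
Granting both, the conclusion is immediate: $[f\circ g] = f_{\#}[g] = f_{\#}(b\cdot \iota) = b\cdot f_{\#}(\iota) = b\cdot [f]$. So the only real content is to verify these two ingredients from the definitions already recorded in the paper.

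For the first ingredient, I would just quote the bullet preceding the lemma: $\pi_n(S^n)\cong \Z$ is generated by $\iota = [\mathrm{id}_{S^n}]$, and by definition $\deg g = b$ means $[g] = b\iota$. Well-definedness of $f\circ g$ up to homotopy (if $g\sim g'$ then $f\circ g\sim f\circ g'$ via composition with the given homotopy) is routine and ensures that $f_{\#}$ is a well-defined map on homotopy classes.

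The main step is to check that $f_{\#}$ is additive. Here I would use directly the geometric description of the group operation on $\pi_n$ given just above the lemma, via the cell complex $\chsimp{n}{m}$. Given $[\alpha_1],\ldots,[\alpha_m]\in \pi_n(S^n)$ represented by maps $\alpha_i\colon(D^n,\partial D^n)\to(S^n,p_0)$, the sum $[\alpha_1]+\cdots+[\alpha_m]$ is represented by the map $\alpha\colon\chsimp{n}{m}\to S^n$ that collapses the $(n-1)$-skeleton to $p_0$ and restricts to $\alpha_i$ on the open $n$-cell $e_i^n$. Post-composing with $f$ then gives a map $f\circ \alpha$ that collapses the same $(n-1)$-skeleton to $x_0 = f(p_0)$ and restricts to $f\circ \alpha_i$ on each $e_i^n$; by the very same definition this represents $[f\circ \alpha_1]+\cdots+[f\circ \alpha_m]$. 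Hence $f_{\#}$ is a homomorphism.

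Applying this with $g$ in place of $\alpha$ and using $[g]=b\iota$ (and, for $b<0$, additionally the fact that $f_{\#}$ preserves inverses, which follows from being a homomorphism), we obtain $[f\circ g]=b\cdot[f]$, as desired. The only mild subtlety I anticipate is bookkeeping of basepoints—one needs $f$ to send the common basepoint of the spheres in the construction of $\chsimp{n}{m}$ to $x_0$, which is built into the pointed setup—but no substantial obstacle arises.
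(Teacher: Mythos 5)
Your proof is correct, and it takes a mildly different route from the paper's. You prove the general fact that postcomposition with $f$ induces a homomorphism $f_{\#}=f_*\:\pi_n(S^n)\to\pi_n(X)$, deriving additivity directly from the paper's $\chsimp{n}{m}$-description of the sum (collapse the $(n-1)$-skeleton, restrict to the cells), and then conclude formally from $[g]=b\iota$, handling negative $b$ via preservation of inverses. The paper instead avoids stating the homomorphism property: it constructs one explicit representative $g_0$ of degree $b$ from the cube model $I^n/\partial I^n$ via $(s_1,\ldots,s_n)\mapsto(s_1,\ldots,bs_n\bmod 1)$, observes that by the very definition of addition $[g_0]=b\iota$ while $f\circ g_0$ simultaneously represents $b[f]$, and finishes by the homotopy invariance of $[f\circ g]$ in $[g]$ --- the same invariance you also need. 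The underlying mechanism is identical (the sum construction commutes with postcomposition by $f$); your version is the more structural statement and gives additivity of $f_*$ on all of $\pi_n(S^n)$ for free, while the paper's explicit representative sidesteps any separate discussion of inverses. Both arguments gloss the pointed-versus-free homotopy issue at the same level, which you at least flag explicitly.
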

\begin{proof}
Consider the $\then$-dimensional unit cube $I^\then\cong D^\then$,
where $I=[0,1]$ is the unit interval. We identify $S^\then$ with the quotient
$I^\then/\partial I^\then$. From the map of sets $(I^\then,\partial I^\then) \rightarrow (I^\then,\partial I^\then)$
given by $(s_1,\ldots,s_{\then-1},s_\then)\mapsto (s_1,\ldots,s_{\then-1}, b s_\then \bmod 1)$ we obtain $g_0\:S^\then\to S^\then$ by passing to quotients. By
the definition of the addition of homotopy classes, on the one hand, $[g_0]$ is
the $b$-fold sum of the identity, and hence a particular
example of a map of degree $b$.
On the other hand, $f\circ g_0$ is a representative of $b\cdot [f]$, the $b$-fold sum of $[f]$.
Since $[f\circ g]$ depends only on the homotopy class $[g]$, which is uniquely determined
by the degree of $g$, the lemma follows.
\end{proof}

Let $X$ be a cell complex and $A\subseteq X$ a subcomplex.
Then the homotopy groups of the spaces
$A$, $X$ and $X/A$ in a certain range are connected by an exact sequence.

\begin{theorem}\label{t:exact} Let $A\subseteq X$ be cell complexes. Let $p$, $q\ge 0$ be integers, $q\leq p+1$. If $A$ is $p$-connected and $X$ is $q$ connected, then
$X/A$ is also $q$-connected and there is an exact sequence
$$\pi_{p+q}(A)\to\dots\to\pi_i(A) \to \pi_i(X) \to \pi_i(X/A)\to \pi_{i-1}(A)\to\dots\to
\pi_{q+1}(X/A).$$
Here the maps $\pi_i(A)\to\pi_i(X)$ and $\pi_i(X)\to\pi_i(X/A)$ are induced by the inclusion and the projection, respectively, and the exactness
means that the kernel of each homomorphism equals the image of the preceding one.
\end{theorem}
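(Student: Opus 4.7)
The plan is to derive Theorem~\ref{t:exact} from two standard results of homotopy theory: the long exact sequence of homotopy groups of the pair $(X,A)$, and the excision-type comparison between relative homotopy groups $\pi_i(X,A)$ and the absolute homotopy groups $\pi_i(X/A)$ of the quotient. First, I would invoke the long exact sequence of the pair (see \cite[Thm.~4.3]{Hatcher}):
\[
\cdots\to\pi_{i+1}(X,A)\to\pi_i(A)\to\pi_i(X)\to\pi_i(X,A)\to\pi_{i-1}(A)\to\cdots,
\]
where $\pi_i(X,A)$ denotes the $i$th relative homotopy group.

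Next, I would observe that the pair $(X,A)$ is $q$-connected, i.e., $\pi_i(X,A)=0$ for $i\le q$. This is immediate from the above long exact sequence: for such $i$, the group $\pi_i(X)$ vanishes because $X$ is $q$-connected, and $\pi_{i-1}(A)$ vanishes because $A$ is $p$-connected and $i-1\le q-1\le p$ by the hypothesis $q\le p+1$. Thus $\pi_i(X,A)$ is sandwiched between two trivial groups.

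The key ingredient is then the classical theorem (see \cite[Prop.~4.28]{Hatcher}) which asserts that for a CW pair $(X,A)$ with $A$ nonempty and $p$-connected and with $(X,A)$ itself $q$-connected, the quotient map induces isomorphisms $\pi_i(X,A)\xrightarrow{\cong}\pi_i(X/A)$ for all $i\le p+q+1$. Combined with the preceding step, this immediately gives $\pi_i(X/A)=0$ for $i\le q$, so $X/A$ is $q$-connected; moreover, the isomorphism covers every $\pi_i(X/A)$-term appearing in the claimed sequence, since the largest index appearing is $p+q\le p+q+1$. Substituting $\pi_i(X/A)$ for $\pi_i(X,A)$ throughout the long exact sequence of the pair, in the range $q+1\le i\le p+q$, then yields the desired truncated exact sequence, with the connecting homomorphism $\pi_i(X/A)\to\pi_{i-1}(A)$ defined by composing the inverse of this isomorphism with the connecting map of the pair.

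The only real technical content is the cited Proposition~4.28 of \cite{Hatcher}, whose proof is a nontrivial but standard cellular argument (replacing $(X,A)$ by a relative CW approximation and using cellular approximation to analyze the effect of collapsing $A$); I would treat it as a black box rather than reprove it. Everything else is a routine assembly of the long exact sequence and the connectivity hypotheses.
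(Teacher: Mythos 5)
Your proof is correct and follows essentially the same route as the paper's: the long exact sequence of the pair $(X,A)$, the observation that the connectivity hypotheses force $\pi_i(X,A)=0$ for $i\leq q$, and \cite[Proposition~4.28]{Hatcher} to replace $\pi_i(X,A)$ by $\pi_i(X/A)$ in the relevant range. The only (harmless) slip is that Proposition~4.28 gives an isomorphism only for $i\leq p+q$ and merely a surjection at $i=p+q+1$, not an isomorphism up to $p+q+1$; since the largest index you actually use is $p+q$, the argument is unaffected.
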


\begin{proof} One can define homotopy groups of any pair $(X,A)$, $A\subseteq X$, and these homotopy groups fit into the following exact sequence
\[
\dots\to\pi_i(A) \to \pi_i(X) \to \pi_i(X,A)\to \pi_{i-1}(A)\to\dots, \quad i\ge 1;\]
see \cite{Hatcher}, Section~4.1. From the exactness and the connectivity assumptions
it is easy to show that $\pi_i(X,A)=0$ for $i\leq q$.  Then,
 according to
\cite[Proposition~4.28]{Hatcher},
 the map $\pi_i(X,A)\to\pi_i(X/A)$ induced by the quotient map $X\to X/A$
is an isomorphism for $i\leq p+q$. Substituting  $\pi_i(X/A)$ in this range into the exact sequence above, we get the exact sequence from the statement of the theorem.
\end{proof}

In the proof of Theorem~\ref{t:undecide} we will need a description of the $\then$th homotopy group of a cell complex $Y$ obtained from $T$ by attaching  $(\then+1)$-cells $e_{\ell}$, $1\leq \ell\leq m$,
by attaching maps $\varphi_{\ell}\:S^{\then}\to T$.

\begin{prop}\label{p:attach}
Let $\then\ge 2$ be an integer. Suppose that $T$ is a
$1$-connected cell complex and $Y$ is a cell complex obtained from $T$
as described above. Then
$$\pi_{\then}(Y)\cong\pi_{\then}(T)/\langle [\varphi_1], [\varphi_2],\dots,[\varphi_m]\rangle,
$$
where $\langle [\varphi_1], [\varphi_2],\dots,[\varphi_m]\rangle$ is the subgroup of $\pi_{\then}(T)$ generated by the homotopy classes of $\varphi_{\ell}$, $1\leq \ell\leq m$.
\end{prop}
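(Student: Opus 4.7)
The plan is to apply the long exact sequence of the pair $(Y, T)$ and identify each term concretely. First I would observe that the pair $(Y, T)$ is $\then$-connected: by cellular approximation, any map $(D^i, S^{i-1}) \to (Y, T)$ with $i \le \then$ is homotopic rel boundary to a map into $T$, since $Y\setminus T$ consists solely of cells of dimension $\then + 1 > i$. Consequently $\pi_\then(Y, T) = 0$, and the long exact sequence of the pair supplies an exact segment
$$\pi_{\then+1}(Y, T) \xrightarrow{\partial} \pi_\then(T) \xrightarrow{i_*} \pi_\then(Y) \longrightarrow 0,$$
with $i_*$ induced by the inclusion $T\hookrightarrow Y$. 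Thus $\pi_\then(Y) \cong \pi_\then(T)/\im(\partial)$, and the task reduces to identifying $\im(\partial)$ with the subgroup $\langle [\varphi_1], \ldots, [\varphi_m]\rangle$.

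Next, exactly as in the proof of Theorem~\ref{t:exact}, I would invoke \cite[Prop.~4.28]{Hatcher}: since $(Y, T)$ is $\then$-connected and $T$ is simply connected, the quotient map $Y\to Y/T$ induces an isomorphism $\pi_{\then+1}(Y, T) \xrightarrow{\cong} \pi_{\then+1}(Y/T)$ (the required range $\then + 1 \le 2\then$ holds because $\then \ge 2$). Collapsing $T$ in $Y$ contracts the attaching sphere of each cell $e_\ell$ to the basepoint, turning $e_\ell$ into a sphere, so $Y/T$ is homeomorphic to the wedge $S^{\then+1}_1 \vee \cdots \vee S^{\then+1}_m$. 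This wedge is $\then$-connected, and so by the Hurewicz theorem (or, equivalently, by the bijection \eqref{e:wedge} combined with $\pi_{\then+1}(S^{\then+1}) \cong \Z$) we obtain $\pi_{\then+1}(Y/T) \cong \bigoplus_{\ell=1}^m \Z\cdot[\iota_\ell]$, where $\iota_\ell$ denotes the inclusion of the $\ell$-th summand sphere.

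The step that requires the most care, and which I expect to be the only mildly subtle point, is identifying the composite $\bigoplus_\ell \Z\cdot[\iota_\ell] \cong \pi_{\then+1}(Y, T) \xrightarrow{\partial} \pi_\then(T)$ as the map sending $[\iota_\ell]$ to $[\varphi_\ell]$. A representative of $[\iota_\ell] \in \pi_{\then+1}(Y, T)$ is the characteristic map $\Phi_\ell \colon (D^{\then+1}, S^\then) \to (Y, T)$ of the cell $e_\ell$: its composition with the quotient $Y \to Y/T$ sends $S^\then$ to the basepoint and restricts to a homeomorphism from $\mathring{D}^{\then+1}$ onto the $\ell$-th summand sphere minus basepoint, thereby realizing $\iota_\ell$ up to homotopy. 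By the definition of the boundary map of the long exact sequence of a pair, $\partial[\Phi_\ell] = [\Phi_\ell|_{S^\then}] = [\varphi_\ell]$. Hence $\im(\partial) = \langle [\varphi_1], \ldots, [\varphi_m]\rangle$, which gives the claimed description of $\pi_\then(Y)$.
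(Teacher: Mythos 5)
Your proof is correct, but it takes a different route from the paper's. The paper reduces to attaching a single cell and inducts: it forms the (reduced) mapping cylinder of the attaching map $\varphi$, applies its Theorem~\ref{t:exact} to the pair $(\rcyl(\varphi),S^{\then})$ (with $p=\then-1$, $q=1$), and reads off the exact segment $\pi_\then(S^\then)\xrightarrow{\varphi_*}\pi_\then(T)\to\pi_\then(Y)\to\pi_{\then-1}(S^\then)=0$; the generator $\iota$ maps to $[\varphi]$, and no relative or quotient group in dimension $\then+1$ ever needs to be computed. You instead treat all cells at once via the long exact sequence of the pair $(Y,T)$, which forces you to identify $\pi_{\then+1}(Y,T)$ — done via Hatcher's Proposition~4.28 and the homeomorphism $Y/T\cong S^{\then+1}_1\vee\cdots\vee S^{\then+1}_m$ — and to compute the boundary map on the characteristic maps $\Phi_\ell$, giving $\partial[\Phi_\ell]=[\varphi_\ell]$. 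Both arguments ultimately rest on the same comparison result (Proposition~4.28, which also underlies the paper's Theorem~\ref{t:exact}); yours avoids the induction and gives an explicit description of $\pi_{\then+1}(Y,T)$ as the free abelian group on the cells, while the paper's avoids the characteristic-map/boundary bookkeeping and the basepoint issues you silently gloss (harmless here since $T$ is simply connected).

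One small inaccuracy worth fixing: with $T$ only assumed $1$-connected, Proposition~4.28 applied to the $\then$-connected pair $(Y,T)$ gives an isomorphism $\pi_i(Y,T)\to\pi_i(Y/T)$ only for $i\le \then+1$, not for $i\le 2\then$ as your parenthetical suggests (that range would require $T$ to be $(\then-1)$-connected, which is not part of the hypothesis). Fortunately you need the isomorphism exactly at $i=\then+1$, so it is still covered — but only just, and the justification as written is off.
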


\begin{proof} It is sufficient to prove the statement for a single cell attached; then
we can proceed by induction. Consider the reduced mapping cylinder $\rcyl{\varphi}$,
together with the inclusions of $S^{\then}$ and $T$ into it and the projection onto $\rcyl{\varphi}/S^\then=
\rcone{\varphi}=Y$. The situation is summarized in the diagram
\[
\xymatrix{
S^{\then} \ar[rd]_{\varphi} \ar[r] & \rcyl{\varphi} \ar[r]  & \rcyl{\varphi}/S^{\then}=Y \\
& T \ar[u]_{\sim} &
}\]
which commutes up to homotopy. Applying Theorem~\ref{t:exact} for $A=S^{\then}$,
$X=\rcyl{\varphi}$, $p=\then-1$ and $q=1$, we obtain the exact sequence
$$\pi_{\then}(S^{\then})\to \pi_{\then}(\rcyl{\varphi})\to \pi_{\then}(Y)\to\pi_{\then-1}(S^{\then})=0.$$
If we replace the inclusion $S^{\then}\hookrightarrow\rcyl{\varphi}$ by the map
$\varphi\:S^{\then}\to T$, we get
$$\pi_{\then}(S^{\then})\buildrel{\varphi_*}\over\longrightarrow \pi_{\then}(T)\to \pi_{\then}(Y)\to 0.$$
Hence $\pi_{\then}(Y)=\pi_{\then}(T)/\langle\varphi_*(\iota)\rangle$,
 where $\iota$ is the homotopy class of the identity on $S^{\then}$,
and thus  $\varphi_*(\iota)$ is the homotopy class of $\varphi$.
\end{proof}

\subsection{Whitehead products and wedge sums of spheres}\label{s:Whitehead}
\heading{Whitehead products.} There is another type of operation on elements of homotopy groups that we will need.
Consider two spheres $S^k$ and $S^\ell$ with their standard structures as cell complexes (one vertex and one cell of
the top dimension). Then the product $S^k\times S^\ell$ is also a cell complex, with one vertex, one respective cell $e^k$ and $e^\ell$
in dimensions $k$ and $\ell$, and one cell $e^k\times e^\ell$ in dimension $k+\ell$. In particular, the $(k+\ell-1)$-skeleton of the product is a wedge $S^k\vee S^\ell$, to which the $(k+\ell)$-cell is attached via a map $\varphi\:S^{k+\ell-1}\cong \partial (D^{k+\ell})\to S^k \vee S^\ell$.

Now, if $f\:S^k\to X$ and $g\:S^\ell \to X$ are (pointed) maps, we can combine them to a map $f\vee g\: S^k\vee S^\ell\to X$.
If we compose this with the attachment map $\varphi$ discussed before, we get a map $[f,g]\:S^{k+\ell-1}\to X$, called the
\emph{Whitehead product} of $f$ and $g$. The homotopy class of this product clearly depends only on the homotopy classes of the factors, so we get a well-defined product $\pi_k(X)\times \pi_\ell(X)\to \pi_{k+\ell-1}(X)$,
again  denoted by $[\cdot,\cdot]$.
As a quite trivial but nonetheless useful example, if $X=S^k\times S^\ell$, then the attachment map $\varphi$ itself equals the Whitehead product $[\iota_{S^k},\iota_{S^\ell}]$ of the two inclusions $\iota_{S^k}\:S^k\hookrightarrow S^k\vee S^\ell$ and
$\iota_{S^\ell}\:S^\ell \hookrightarrow S^k\vee S^\ell$.

In our proofs we will use that the Whitehead product is natural, graded commutative
and bilinear, i.e.
\begin{align*}
f_*[\alpha,\beta]&=[f_*\alpha,f_*\beta],\\
[\alpha,\beta]&=(-1)^{k\ell}[\beta,\alpha],\\
[\alpha+\gamma,\beta]&=[\alpha,\beta]+[\gamma,\beta],\\
[\alpha,\beta+\delta]&=[\alpha,\beta]+[\alpha,\delta]
\end{align*}
where $\alpha,\ \gamma\in\pi_k(X)$, $\beta,\ \delta\in\pi_{\ell}(X)$ and $f\:X\to Y$.
For the proof see \cite{Whitehead:HomotopyTheory-1978}, Chapter X, 7.2, Cor.~7.12 and Cor.~8.13.

In the proof of Theorem~\ref{t:undecide} we will need some facts about the homotopy groups
of spheres and their wedge sums.

\begin{theorem}[{\cite[Cor.~4B.2]{Hatcher},\cite[XI, Thm.~2.5]{Whitehead:HomotopyTheory-1978}}]
\label{t:hopf}
There is a homomorphism  (called the Hopf invariant)
$H\:\pi_{2\thedim-1}(S^\thedim)\to \mathbb Z$
such that for $d$ even $H([\iota,\iota])=\pm 2$.
\end{theorem}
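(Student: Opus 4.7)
The plan is to define $H$ cohomologically using mapping cones. For a pointed map $f\:S^{2\thedim-1}\to S^\thedim$, form the mapping cone $\Cone(f)=S^\thedim\cup_f D^{2\thedim}$. As a cell complex it has one $0$-cell, one $\thedim$-cell, and one $2\thedim$-cell, so cellular cohomology yields $H^\thedim(\Cone(f);\Z)\cong\Z$, generated by some class $\alpha$, and $H^{2\thedim}(\Cone(f);\Z)\cong\Z$, generated by some $\beta$. I would then define $H(f)\in\Z$ by the cup-product relation $\alpha\cup\alpha=H(f)\cdot\beta$. A choice of compatible orientations pins down the sign, and since the homotopy type of $\Cone(f)$ depends only on the homotopy class of $f$ (by the standard homotopy invariance of the mapping cone construction), this gives a well-defined function $H\:\pi_{2\thedim-1}(S^\thedim)\to\Z$.

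For the homomorphism property, given $f,g\:S^{2\thedim-1}\to S^\thedim$, write $f+g=(f\vee g)\circ\nu$, where $\nu\:S^{2\thedim-1}\to S^{2\thedim-1}\vee S^{2\thedim-1}$ is the pinch map used to define the group operation in $\pi_{2\thedim-1}$. One constructs a natural map $\Cone(f+g)\to\Cone(f)\vee\Cone(g)$ induced by $\nu$; it is an isomorphism on $H^\thedim$ (both sides are $\Z$) and, on $H^{2\thedim}$, sends the generator of $H^{2\thedim}(\Cone(f)\vee\Cone(g))\cong\Z\oplus\Z$ along the sum-of-coordinates map. Naturality of $\cup$, combined with the fact that cross terms $\alpha_f\cup\alpha_g$ vanish in the wedge (no common top cell), then yields $H(f+g)=H(f)+H(g)$. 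This step is routine but notationally heavy; I would relegate the sign check to the cited references.

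The heart of the matter is evaluating $H([\iota,\iota])$ for $\thedim$ even. The key observation is that the standard cell structure on $S^\thedim\times S^\thedim$ has $(2\thedim-1)$-skeleton $S^\thedim\vee S^\thedim$, with its top $2\thedim$-cell attached precisely via the Whitehead product $[\iota_1,\iota_2]$ of the two inclusions (as noted in Section~\ref{s:Whitehead}). The fold map $\nabla\:S^\thedim\vee S^\thedim\to S^\thedim$ sends $[\iota_1,\iota_2]$ to $[\iota,\iota]$ by naturality of Whitehead products, so it extends over the top cell to a cellular map $\bar\nabla\:S^\thedim\times S^\thedim\to\Cone([\iota,\iota])$, which on top cells is the identity on $D^{2\thedim}$ and hence has degree $\pm 1$ there. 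Let $\alpha_1,\alpha_2\in H^\thedim(S^\thedim\times S^\thedim;\Z)$ be pulled back from the two projections, so $\alpha_1\cup\alpha_2$ generates $H^{2\thedim}(S^\thedim\times S^\thedim;\Z)$ by K\"unneth. Then $\bar\nabla^*(\beta)=\pm\alpha_1\cup\alpha_2$, while $\bar\nabla^*(\alpha)=\alpha_1+\alpha_2$, since $\bar\nabla$ restricts to the identity on each wedge summand.

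Applying $\bar\nabla^*$ to $\alpha\cup\alpha=H([\iota,\iota])\cdot\beta$ gives
\[
(\alpha_1+\alpha_2)\cup(\alpha_1+\alpha_2)=H([\iota,\iota])\cdot(\pm\alpha_1\cup\alpha_2).
\]
Now $\alpha_i\cup\alpha_i=0$, since it is pulled back from $H^{2\thedim}(S^\thedim)=0$, and for $\thedim$ even, graded commutativity gives $\alpha_1\cup\alpha_2=\alpha_2\cup\alpha_1$; the left-hand side therefore equals $2\,\alpha_1\cup\alpha_2$, yielding $H([\iota,\iota])=\pm 2$. The main obstacle in executing this plan is to set up the homomorphism property carefully with correct signs; the $\pm 2$ calculation itself is essentially a one-line K\"unneth computation once the fold-map extension $\bar\nabla$ is in place.
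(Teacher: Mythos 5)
The paper does not prove this theorem; it quotes it from the cited references (Hatcher, Cor.~4B.2, and Whitehead), so the relevant comparison is with those sources, and your argument is essentially the standard one given there: define $H(f)$ by the cup-square relation $\alpha\cup\alpha=H(f)\beta$ in $\Cone(f)$, and evaluate $H([\iota,\iota])$ by extending the fold map over the top cell of $S^\thedim\times S^\thedim$ (whose $2\thedim$-cell is attached along $[\iota_1,\iota_2]$) to get $\bar\nabla\:S^\thedim\times S^\thedim\to\Cone([\iota,\iota])$ of degree $\pm1$ on the top cell, then use K\"unneth and graded commutativity with $\thedim$ even to get $(\alpha_1+\alpha_2)^2=2\alpha_1\cup\alpha_2$ and hence $H([\iota,\iota])=\pm2$. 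That central computation is correct as you wrote it.

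One genuine inaccuracy sits in your additivity step: there is no natural map $\Cone(f+g)\to\Cone(f)\vee\Cone(g)$ induced by the pinch map, and your claim that both sides have $H^\thedim\cong\Z$ fails for the wedge, since $\Cone(f)$ and $\Cone(g)$ each contribute their own bottom sphere, so $H^\thedim(\Cone(f)\vee\Cone(g))\cong\Z\oplus\Z$. The correct intermediate space is $D:=\Cone(f\vee g)=S^\thedim\cup_f e^{2\thedim}\cup_g e^{2\thedim}$, a single bottom sphere with two top cells: the pinch map (equivalently, collapsing an equatorial disk of the top cell) gives $\Cone(f+g)\to D$, which is an isomorphism on $H^\thedim$ and sends the two generators of $H^{2\thedim}(D)\cong\Z\oplus\Z$ to the single top generator; the quotient maps collapsing one top cell of $D$ at a time identify those two generators with $\beta_f$ and $\beta_g$ and show $\alpha_D\cup\alpha_D=H(f)\beta_f'+H(g)\beta_g'$, whence $H(f+g)=H(f)+H(g)$ by naturality. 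With that repair (and the observation that $H$ is independent of the sign choice for $\alpha$, and determined up to the sign of $\beta$, which is all the statement $H([\iota,\iota])=\pm2$ requires), your proof is complete and matches the cited one.
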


Let us note that for $\thedim$ odd the Whitehead product $[\iota,\iota]\in\pi_{2\thedim-1}(S^\thedim)$ is of order two, i.e.~$2[\iota,\iota]=0$.
Whitehead products play a crucial role in Hilton's theorem which converts the computation of homotopy groups of a wedge of spheres to the computations of homotopy groups of spheres.
We do not need this theorem in its full generality as it was proved in \cite{Hilton}, and so we restrict ourselves to a special case.

Let $\thedim\ge 2$ and $r$, $s\ge 1$ be integers. Let
\begin{equation}\label{e:T}
T= S_1^\thedim\vee \cdots \vee S_r^\thedim\vee  S^{2\thedim-1}_1\vee\cdots \vee S_s^{2\thedim-1}
\end{equation}
be the wedge sums of $r$ copies of $S^\thedim$ and $s$ copies of $S^{2\thedim-1}$.
Denote by $\nu_i$ and $\mu_\theeqn$ the homotopy classes of the inclusions $S^\thedim
\hookrightarrow T$ and $S^{2\thedim-1}\hookrightarrow T$ onto the $i$th copy of $S^\thedim$ and the $\theeqn$th copy of $S^{2\thedim-1}$, respectively. Then the homotopy groups $\pi_\thedim(T)$ and $\pi_{2\thedim-1}(T)$ can be described by the following special case of Hilton's theorem.

\begin{theorem}[{\cite[Thm.~A]{Hilton}}]
\label{t:hilton}
With the notation as above, there are isomorphisms
\begin{align*}
\pi_\thedim(T)&\cong \bigoplus_{1\leq i\leq r}\pi_\thedim(S_i^\thedim),\\
\pi_{2\thedim-1}(T)&\cong\bigoplus_{1\leq i\leq r} \pi_{2\thedim-1}(S_i^\thedim)\oplus \bigoplus_{1\leq \theeqn \leq s}\pi_{2\thedim-1}(S_\theeqn^{2\thedim-1})\oplus \bigoplus_{1\leq i<j\leq r}\pi_{2\thedim-1}(S_{ij}^{2\thedim-1}).
\end{align*}
An element $\beta\in \pi_{2\thedim-1}(S_i^\thedim)$ corresponds to the composition $\nu_i\circ\beta\in \pi_{2\thedim-1}(T)$,
an element $\beta\in \pi_{2\thedim-1}(S_\theeqn^{2d-1})$ to the composition
$\mu_\theeqn\circ\beta \in \pi_{2\thedim-1}(T)$, and
an element $\beta\in \pi_{2\thedim-1}(S_{ij}^{2\thedim-1})$
to the composition $[\nu_i,\nu_j]\circ\beta\in \pi_{2\thedim-1}(T)$.
\end{theorem}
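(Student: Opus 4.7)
Write $T = T_0 \vee T_1$ with $T_0 := \bigvee_i S_i^\thedim$ and $T_1 := \bigvee_\theeqn S_\theeqn^{2\thedim-1}$. The quotient $T/T_0 \cong T_1$ is $(2\thedim-2)$-connected, so the exact sequence of Theorem~\ref{t:exact} (combined with the split injectivity of $\pi_\thedim(T_0) \to \pi_\thedim(T)$ coming from the retraction $T \to T_0$, needed to handle $\thedim = 2$) yields $\pi_\thedim(T_0) \cong \pi_\thedim(T)$. Hurewicz applied to the $(\thedim-1)$-connected complex $T_0$ then identifies this with $H_\thedim(T_0) = \bigoplus_i \Z$, as required.

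\textbf{Part 2 (peeling off $T_1$).} The retractions of $T = T_0 \vee T_1$ onto the two factors induce a split surjection $\pi_{2\thedim-1}(T) \twoheadrightarrow \pi_{2\thedim-1}(T_0) \oplus \pi_{2\thedim-1}(T_1)$. Its kernel is generated by mixed (possibly iterated) Whitehead brackets between classes from $\pi_*(T_0)$ and $\pi_*(T_1)$; since $T_0$ is $(\thedim-1)$-connected and $T_1$ is $(2\thedim-2)$-connected, any such bracket $[\alpha,\beta]$ with $\alpha\in\pi_a(T_0)$ and $\beta\in\pi_b(T_1)$ lives in dimension $a+b-1 \geq \thedim + (2\thedim-1) - 1 = 3\thedim-2 > 2\thedim-1$ (using $\thedim\geq 2$), and iterated brackets are even higher. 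Hence the splitting is an isomorphism, the factor $\pi_{2\thedim-1}(T_1) = \bigoplus_\theeqn \pi_{2\thedim-1}(S_\theeqn^{2\thedim-1})$ accounts for the middle summand of the claim, and it remains to decompose $\pi_{2\thedim-1}(T_0)$.

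\textbf{Part 3 (decomposing $\pi_{2\thedim-1}(T_0)$ via the product).} Compare $T_0$ with $P := \prod_i S_i^\thedim$: the $\thedim$-skeleton of $P$ coincides with $T_0$, and $P^{(2\thedim)}$ is obtained from $T_0$ by attaching one $2\thedim$-cell per pair $i<j$ along the Whitehead product $[\nu_i,\nu_j]$; all remaining cells of $P$ sit in dimension $\geq 3\thedim$. Consequently $\pi_{2\thedim-1}(P^{(2\thedim)}) \cong \pi_{2\thedim-1}(P) = \bigoplus_i \pi_{2\thedim-1}(S_i^\thedim)$ by the product formula, and Proposition~\ref{p:attach} yields a short exact sequence
\[
0 \to K \to \pi_{2\thedim-1}(T_0) \to \bigoplus_i \pi_{2\thedim-1}(S_i^\thedim) \to 0,
\]
where $K := \langle [\nu_i,\nu_j] : i<j \rangle$. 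The coordinate projections $P \to S_i^\thedim$ restrict to retractions $T_0 \to S_i^\thedim$ that split the right-hand map, producing the first summand of the claim.

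\textbf{Part 4 (identifying $K$, the main obstacle).} The remaining task is to show $K \cong \bigoplus_{i<j}\Z$, with each $[\nu_i,\nu_j]$ freely generating one summand. The retractions $\rho_{ij}\:T_0 \to S_i^\thedim \vee S_j^\thedim$ (collapsing all other spheres) detect the brackets independently: by naturality they send $[\nu_i,\nu_j]$ to $[\iota_1,\iota_2]$ and kill all $[\nu_{i'},\nu_{j'}]$ with $\{i',j'\}\neq\{i,j\}$. This reduces the problem to the case $r=2$, namely showing that $[\iota_1,\iota_2]\in\pi_{2\thedim-1}(S^\thedim\vee S^\thedim)$ has infinite order, which I expect to be the main point of difficulty. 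The cleanest route is the long exact sequence of the pair $(S^\thedim\times S^\thedim,\,S^\thedim\vee S^\thedim)$: by Blakers--Massey, $\pi_{2\thedim}(S^\thedim\times S^\thedim,\,S^\thedim\vee S^\thedim) \cong \pi_{2\thedim}(S^{2\thedim}) = \Z$, and the boundary map carries the generator to the attaching class $[\iota_1,\iota_2]$ of the top cell. Since $\pi_{2\thedim}(S^\thedim\times S^\thedim) = \pi_{2\thedim}(S^\thedim)^{\oplus 2}$ is finite for $\thedim \geq 2$ by Serre's finiteness theorem (the exceptional rank-$1$ cases $n=\thedim$ and, for even $\thedim$, $n=2\thedim-1$ do not apply to $n=2\thedim$), the boundary is injective and $[\iota_1,\iota_2]$ indeed has infinite order. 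Combined with detection by the $\rho_{ij}$, this identifies $K \cong \bigoplus_{i<j}\Z$ and completes the proof.
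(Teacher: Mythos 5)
Your argument is essentially correct, but it takes a genuinely different route from the paper: the paper does not prove this statement at all, it simply quotes it as a special case of Hilton's Theorem~A, whereas you give a direct proof of exactly the low-dimensional special case needed. Your route — identify $T_0=\bigvee_i S_i^\thedim$ with the $(2\thedim-1)$-skeleton of the product $\prod_i S_i^\thedim$, use Proposition~\ref{p:attach} to exhibit $\pi_{2\thedim-1}(T_0)$ as an extension of $\bigoplus_i\pi_{2\thedim-1}(S_i^\thedim)$ by the subgroup generated by the $[\nu_i,\nu_j]$, and then detect these brackets by the retractions onto $S_i^\thedim\vee S_j^\thedim$ together with the infinite order of $[\iota_1,\iota_2]$ (proved via the pair $(S^\thedim\times S^\thedim,S^\thedim\vee S^\thedim)$, homotopy excision/relative Hurewicz, and Serre finiteness of $\pi_{2\thedim}(S^\thedim)$) — is sound, works uniformly for $\thedim$ even and odd, and is strong enough to yield Corollary~\ref{c:freegen} as used later. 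What the citation buys the paper is the full strength of Hilton's theorem with no extra machinery in the text; what your argument buys is self-containedness in the relevant range, at the price of invoking Blakers--Massey and Serre's theorem. Your handling of the $\thedim=2$ corner in Part~1 (splitting via the retraction instead of relying on $\pi_{\thedim+1}(T_1)=0$) is correct and necessary.

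The one step you should repair is the justification in Part~2. The assertion that the kernel of $\pi_{2\thedim-1}(T_0\vee T_1)\to\pi_{2\thedim-1}(T_0)\oplus\pi_{2\thedim-1}(T_1)$ is ``generated by mixed iterated Whitehead brackets'' is, as stated, an appeal to the Hilton--Milnor structure theory — i.e.\ essentially to the theorem you are trying to prove — and you give no independent argument for it. Fortunately the conclusion you need follows from the tool you already use in Part~4: by homotopy excision the pair $(T_0\times T_1,\,T_0\vee T_1)$ is $(3\thedim-2)$-connected (since $T_0$ is $(\thedim-1)$-connected and $T_1$ is $(2\thedim-2)$-connected), and $2\thedim\leq 3\thedim-2$ for $\thedim\geq 2$, so the long exact sequence of this pair shows that $\pi_{2\thedim-1}(T)\to\pi_{2\thedim-1}(T_0\times T_1)=\pi_{2\thedim-1}(T_0)\oplus\pi_{2\thedim-1}(T_1)$ is injective; with your splitting this gives the desired isomorphism without any reference to the structure of the kernel in general. (Also, a small wording point in Part~3: the retractions $T_0\to S_i^\thedim$ \emph{are} the surjection onto $\bigoplus_i\pi_{2\thedim-1}(S_i^\thedim)$; the splitting is given by the inclusion-induced maps $\beta\mapsto\nu_i\circ\beta$, which is what makes the correspondence statement of the theorem come out correctly.)
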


We will say that some elements $x_1,\ldots,x_r$ of an Abelian group are
\emph{integrally independent}
 if the only valid relation $a_1x_1+\cdots+a_rx_r=0$ with coefficients $a_i\in\Z$ is that with all $a_i$ zero. The following statement is an immediate consequence of Theorems~\ref{t:hilton} and \ref{t:hopf}.

\begin{corol}\label{c:freegen} If $\thedim\ge 2$ is odd, then the elements $\mu_\theeqn$,
$1\leq \theeqn \leq s$, and $[\nu_i,\nu_j]$, $1\leq i<j\leq r$ are integrally independent in $\pi_{2\thedim-1}(T)$.

If $\thedim\ge 2$ is even, then the elements $\mu_\theeqn$, $1\leq \theeqn\leq s$ and
$[\nu_i,\nu_j]$, $1\leq i\leq j\leq r$ are integrally independent in $\pi_{2\thedim-1}(T)$.
\end{corol}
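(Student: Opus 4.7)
The strategy is to apply Hilton's theorem (Theorem~\ref{t:hilton}) directly and exhibit each of the listed elements as either a generator of, or an infinite-order element inside, a distinct direct summand of $\pi_{2\thedim-1}(T)$. Under the Hilton isomorphism, each $\mu_\theeqn$ corresponds to the generator $\iota\in\pi_{2\thedim-1}(S^{2\thedim-1}_\theeqn)\cong\Z$, and for $i<j$ each $[\nu_i,\nu_j]$ corresponds to the generator $\iota\in\pi_{2\thedim-1}(S^{2\thedim-1}_{ij})\cong\Z$. These $s+\binom{r}{2}$ elements therefore lie in pairwise distinct free cyclic summands, so any integer relation among them forces every coefficient to vanish. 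This already handles the odd case in full.

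For the even case, the additional diagonal elements $[\nu_i,\nu_i]$ come into play, and a moment of care is required: they do \emph{not} lie in the ``bracket'' summands $\pi_{2\thedim-1}(S^{2\thedim-1}_{ij})$ of Hilton's decomposition (which are indexed only by $i<j$), but rather in the ``sphere'' summands $\pi_{2\thedim-1}(S^\thedim_i)$. Indeed, by naturality of the Whitehead product, $[\nu_i,\nu_i]=\nu_{i\,*}[\iota,\iota]$, where $[\iota,\iota]\in\pi_{2\thedim-1}(S^\thedim)$ is the Whitehead square of the identity. By Theorem~\ref{t:hopf}, for $\thedim$ even the Hopf invariant satisfies $H([\iota,\iota])=\pm 2\neq 0$, so $[\iota,\iota]$ is of infinite order; consequently each $[\nu_i,\nu_i]$ is of infinite order in its summand $\pi_{2\thedim-1}(S^\thedim_i)$. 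Since these summands are pairwise distinct for different $i$ and disjoint from all summands housing the $\mu_\theeqn$ or the $[\nu_i,\nu_j]$ with $i<j$, an arbitrary integer relation
\[
\sum_\theeqn c_\theeqn\,\mu_\theeqn + \sum_{i<j} d_{ij}\,[\nu_i,\nu_j] + \sum_{i} e_i\,[\nu_i,\nu_i] = 0
\]
projects coordinatewise onto each summand, forcing $c_\theeqn=0$, $d_{ij}=0$, and $e_i\cdot[\iota,\iota]=0$ in $\pi_{2\thedim-1}(S^\thedim)$, hence $e_i=0$ since $[\iota,\iota]$ has infinite order.

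I do not anticipate any genuine obstacle here, as the theorems of Hilton and Hopf do the heavy lifting; the only point demanding attention is the bookkeeping of which summand each element inhabits, in particular recognizing that the diagonal bracket $[\nu_i,\nu_i]$ sits in the sphere summand $\pi_{2\thedim-1}(S^\thedim_i)$ rather than in a (nonexistent) bracket summand indexed by $(i,i)$.
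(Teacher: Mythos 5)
Your argument is correct and is essentially the paper's own proof: the paper likewise derives the corollary directly from Hilton's theorem together with the Hopf-invariant fact, noting that each listed element lies in a distinct direct summand and has infinite order. Your extra care in locating $[\nu_i,\nu_i]=\nu_{i*}[\iota,\iota]$ inside the summand $\pi_{2\thedim-1}(S^\thedim_i)$ (rather than a nonexistent bracket summand) is exactly the point the paper leaves implicit.
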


\begin{proof}
The reason is that every element in the list comes from a different direct summand and is of infinite order.
\end{proof}

In the case $\thedim=2$ and $s=0$ we can say even more:

\begin{corol}[{\cite[Ex.~4.52]{Hatcher}}]
\label{c:twospheres}
The homotopy group
$\pi_{3}(\bigvee_{i=1}^{r}S_i^2)$ is a free Abelian group generated by
the Whitehead products $[\nu_i,\nu_j]$, $1\leq i<j\leq r$, and homotopy classes $\nu_i\eta$, where $\eta\:S^3\to S^2$ is the Hopf map.
\end{corol}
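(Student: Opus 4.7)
The plan is to specialize Theorem~\ref{t:hilton} to the case $\thedim=2$ and $s=0$, so that $T=\bigvee_{i=1}^{r}S_i^2$ and $2\thedim-1=3$. In this setting, the middle summand in the decomposition of $\pi_{2\thedim-1}(T)$ disappears (no copies of $S^{2\thedim-1}$ in the wedge), leaving the isomorphism
\[
\pi_3\Bigl(\bigvee_{i=1}^{r}S_i^2\Bigr)\;\cong\;\bigoplus_{1\le i\le r}\pi_3(S_i^2)\;\oplus\;\bigoplus_{1\le i<j\le r}\pi_3(S_{ij}^3).
\]

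Next, I would plug in the two classical computations of homotopy groups of spheres that are recalled just before Lemma~\ref{lem:scalar-multiplication-degree}: $\pi_3(S^3)\cong\Z$, freely generated by the identity class $\iota$, and $\pi_3(S^2)\cong\Z$, freely generated by the Hopf map $\eta$. Thus each summand on the right is a copy of $\Z$, so the whole group is free abelian of rank $r+\binom{r}{2}$.

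Finally, I would read off the basis using the correspondence provided by Hilton's theorem: a generator $\eta\in\pi_3(S_i^2)$ is identified with the composition $\nu_i\circ\eta=\nu_i\eta\in\pi_3(T)$, and a generator $\iota\in\pi_3(S_{ij}^3)$ is identified with $[\nu_i,\nu_j]\circ\iota=[\nu_i,\nu_j]\in\pi_3(T)$. Collecting these generators across all indices yields the claimed free basis
\[
\{\,\nu_i\eta\;:\;1\le i\le r\,\}\;\cup\;\{\,[\nu_i,\nu_j]\;:\;1\le i<j\le r\,\}.
\]

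There is no real obstacle here beyond correctly bookkeeping the Hilton decomposition; the only mild point worth checking is that the two types of generators are not lost when $s=0$, i.e.~that the summands $\pi_3(S_i^2)$ (which account for the $\nu_i\eta$) are genuinely present in the special case $\thedim=2$. Since $2\thedim-1=\thedim+1=3$ exactly when $\thedim=2$, the $\pi_{2\thedim-1}(S_i^\thedim)$ terms in Theorem~\ref{t:hilton} become the nontrivial groups $\pi_3(S^2)\cong\Z\langle\eta\rangle$, which is precisely the source of the Hopf contributions.
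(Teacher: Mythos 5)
Your derivation is correct and matches what the paper intends: Corollary~\ref{c:twospheres} is presented as a direct specialization of Hilton's theorem (Theorem~\ref{t:hilton}) with $\thedim=2$, the paper itself giving no argument beyond the citation to Hatcher, and your bookkeeping of the summands $\pi_3(S^2)\cong\Z\langle\eta\rangle$ and $\pi_3(S^3)\cong\Z\langle\iota\rangle$ together with the stated correspondences $\beta\mapsto\nu_i\circ\beta$ and $\beta\mapsto[\nu_i,\nu_j]\circ\beta$ is exactly the right reading. The only pedantic point is that Theorem~\ref{t:hilton} is stated for $s\ge 1$, so you should note that the case $s=0$ follows by simply omitting the (empty) middle summand, or by invoking Hilton's theorem in its general form.
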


\section{The constructions for Theorem~\ref{t:undecide} presented
as cell complexes}
\label{s:constructions}

Here we present the essence of the proof of Theorem~\ref{t:undecide}.
Namely, for every system of quadratic Diophantine equations
of the form \eqref{e:qsym} (for $\thedim$ even) or \eqref{e:qskew}
(for $\thedim$ odd), we construct cell complexes $A$, $X$, $Y$,
and a continuous map $f\:A\to Y$, where $Y$ is
$(\thedim-1)$-connected and $\dim X=2\thedim$, such that $f$ is extendable
to $X$ iff the Diophantine system has a solution. Moreover,
one of $(X,A)$ and $Y$ can be assumed to be fixed, as in
Theorem~\ref{t:undecide}~(a) and~(b).
We will also see  the role of Whitehead products and Hilton's theorem in
the proof.

What remains for the next section is to convert $X$, $A$, $Y$ into finite simplicial
complexes and $f$ into a simplicial map, so that the solvability of
the extension problem remains unchanged. Moreover, the construction
has to be algorithmic.

While discussing the cellular constructions of $X$, $A$, $Y$, it is also
natural to describe the cell complex used by Anick in the proof of
his \#P-hardness result. Indeed, his construction uses tools very
similar to those employed in our constructions.



\heading{The generalized extension problem.}
In order to simplify the presentation, it is convenient to
remove the assumption in the extension problem
that $A$ is a subspace of $X$, or
in other words, that the map $A\to X$ is an inclusion.
Instead, we consider three spaces $A$, $W$, $Y$ and (arbitrary)
maps $g\:A\to W$ and $f\:A\to Y$, and we ask if there is a map
$h\:W\to Y$ making the following diagram commutative up to homotopy:
{\renewcommand\theequation{GEP}
\begin{equation}\label{e:triangle}
\xymatrix{
A \ar[r]^-f \ar[d]_-g & Y \\
W \ar@{-->}[ur]_-h &
}
\end{equation}}%
For a generalized extension problem as above,
we obtain an equivalent extension problem by
setting $X=\Cyl(g)$ (where $A$ is considered as a subspace
of the cylinder in the usual way). This is easy to see, but we
nonetheless briefly describe a proof of this fact that only uses
the homotopy extension property for pairs of cell complexes and the
properties of the mapping cylinder summarized in Lemma~\ref{lem:mapping-cylinder}.
This means that we can replace the mapping cylinder $\Cyl(g)$ by any other cell complex
that has these properties, and the same proof will still apply.
This will be useful for our simplicial constructions later on, for which it will be convenient
to work with so-called \emph{generalized mapping cylinders} (see Section~\ref{s:gmc}).

Let $i_A$ and $i_W$ be the inclusions of $A$ and $W$ into $X$.
On the one hand, given a solution $\bar{f}\:X\to Y$ of the extension problem,
i.e., $\bar{f}\circ i_A=f$, we can define $h:=\bar{f}\circ i_W$ as the restriction of
$\bar{f}$ to $W$. Then $h\circ g=\bar{f}\circ i_W \circ g\sim \bar{f}\circ i_A=f$,
so $h$ is a solution to the generalized extension problem.

On the other hand, given a solution $h$ for the generalized extension problem
(\textup{GEP}), let $r\:X\to W$ be the retraction from $X$ to $W$ and
define $\bar{f}:=h\circ r$. Then $\bar{f}\circ i_A=h\circ r\circ i_A\sim h\circ g\sim f$,
so $\bar{f}$ is an extension of a map homotopic to $f$, and since extendability depends
only on the homotopy class of a map (Corollary~\ref{c:ext}), $f$ can be extended as well.

Thus, we are free to consider the
generalized extension problem with $\dim W\le 2\thedim$ and
$\dim A\le 2\thedim-1$.

\subsection{Fixed target}\label{s:fixedtarget}
We describe an instance of the generalized extension problem
for part (a) of Theorem~\ref{t:undecide}, where the target $Y$ is fixed.
The system of equations will be encoded into cell complexes $A$, $W$ and the maps $g\:A \to W$, $f\:A\to Y$.

\heading{Fixed target with $\thedim$ even.} Here $Y=S^\thedim$,
\begin{equation}\label{e:AW}
A=S^{2\thedim-1}_1\vee\cdots \vee S_s^{2\thedim-1},\quad W=S_1^\thedim\vee \cdots \vee S_r^\thedim.
\end{equation}
Then the diagram~\eqref{e:triangle} becomes
\[\xymatrix{
\rightbox{A={}}{S^{2\thedim-1}_1\vee\cdots\vee S^{2\thedim-1}_s} \ar[r]^-f \ar[d]_g & S^\thedim \\
\rightbox{W={}}{S^\thedim_1\vee\cdots\vee S^\thedim_r} \ar@{-->}[ur]_-h &
}\]
According to \eqref{e:wedge} from the discussion of wedge sums in Section~\ref{s:general-constructions},
the homotopy class of $f\:A\to S^\thedim$ is specified completely by the homotopy classes of its restrictions
to the spheres forming $A$. We will use $f_*\mu_\theeqn=[f]\mu_\theeqn\in\pi_{2\thedim-1}(S^\thedim)$ to denote these, where $\mu_\theeqn$ is the homotopy class of the inclusion of the $\theeqn$th sphere $S^{2\thedim-1}_\theeqn$ into $A$. Our particular choice is
\begin{equation}\label{e:f-fixed-target-k-even}
f_*\mu_\theeqn=b_\theeqn[\iota,\iota].
\end{equation}
Similarly, the homotopy class of $g\:A\to W$ is given by its restrictions as
\begin{equation}\label{e:g-fixed-target-k-even}
g_*\mu_\theeqn=\sum_{1\leq i<j\leq r}\ a_{ij}^{(\theeqn)}[\nu_i,\nu_j],
\end{equation}
where $\nu_i$ the homotopy class of the inclusion of the $i$th spere $S^\thedim_i$ into $W$. Finally, let $h\:W\to S^\thedim$ be an arbitrary map and write $h_*\nu_i=x_i\iota\in\pi_\thedim(S^\thedim)$ for some integers $x_i\in\Z$. According to \eqref{e:wedge} again, the diagram
\eqref{e:triangle} commutes up to homotopy iff $(hg)_*\mu_\theeqn=f_*\mu_\theeqn$, i.e.~iff
\begin{equation}\label{e:Sd}
h_*\Bigl(\sum_{1\leq i<j\leq r}\ a_{ij}^{(\theeqn)}[\nu_i,\nu_j]\Big)=b_\theeqn[\iota,\iota]
\end{equation}
for all $1\leq\theeqn\leq s$. Using the naturality and bilinearity of the Whitehead product,
the left hand side equals
$$\sum_{1\leq i<j\leq r}\ a_{ij}^{(\theeqn)}[h_*\nu_i,h_*\nu_j]=
\sum_{1\leq i<j\leq r}\ a_{ij}^{(\theeqn)}[x_i\iota,x_j\iota]=
\sum_{1\leq i<j\leq r}\ a_{ij}^{(\theeqn)}x_ix_j[\iota,\iota].$$
According to Theorem~\ref{c:freegen} the homotopy class $[\iota,\iota]$ is of infinite order, and so
the system of equations \eqref{e:Sd} is equivalent to \eqref{e:qsym}.
We get the following:
\begin{prop}\label{p:fixed-target-k-even}
Let the maps $f\:A\to S^k$ and $g\:A\to W$ be as in~\eqref{e:f-fixed-target-k-even} and \eqref{e:g-fixed-target-k-even} above. Then $f$ can be extended to $X=\rcyl(g)$ if and only if the system \eqref{e:qsym} has a solution.
\end{prop}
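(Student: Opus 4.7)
The strategy is essentially to formalize the computation carried out in the paragraphs preceding the proposition, bundling together three ingredients: the equivalence between the extension problem along $A \hookrightarrow \rcyl(g)$ and the generalized extension problem \eqref{e:triangle}, the parametrization of pointed maps out of a wedge via \eqref{e:wedge}, and the infinite-order property of the Whitehead square from Corollary~\ref{c:freegen} (for $\thedim$ even).

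First I would invoke the equivalence, established right after \eqref{e:triangle} using Lemma~\ref{lem:mapping-cylinder} and Corollary~\ref{c:ext}, that $f$ extends to $X = \rcyl(g)$ if and only if there exists $h \: W \to S^\thedim$ making the diagram \eqref{e:triangle} commute up to homotopy. By \eqref{e:wedge} applied to $W = S^\thedim_1 \vee \cdots \vee S^\thedim_r$, any such $h$ is determined up to homotopy by its restrictions to the summands, i.e.\ by integers $x_i \in \Z$ satisfying $h_*\nu_i = x_i \iota \in \pi_\thedim(S^\thedim) \cong \Z$; conversely every integer vector $(x_1,\ldots,x_r)$ arises from some $h$. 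Similarly, the homotopy $hg \sim f$ is equivalent, again by \eqref{e:wedge}, to the system of equalities $(hg)_*\mu_\theeqn = f_*\mu_\theeqn$ for $\theeqn = 1,\ldots,s$.

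Next I would compute both sides of this system of equalities. By \eqref{e:f-fixed-target-k-even}, the right-hand side is $b_\theeqn[\iota,\iota]$. For the left-hand side, substituting \eqref{e:g-fixed-target-k-even} and using the naturality and bilinearity of the Whitehead product recalled in Section~\ref{s:Whitehead},
\[
(hg)_*\mu_\theeqn \;=\; h_*\!\!\sum_{1 \le i<j \le r}\! a_{ij}^{(\theeqn)}[\nu_i,\nu_j]
\;=\; \sum_{1 \le i<j \le r}\! a_{ij}^{(\theeqn)}[h_*\nu_i,h_*\nu_j]
\;=\; \Bigl(\sum_{1 \le i<j \le r}\! a_{ij}^{(\theeqn)} x_i x_j\Bigr)[\iota,\iota].
\]
Thus the commutativity of \eqref{e:triangle} up to homotopy is equivalent to the equalities $\bigl(\sum_{i<j} a_{ij}^{(\theeqn)} x_i x_j - b_\theeqn\bigr)[\iota,\iota] = 0$ in $\pi_{2\thedim-1}(S^\thedim)$ for every $\theeqn$.

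The last step, and the only place where the parity of $\thedim$ enters, is to strip off $[\iota,\iota]$. Since $\thedim$ is even, Corollary~\ref{c:freegen} (taking $r=1$, $s=0$, or directly Theorem~\ref{t:hopf} via the Hopf invariant $H([\iota,\iota]) = \pm 2 \ne 0$) guarantees that $[\iota,\iota]$ has infinite order in $\pi_{2\thedim-1}(S^\thedim)$, so $m[\iota,\iota] = 0$ forces $m = 0$. The system therefore reduces exactly to \eqref{e:qsym}, proving both directions simultaneously: a solution $(x_1,\ldots,x_r)$ of \eqref{e:qsym} yields an $h$ and hence an extension, while an extension yields such an $h$ and hence, through the integers $x_i$, a solution. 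I expect no real obstacle; the one subtlety to keep explicit is that every tuple of integers really is realized by some $h$, which is immediate from \eqref{e:wedge} combined with $\pi_\thedim(S^\thedim) = \Z\iota$.
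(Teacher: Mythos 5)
Your proposal is correct and follows essentially the same route as the paper: the paper's own argument (the computation preceding the proposition, together with the generalized-extension-problem discussion at the start of Section~\ref{s:constructions}) likewise reduces extendability to the homotopy commutativity of \eqref{e:triangle}, parametrizes $h$ by the degrees $x_i$ via \eqref{e:wedge}, expands $h_*\bigl(\sum_{i<j} a_{ij}^{(\theeqn)}[\nu_i,\nu_j]\bigr)$ by naturality and bilinearity of the Whitehead product, and strips off $[\iota,\iota]$ using its infinite order (Corollary~\ref{c:freegen}/Theorem~\ref{t:hopf}) to obtain exactly \eqref{e:qsym}. No gaps.
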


\heading{Fixed target with $\thedim$ odd.}  The element $[\iota,\iota]\in\pi_{2\thedim-1}(S^\thedim)$ has order $2$, so we cannot use $Y=S^\thedim$. However, leaving $A$, $W$ and $g\:A\to W$ as before, we can take  $Y=S^\thedim\vee S^\thedim$ and specify $f$ by
\begin{equation}\label{e:f-fixed-target-k-odd}
f_*\mu_\theeqn=b_\theeqn[\iota_1,\iota_2],\end{equation}
where $\iota_1$ and $\iota_2$ are inclusions of $S^\thedim$ onto the first and the second summand in $S^\thedim\vee S^\thedim$, respectively. Using
Hilton's theorem (Theorem~\ref{t:hilton})
for $\pi_\thedim(S^\thedim\vee S^\thedim)$, the homotopy class of a general map $h\:W\to Y$ satisfies
\begin{equation}\label{e:g-fixed-target-k-odd}
h_*\circ\nu_i=x_i\iota_1+y_i\iota_2,\quad x_i,y_i\in\mathbb Z.
\end{equation}
Using the fact that $[\iota_1,\iota_2]=-[\iota_2,\iota_1]$, it is easy to show that the commutativity of the diagram \eqref{e:triangle} is equivalent to the system of $s$ equations in $\pi_{2\thedim-1}(S^\thedim\vee S^\thedim)$,
$$\Big(\sum_{i<j}\ a_{ij}^{(\theeqn)}(x_iy_j-x_jy_i)\Big)[\iota_1,\iota_2]+
\Big(\sum_{i,j}\ a_{ij}^{(\theeqn)}x_ix_j\Big)[\iota_1,\iota_1]+
\Big(\sum_{i,j}\ a_{ij}^{(\theeqn)}y_iy_j\Big)[\iota_2,\iota_2]
=b_\theeqn[\iota_1,\iota_2].$$
By Corollary~\ref{c:freegen} of Hilton's theorem the element $[\iota_1,\iota_2]\in\pi_{2\thedim-1}(S^d\vee S^d)$ is of infinite order, while $[\iota_1,\iota_1]$ and $[\iota_2,\iota_2]$ are of order $2$. Multiplying all the equations in \eqref{e:qskew} by $2$, we get an equivalent system, in which all the $a_{ij}^{(\theeqn)}$ are even. For this system, the above equation is exactly the one from \eqref{e:qskew}.
We get the following:
\begin{prop}\label{p:fixed-target-k-odd}
Let the maps $f\:A\to S^k\vee S^k$ and $g\:A\to W$ be as in~\eqref{e:f-fixed-target-k-odd} and \eqref{e:g-fixed-target-k-even} above. Then $f$ can be extended to $X=\rcyl(g)$ if and only if the system \eqref{e:qskew} has a solution.
\end{prop}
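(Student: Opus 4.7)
The plan mirrors the proof of Proposition~\ref{p:fixed-target-k-even}, with extra bookkeeping to account for the order-two elements that appear in $\pi_{2\thedim-1}(S^\thedim)$ when $\thedim$ is odd. First, by the general reduction from the extension problem on $X=\rcyl(g)$ to the generalized extension problem~\eqref{e:triangle}, what we need to show is that there exists $h\:W\to Y=S^\thedim\vee S^\thedim$ with $hg\sim f$ iff the given instance of \eqref{e:qskew} has an integer solution. Applying~\eqref{e:wedge} to $W$ and Hilton's theorem (Theorem~\ref{t:hilton}) to $\pi_\thedim(S^\thedim\vee S^\thedim)$, homotopy classes of maps $h\:W\to Y$ are in bijection with families of integers $(x_i,y_i)_{i=1}^r$ via~\eqref{e:g-fixed-target-k-odd}. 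Applying~\eqref{e:wedge} again, but this time to $A$, reduces the homotopy equation $hg\sim f$ to the system of $s$ equalities $h_*g_*\mu_\theeqn=f_*\mu_\theeqn$ in $\pi_{2\thedim-1}(Y)$.

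Next I would expand $h_*g_*\mu_\theeqn$ using naturality and bilinearity of the Whitehead product, together with the graded-commutativity relation $[\iota_2,\iota_1]=(-1)^{\thedim^2}[\iota_1,\iota_2]=-[\iota_1,\iota_2]$ (which is where oddness of $\thedim$ enters). The pairing $[x_i\iota_1+y_i\iota_2,\,x_j\iota_1+y_j\iota_2]$ then splits into a term $x_ix_j[\iota_1,\iota_1]$, a term $(x_iy_j-x_jy_i)[\iota_1,\iota_2]$, and a term $y_iy_j[\iota_2,\iota_2]$, yielding the displayed identity in the text of the proposition. By Corollary~\ref{c:freegen} applied with $r=2$, $s=0$, the element $[\iota_1,\iota_2]$ generates an infinite-cyclic direct summand of $\pi_{2\thedim-1}(Y)$ that is disjoint from the summands holding $[\iota_1,\iota_1]$ and $[\iota_2,\iota_2]$, so projecting the equation $h_*g_*\mu_\theeqn=b_\theeqn[\iota_1,\iota_2]$ onto that summand recovers precisely the $\theeqn$-th equation of~\eqref{e:qskew}.

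The main obstacle, and the reason the argument is not literally a translation of the even case, is that for $\thedim$ odd the Whitehead squares $[\iota_1,\iota_1]$ and $[\iota_2,\iota_2]$ are nonzero of order two in $\pi_{2\thedim-1}(Y)$. Equating the coefficients in those summands would impose the unwanted parity conditions $\sum_{i<j} a_{ij}^{(\theeqn)}x_ix_j\equiv 0$ and $\sum_{i<j} a_{ij}^{(\theeqn)}y_iy_j\equiv 0\pmod 2$, which in general are not implied by solvability of~\eqref{e:qskew}. The resolution, as the statement already hints, is to observe that multiplying every $a_{ij}^{(\theeqn)}$ and every $b_\theeqn$ by $2$ produces an \emph{equivalent} instance of~\eqref{e:qskew} in which all coefficients $a_{ij}^{(\theeqn)}$ are even; feeding this doubled instance into the construction makes the two parity conditions vacuous. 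With that normalization in place, the equality of $h_*g_*\mu_\theeqn$ and $f_*\mu_\theeqn$ in $\pi_{2\thedim-1}(Y)$ is equivalent to the single equation $\sum_{i<j} a_{ij}^{(\theeqn)}(x_iy_j-x_jy_i)=b_\theeqn$, and the proposition follows by running the argument for every $\theeqn=1,\ldots,s$ simultaneously.
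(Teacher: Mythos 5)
Your proposal is correct and follows essentially the same route as the paper: reduce to the generalized extension problem, parametrize $h$ via Hilton's theorem, expand $h_*g_*\mu_\theeqn$ using naturality, bilinearity and graded commutativity of the Whitehead product, and then neutralize the order-two terms $[\iota_1,\iota_1]$, $[\iota_2,\iota_2]$ by passing to the equivalent doubled instance of \eqref{e:qskew} with even coefficients, exactly as in the paper's argument. The only cosmetic difference is that you spell out the direct-summand bookkeeping (Hilton plus Corollary~\ref{c:freegen}) slightly more explicitly than the paper does.
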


\subsection{Fixed source} \label{s:fixedsource}
The idea for the first step of the proof of Theorem~\ref{t:undecide}~(b) is similar, only the constructions involve attaching cells and also the usage of Hilton's theorem is more substantial.

\heading{Fixed source with $\thedim$ even.} We put
$$A=S^{2\thedim-1},\quad W=S^\thedim$$
where the homotopy class of $g\:A\to W$ is $[g]=[\iota,\iota]$.
A given system of equations \eqref{e:qsym} will be encoded in the target space $Y$ and in the homotopy class of $f\:A\to Y$. The target space $Y$ is
a cell complex obtained from the wedge of spheres $T$ defined in \eqref{e:T} by attaching $(2\thedim)$-cells $e_{ij}$, $1\leq i<j\leq r$ and $e_{ii}$, $1\leq i\leq r$, i.e.
\begin{equation}\label{e:target}
Y=\underbrace{(S_1^\thedim\vee \cdots \vee S_r^\thedim\vee  S^{2\thedim-1}_1\vee\cdots \vee S_s^{2\thedim-1})}_T\cup\bigcup_{1\leq i<j\leq r}e_{ij}\cup\bigcup_{1\leq i\leq r}e_{ii}.
\end{equation}
The attaching maps for the cells are the maps $S^{2\thedim-1}\to T$ whose homotopy classes are, respectively,
\[\varphi_{ij}=[\nu_i,\nu_j]-\sum_{1\leq\theeqn\leq s}a_{ij}^{(\theeqn)}\mu_\theeqn,\ \ \
\varphi_{ii}=[\nu_i,\nu_i].
\]
Denote the images of the homotopy classes $\mu_\theeqn\in \pi_{2\thedim-1}(T)$, $1\leq\theeqn\leq s$ and $\nu_i\in\pi_\thedim(T)$, $1\leq i\leq r$,
 in $Y$ by $\mu'_\theeqn$ and $\nu'_i$, respectively.
Further, take a map $f\:A\to Y$ of the homotopy class
$$[f]=2b_1\mu'_1+2b_2\mu'_2+\dots+2b_s\mu'_s.$$
Since $\pi_\thedim(Y)\cong\pi_\thedim(T)\cong\pi_\thedim(S_1^\thedim)\oplus\cdots\oplus\pi_\thedim(S_r^\thedim)$ by Theorem~\ref{t:hilton}, a general map
$h\:W\to Y$ has a homotopy class
$$[h]=x_1\nu'_1+x_2\nu'_2+\dots+x_r\nu'_r$$
with arbitrary integer coefficients $x_i$.
To show that the commutativity of the diagram \eqref{e:triangle}
 (up to homotopy)
is equivalent to
the satisfaction of the system \eqref{e:qsym},
 we will need the following lemma.

\begin{lemma}\label{l:piY}
Let $Y$ be the cell complex as above. Then the classes $\mu'_\theeqn\in \pi_{2\thedim-1}(Y)$,
$1\leq\theeqn\leq s$, are integrally independent and
\begin{align*}
[\nu'_i,\nu'_j]&=\sum_{1\leq\theeqn\leq s} a_{ij}^{(\theeqn)}\mu'_\theeqn,  \quad 1\leq i<j\leq r,\\
[\nu'_i,\nu'_i]&=0 , \qquad 1\leq i\leq r.
\end{align*}
\end{lemma}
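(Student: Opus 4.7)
The plan is to combine Proposition~\ref{p:attach} (which describes the effect on $\pi_{2\thedim-1}$ of attaching $(2\thedim)$-cells) with the Hilton decomposition of $\pi_{2\thedim-1}(T)$ from Theorem~\ref{t:hilton} and Corollary~\ref{c:freegen}. Since $T$ is a wedge of spheres of dimensions $\ge \thedim \ge 2$, it is $1$-connected, so Proposition~\ref{p:attach} applies and gives
$$
\pi_{2\thedim-1}(Y)\cong \pi_{2\thedim-1}(T)\big/N,
$$
where $N$ is the subgroup generated by the homotopy classes $[\varphi_{ij}]$ and $[\varphi_{ii}]$ of the attaching maps. The inclusion $\iota\:T\hookrightarrow Y$ realizes this quotient, with $\mu'_\theeqn=\iota_*\mu_\theeqn$ and $\nu'_i=\iota_*\nu_i$.

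The Whitehead identities are essentially free: by naturality of the Whitehead product, $[\nu'_i,\nu'_j]=\iota_*[\nu_i,\nu_j]$, and since $\iota_*[\varphi_{ij}]=0$ in $\pi_{2\thedim-1}(Y)$, we get $[\nu'_i,\nu'_j]=\sum_\theeqn a_{ij}^{(\theeqn)}\mu'_\theeqn$ for $i<j$; similarly $[\nu'_i,\nu'_i]=\iota_*[\varphi_{ii}]=0$.

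The main task is to verify the integral independence of $\mu'_1,\dots,\mu'_s$ in the quotient. For this I would use the direct sum decomposition supplied by Theorem~\ref{t:hilton} for $\thedim$ even,
$$
\pi_{2\thedim-1}(T)\cong \underbrace{\bigoplus_{i}\pi_{2\thedim-1}(S_i^\thedim)}_{A_1}\oplus \underbrace{\bigoplus_\theeqn \Z\mu_\theeqn}_{A_2}\oplus \underbrace{\bigoplus_{i<j}\Z[\nu_i,\nu_j]}_{A_3}
$$
(the $\mu_\theeqn$ and the $[\nu_i,\nu_j]$ with $i<j$ each generate their own infinite cyclic summand, by Corollary~\ref{c:freegen}). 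Note that the relators split as $\varphi_{ii}=[\nu_i,\nu_i]\in A_1$ (using $[\nu_i,\nu_i]\in\pi_{2\thedim-1}(S_i^\thedim)$), whereas $\varphi_{ij}\in A_2\oplus A_3$ with a nonzero $A_3$-component.

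Suppose $\sum_\theeqn c_\theeqn \mu_\theeqn\in N$, i.e.\
$
\sum_\theeqn c_\theeqn\mu_\theeqn=\sum_{i<j}k_{ij}\varphi_{ij}+\sum_i\ell_i\varphi_{ii}
$
for integers $k_{ij},\ell_i$. Project to the $A_3$ summand: the left-hand side contributes $0$ and the right-hand side contributes $\sum_{i<j}k_{ij}[\nu_i,\nu_j]$, so by the integral independence of the $[\nu_i,\nu_j]$ in $A_3$ we conclude $k_{ij}=0$ for all $i<j$. Then the right-hand side lies in $A_1$ while the left-hand side lies in $A_2$, forcing every $c_\theeqn=0$. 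The one place that deserves care is the argument for $\thedim$ even that $[\nu_i,\nu_i]$ really lives in $A_1$ (rather than generating an extra summand), but this is exactly the content of Corollary~\ref{c:freegen} combined with the Hopf-invariant statement in Theorem~\ref{t:hopf}. I expect this bookkeeping in the direct sum—correctly separating $A_1$, $A_2$, $A_3$ and tracking which relators act where—to be the only subtle point; everything else is a formal consequence of Hilton's theorem and naturality.
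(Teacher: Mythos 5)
Your argument is correct and is essentially the paper's own proof spelled out: Proposition~\ref{p:attach} presents $\pi_{2\thedim-1}(Y)$ as the quotient of $\pi_{2\thedim-1}(T)$ by the classes of the attaching maps, the Whitehead identities follow by naturality, and the independence of the $\mu'_\theeqn$ is checked against the Hilton decomposition (Theorem~\ref{t:hilton}/Corollary~\ref{c:freegen}), exactly the two ingredients the paper cites. One small attribution fix: that $[\nu_i,\nu_i]$ lies in the summand $A_1$ is just naturality, $[\nu_i,\nu_i]=(\nu_i)_*[\iota,\iota]$, combined with Theorem~\ref{t:hilton}'s description of the summands --- Corollary~\ref{c:freegen} and the Hopf-invariant statement of Theorem~\ref{t:hopf} are not what is needed there, and indeed the infinite order of $[\iota,\iota]$ plays no role in your bookkeeping (which is why the same argument also serves the odd-$\thedim$ case, where the paper reuses the lemma).
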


\begin{proof}  The statement is a consequence of Proposition~\ref{p:attach}
and Corollary~\ref{c:freegen}.
\end{proof}

Using this lemma and the bilinearity and graded commutativity of the Whitehead product,
we compute $[hg]\in\pi_{2\thedim-1}(Y)$ as
\begin{align*}
h_*[g] & =h_*[\iota,\iota]=[h_*\iota,h_*\iota] \\
& =\bigl{[}\sum_{1\leq i\leq r} x_i\nu'_i,\sum_{1\leq j\leq r} x_j\nu'_j\bigr{]}=\sum_{1\leq i,j\leq r}x_ix_j[\nu'_i,\nu'_j] \\
& =2\sum_{1\leq i<j\leq r}x_ix_j[\nu'_i,\nu'_j] +\sum_{1\leq i\leq r} x_i^2[\nu'_i,\nu'_i] \\
& =2\sum_{1\leq i<j\leq r}x_ix_j\bigl{(}\sum_{\theeqn=1}^s a_{ij}^{(\theeqn)}\mu'_\theeqn\bigr{)} \\
& =2\sum_{1\leq \theeqn\leq s} \bigl{(}\sum_{1\leq i<j\leq r}a_{ij}^{(\theeqn)}x_ix_j \bigr{)}\mu'_\theeqn
\end{align*}
Comparing with $[f]$ and using the fact that $\mu'_\theeqn$ are integrally independent, we obtain the system \eqref{e:qsym}.

\heading{Fixed source with odd $\thedim$.}
As in the fixed target case, we resolve the problem of $[\iota,\iota]$ being of order $2$ by replacing it with $[\iota_1,\iota_2].$ In this case, it means that we set $A=S^{2\thedim-1}$, $W=S^\thedim\vee S^\thedim$. The target space $Y$ remains the same as for $\thedim$ even. We take $f$ to be any map with
$$[f]=b_1\mu'_1+b_2\mu'_2+\dots+b_s\mu'_s$$
and $g$ has the advertised homotopy class $[g]=[\iota_1,\iota_2]$, where $\iota_1$ and $\iota_2$ are the homotopy classes of the inclusions of the two copies of $S^\thedim$ into $W=S^\thedim\vee S^\thedim$. The homotopy class of a map
$h\:W\to Y$ is again determined by its restrictions along $\iota_1$, $\iota_2$, namely
$$h_*\iota_1=x_1\nu'_1+x_2\nu'_2+\dots+x_r\nu'_r, \quad
h_*\iota_2=y_1\nu'_1+y_2\nu'_2+\dots+y_r\nu'_r,$$
where the
$x_i$ and $y_i$ can be arbitrary integers.
The composition $[hg]\in\pi_{2\thedim-1}(Y)$ equals
\begin{align*}
h_*[g]=&\bigl{[}\sum_{1\leq i\leq r} x_i\nu'_i,\sum_{1\leq j\leq r} y_j\nu'_j\bigr{]}=\sum_{1\leq i,j\leq r}
x_iy_j[\nu'_i,\nu'_j]\\
&=\sum_{1\leq i<j\leq r}(x_iy_j - x_jy_i)[\nu'_i,\nu'_j]
+\sum_{1\leq i\leq r} x_i y_i[\nu'_i,\nu'_i]\\
&=\sum_{1\leq i<j\leq r}(x_iy_j-x_j y_i)\bigl{(}\sum_{1\leq\theeqn\leq s} a_{ij}^{(\theeqn)}\mu'_\theeqn\bigr{)}\\
&=\sum_{1\leq\theeqn\leq s} \bigl{(}\sum_{1\leq i<j\leq r}a_{ij}^{(\theeqn)}(x_iy_j-x_j x_i) \bigr{)}\mu'_\theeqn
\end{align*}
(again using Lemma~\ref{l:piY}). Since the
$\mu'_\theeqn$ are integrally independent, the comparison
with $[f]$ leads to the system~\eqref{e:qskew}.

Summarizing our findings, for $k$ both even and odd we get the following:
\begin{prop}\label{p:fixed-source}
For each $k\geq 2$ let the maps $f\:A\to Y$ and $g\:A\to W$ be as above.
Then $f$ can be extended to $X=\rcyl(g)$ if and only if there is a solution to the system \eqref{e:qsym} when $k$ is even, or \eqref{e:qskew} when $k$ is odd.
\end{prop}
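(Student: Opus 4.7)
The plan is to combine the mapping--cylinder reduction with the Whitehead product calculation already laid out in the preceding discussion. First I would invoke the equivalence established at the beginning of Section~\ref{s:constructions}: setting $X = \rcyl(g)$ makes extendability of $f$ across $X$ equivalent to finding a map $h\colon W\to Y$ with $h\circ g\simeq f$, i.e., making the triangle~\eqref{e:triangle} commute up to homotopy. This reduces the proposition to a question about homotopy classes of maps between wedges.

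Next I would parametrize $[h]\in[W,Y]_*$ by integer tuples. For $\thedim$ even, $W=S^\thedim$ and Proposition~\ref{p:attach} (attaching $(2\thedim)$-cells does not affect $\pi_\thedim$) combined with Theorem~\ref{t:hilton} gives $\pi_\thedim(Y)\cong\bigoplus_i\Z\langle\nu'_i\rangle$, so $[h]=\sum_i x_i\nu'_i$. For $\thedim$ odd, $W=S^\thedim\vee S^\thedim$ and the wedge bijection~\eqref{e:wedge} lets me encode $[h]$ by two tuples $(x_i),(y_i)$ via the restrictions to $\iota_1$ and~$\iota_2$. The core step is then to expand $h_*[g]\in\pi_{2\thedim-1}(Y)$ using naturality and bilinearity of the Whitehead product (and graded commutativity in the odd case), substitute the identifications from Lemma~\ref{l:piY} for $[\nu'_i,\nu'_j]$ and $[\nu'_i,\nu'_i]$, regroup, and read off the coefficients of the integrally independent $\mu'_\theeqn$ on both sides of $h_*[g]=[f]$. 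In the even case this reproduces the computation already carried out in Section~\ref{s:fixedsource}, yielding \eqref{e:qsym} after canceling the common factor $2$; in the odd case it yields \eqref{e:qskew} directly, with the $[\nu'_i,\nu'_i]$ contributions vanishing because the cells $e_{ii}$ are attached in $Y$ regardless of parity.

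The main obstacle is Lemma~\ref{l:piY}, which encapsulates the topological content. My strategy there is to apply Proposition~\ref{p:attach} to the cell complex $Y$ in~\eqref{e:target}: since $T$ is $1$-connected (as $\thedim\ge 2$), I obtain
\[
\pi_{2\thedim-1}(Y) \;\cong\; \pi_{2\thedim-1}(T)\big/\bigl\langle [\varphi_{ij}],\,[\varphi_{ii}] \bigr\rangle.
\]
By construction the relators $[\varphi_{ij}]=[\nu_i,\nu_j]-\sum_\theeqn a_{ij}^{(\theeqn)}\mu_\theeqn$ and $[\varphi_{ii}]=[\nu_i,\nu_i]$ impose exactly the announced identifications $[\nu'_i,\nu'_j]=\sum_\theeqn a_{ij}^{(\theeqn)}\mu'_\theeqn$ and $[\nu'_i,\nu'_i]=0$ in the quotient. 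Integral independence of the $\mu'_\theeqn$ in $\pi_{2\thedim-1}(Y)$ then follows from Corollary~\ref{c:freegen}: in the even case the $\mu_\theeqn$ and the $[\nu_i,\nu_j]$ (for $i\le j$) are integrally independent in $\pi_{2\thedim-1}(T)$, so the relators lie in a complementary $\Z$-submodule and killing them preserves independence of the $\mu_\theeqn$; in the odd case the same holds with $[\nu_i,\nu_i]$ generating a disjoint $\Z/2$-summand. The minor remaining point is to verify bookkeeping of the factor $2$ in the even case (the doubling of $[f]$ compensates for the $[\nu'_i,\nu'_j]+[\nu'_j,\nu'_i]=2[\nu'_i,\nu'_j]$ pairing in the expansion) and of graded antisymmetry in the odd case (producing the skew combination $x_iy_j-x_jy_i$). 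Once Lemma~\ref{l:piY} is in hand, the proposition follows by direct inspection of the coefficient equations.
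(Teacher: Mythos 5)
Your proposal follows essentially the same route as the paper: reduce to the generalized extension problem with $X=\rcyl(g)$, parametrize $[h]$ via Hilton's theorem, expand $h_*[g]$ by naturality/bilinearity/graded commutativity of the Whitehead product, and compare coefficients of the integrally independent $\mu'_\theeqn$ using Lemma~\ref{l:piY}, whose proof via Proposition~\ref{p:attach} and Corollary~\ref{c:freegen} you spell out correctly. The only cosmetic quibbles are that the fact that attaching $2\thedim$-cells leaves $\pi_\thedim$ unchanged is a standard consequence of cellular approximation rather than of Proposition~\ref{p:attach} itself, and that the relators do not literally lie in a complement of $\langle\mu_\theeqn\rangle$ (they merely generate a subgroup meeting it trivially, which the Hilton direct-sum decomposition gives), but neither affects correctness.
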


\subsection{Anick's $4$-dimensional cell complexes}
\label{s:anick-complexes}

Here we introduce  complexes constructed by Anick~\cite[p.~42]{Anick-homotopyhard} for his hardness result. These are compact $4$-dimensional cell complexes which arise from the wedge $W =S_1^2\vee \cdots \vee S_r^2$ of $r$ copies of $S^2$ by attaching $s$ $4$-cells.
 According to Corollary~\ref{c:twospheres}, the homotopy class of a general attaching map has to be an integral combination of the homotopy classes $\nu_i\circ\eta$ and the Whitehead products $[\nu_i,\nu_j]$,
 where  $\eta$ is
 the homotopy class of the Hopf map $S^3\to S^2$ and $\nu_i$ is the homotopy class of the inclusion $S^2\to W$ on the $i$th copy of $S^2$.

Together with \cite[Proposition~0.18]{Hatcher} this implies that,
 up to homotopy equivalence, a completely general way of attaching
 $4$-cells to $W$ is described by integers
$a_{ij}^{(\theeqn)}$ for $1\leq i\leq j\leq r$ and $\theeqn=1,2,\ldots,s$. Specifically, the $\theeqn$th $4$-cell is attached via
a map $S_\theeqn^3 \to W$ representing  the homotopy class in
$\pi_3(W)$ defined by
\begin{equation*}
\label{e:Anick}
\varphi_\theeqn=\sum_{1\leq i\leq r} a_{ii}^{(\theeqn)} \iota_i\circ \eta
+\sum_{1\leq i<j\leq r}a_{ij}^{(\theeqn)}[\iota_i,\iota_j].
\end{equation*}
Therefore, the \emph{homotopy type} of the resulting \emph{Anick complex},
i.e., its class of homotopy equivalence, which we denote by $Y_{\aa}^4$,
is completely determined by the vector $\aa=(a_{ij}^{(\theeqn)})_{1\leq i\leq j \leq r}^{1\leq\theeqn\leq s}$ of integer coefficients.


In Anick's \#P-hardness result, the input complex $\anick{\aa}$ (whose higher homotopy groups are to be computed)
is encoded very concisely by the vector $\aa$ of integers, represented in binary:
\begin{theorem}[Anick~\cite{Anick-homotopyhard}]\label{t:anick}
It is \#P-hard to compute the rank of $\pi_\then(\anick{\aa})$ for a given integer $\then\geq 2$ (encoded in unary) and a given
integer vector $\aa$ (represented in binary).
\end{theorem}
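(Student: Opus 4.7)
The strategy is to pass to rational homotopy and reduce a known \#P-hard counting problem to enumerating normal forms in a finitely presented quadratic graded algebra obtained from $\anick{\aa}$.

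First, since $\anick{\aa}$ is simply connected and of finite type, $\rank\pi_\then(\anick{\aa})=\dim_{\Q}\bigl(\pi_\then(\anick{\aa})\otimes\Q\bigr)$, so it suffices to control rational homotopy. For the wedge $W=\bigvee_{i=1}^{r} S^2_i$, Milnor's theorem on loop spaces of suspensions gives $\pi_*(\Omega W)\otimes\Q\cong\mathcal{L}(u_1,\ldots,u_r)$, the free graded Lie algebra over $\Q$ on generators $u_i$ of degree $1$ (the Whitehead bracket on $\pi_*(W)$ corresponds to the Samelson bracket on $\pi_*(\Omega W)$, shifted by one). Attaching the $s$ four-cells via the maps $\varphi_\theeqn$ has the rational effect of imposing the quadratic relations $\sum_{1\le i\le j\le r} a_{ij}^{(\theeqn)}[u_i,u_j]=0$ (the Hopf contributions $\iota_i\circ\eta$ are rationally trivial, so the $a_{ii}^{(\theeqn)}$ survive only as coefficients of $[u_i,u_i]$). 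Call the resulting quotient $L_\aa$. By the PBW theorem, its universal enveloping algebra $U(L_\aa)$ is the free associative $\Q$-algebra $\Q\langle u_1,\ldots,u_r\rangle$ modulo these quadratic relations, and the sequence $\rank\pi_\then(\anick{\aa})$ is recoverable in polynomial time from the Hilbert series of $U(L_\aa)$ via the standard identity $\mathrm{Hilb}_{U(L_\aa)}(t)=\prod_{\then\ge 1}(1-t^\then)^{-\dim (L_\aa)_\then}$ (Mobius/logarithm inversion).

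Second, I would reduce a \#P-hard problem, such as counting integer solutions of a system of bilinear/quadratic equations or a suitable permanent-type problem, to the computation of a single coefficient of the Hilbert series of such a quadratically presented graded algebra. The large integers appearing in the \#P-hard instance are packed in binary into the coefficients $a_{ij}^{(\theeqn)}$, keeping the description of $\anick{\aa}$ polynomially small. Because $\then$ is given in unary, the degree-$\then$ part of $U(L_\aa)$ is a $\Q$-vector space of polynomial dimension (spanned by words of length $\then$ in the $u_i$), and its dimension equals the corank of an explicit polynomial-size $\Q$-matrix whose rows are the quadratic relations inserted at each of the $\then-1$ positions and at each length up to $\then$.

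The principal obstacle is designing the relations so that the number of independent degree-$\then$ normal forms in $U(L_\aa)$ is an affine function of the answer to the chosen \#P-hard instance. Here one uses Anick's combinatorial insight that finitely presented quadratic associative algebras are flexible enough to simulate the counting performed by arbitrary polynomial-time Turing machines: the freedom in choosing the $a_{ij}^{(\theeqn)}$ as arbitrary integers encoded in binary allows a rewriting system on words in the $u_i$ that tracks the branching of the computation, with distinct irreducible words in degree $\then$ in bijection with the accepting configurations to be counted. Once the encoding is in place, the conversion $\mathrm{Hilb}_{U(L_\aa)}\leadsto\mathrm{Hilb}_{L_\aa}\leadsto\rank\pi_\then(\anick{\aa})$ is a polynomial-time power series operation, so the \#P-hardness of the counting problem transfers to the rank computation.
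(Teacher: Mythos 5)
You are reviewing a statement that the paper itself does not prove: Theorem~\ref{t:anick} is quoted from Anick, and the paper's own contribution (Theorem~\ref{t:sharpP}) is only the polynomial-time conversion of $\anick{\aa}$ into a genuine simplicial complex. So your text is an attempted reconstruction of Anick's argument. Its skeleton (pass to rational homotopy, identify $U(\pi_*(\Omega\anick{\aa})\otimes\Q)$ with a quadratically presented algebra, and read off ranks from Hilbert series whose coefficients are \#P-hard to compute) is indeed the right general strategy, but as written it has a genuine gap at the decisive point: the reduction itself. The entire content of the theorem is the construction of coefficients $a_{ij}^{(\theeqn)}$ (in binary) and a degree $\then$ (in unary) such that one Hilbert-series coefficient of the quadratic algebra is an affine function of the answer to a \#P-hard counting problem. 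You only assert this (``Anick's combinatorial insight \dots a rewriting system \dots in bijection with the accepting configurations''); no encoding, no rewriting system, no verification of the bijection is given, so nothing is actually proved.

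There are also substantive errors in the topological bridge. Attaching the $4$-cells does \emph{not} simply impose the quadratic relations: the map $\Q\langle u_1,\ldots,u_r\rangle/(\text{relations})\to H_*(\Omega\anick{\aa};\Q)$ need not be an isomorphism. For instance, attaching two $4$-cells along the same class $\varphi$ gives, by Proposition~\ref{p:attach}, a space homotopy equivalent to $(W\cup_\varphi e^4)\vee S^4$, whose extra rational homotopy the quadratic algebra (which sees only one relation) cannot detect; in the Adams--Hilton model this is the phenomenon of the degree-$3$ generators $y_\theeqn$ surviving in loop homology when relations are redundant, and controlling this discrepancy is a real part of Anick's proof that your sketch skips. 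Smaller but real mistakes: the Hopf map is \emph{not} rationally trivial --- it generates $\pi_3(S^2)\otimes\Q$, and rationally $[\iota_i,\iota_i]=\pm2\,\iota_i\circ\eta$, which is the actual reason the $a_{ii}^{(\theeqn)}$ contribute square terms; the PBW/Hilbert identity for a graded Lie algebra must use exterior factors $(1+t^\then)^{\dim(L_\aa)_\then}$ for the odd-degree parts rather than $(1-t^\then)^{-\dim(L_\aa)_\then}$ throughout; and the degree-$\then$ piece of $\Q\langle u_1,\ldots,u_r\rangle$ has dimension $r^{\then}$, exponential in the unary $\then$, so your claim that it is ``of polynomial dimension'' and computable as the corank of a polynomial-size matrix is false (this does not hurt a hardness statement, but it cannot be used the way you state it).
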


In Section~\ref{s:poly-constructions}, we will show that, given $\aa$, we can construct, in polynomial time, a finite $4$-dimensional
simplicial complex homotopy equivalent to $\anick{\aa}$. Together with Anick's result, this will imply Theorem~\ref{t:sharpP}.

\section{Simplicial constructions}
\label{s:poly-constructions}

In this section, we prove that the constructions of cell complexes
and cellular maps
from the last section can be converted into homotopy equivalent
finite simplicial complexes and simplicial maps.
Moreover, we exhibit algorithms for constructing such
simplicial sets and maps that run in time polynomial in the
encoding size of the integer vector
$a_{ij}^{(\theeqn)}$ (and possibly $b_\theeqn$)
represented in binary.  The constructions involve only simplicial products,
attachments (in particular, mapping cylinders), quotients,  and subdivisions,
which are all algorithmic.

The polynomial bound for the running time is needed only for Anick's
space, where polynomial running time is important; for the undecidability
results we could use less efficient (and simpler) techniques.
However, there is almost no overall saving in constructing only
Anick's space with a polynomial bound and doing the other constructions
more wastefully, since all of the involved spaces are similar.
Moreover, we expect the tools developed here to be useful, e.g.,
for future NP-hardness or \#P-hardness results, where polynomiality
of the constructions is crucial, of course.

Let us denote by $\Sigma^\sphdim$ a ``model'' of the sphere $S^\sphdim$ as a simplicial set
with only two nondegenerate simplices, one in dimension $0$ and the other in dimension~$\sphdim$.

\subsection{Constructing the sum of several maps $S^\sphdim\to Y$.}
\label{s:sum}
In this short section we describe how, given simplicial
maps $f_1,\ldots,f_\thelen\:\Sigma^\sphdim\to Y$, we can construct a
simplicial representative of the sum
$[f_1]+\cdots+[f_\thelen]\in\pi_\sphdim(Y)$.
To this end,
we have to change the domain to a simplicial set with a larger number of
simplices.

We define the simplicial set $\chsimp{\sphdim}{\thelen}$, which is a union of $\thelen$ copies of $\Delta^\sphdim$, where the $i$th copy is glued by its
$\partial_1$-face to the $\partial_0$-face of the $(i+1)$st copy. The union of the remaining faces (the $\partial_1$-face of the first copy, the
$\partial_0$-face of the last copy and all the $\partial_i$-faces with $i>1$)
is denoted by $\partial \chsimp{\sphdim}{\thelen}$.
Here is a picture of $\chsimp{2}{\thelen}$:
\[\xy
(0,0)*{\scriptstyle\bullet};
(30,20)*{\scriptstyle\bullet}="a"**\dir{=}?(.5)*\dir{>},(20,0)*{\scriptstyle\bullet}**\dir{=};?(.5)*\dir{>};"a"**{}?(.33)*{\scriptstyle 1},(20,0);
"a"**\dir{-}?(.5)*\dir{>},(40,0)*{\scriptstyle\bullet}="b"**\dir{:};?(.5)*\dir{>};"a"**{}?(.33)*{\scriptstyle\cdots},"b";
"a"**\dir{-}?(.5)*\dir{>},(60,0)*{\scriptstyle\bullet}="c"**\dir{=};?(.5)*\dir{>};"a"**{}?(.33)*{\scriptstyle\thelen},"c";
"a"**\dir{=}?(.5)*\dir{>}
\endxy\]
with the double arrows denoting the boundary $\partial\chsimp{2}{\thelen}$. Another point of view is, that $\chsimp{1}{\thelen}$ is a chain of $\thelen$ copies of the $1$-simplex $\Delta^1$ and each $\chsimp{\sphdim}{\thelen}$ is a cone over $\chsimp{\sphdim-1}{\thelen}$.

There is a simplicial map $\chsimp{\sphdim}{\thelen}\to\Delta^\sphdim$ that sends the first copy of $\Delta^\sphdim$ in $\chsimp{\sphdim}{\thelen}$ onto $\Delta^\sphdim$ by the identity, while the rest is sent to the degeneracy of the $\partial_0$-face of $\Delta^\sphdim$. It induces a simplicial map
\[q\:\chsimp{\sphdim}{\thelen}/\partial \chsimp{\sphdim}{\thelen}\to\Sigma^\sphdim,\]
which is a homotopy equivalence (it is easy to see this, e.g.,
from homology). There is another simplicial map that collapses the whole $(\sphdim-1)$-skeleton of $\chsimp{\sphdim}{\thelen}$. The map factors through $\chsimp{\sphdim}{\thelen}/\partial \chsimp{\sphdim}{\thelen}$ as
\[\xymatrix{\chsimp{\sphdim}{\thelen}\ar[r]&\chsimp{\sphdim}{\thelen}/\partial \chsimp{\sphdim}{\thelen}\ar[r]^{\delta}&\Sigma^\sphdim\vee\cdots\vee\Sigma^\sphdim.} \]
We specify a simplicial map $f\:\Sigma^\sphdim\vee\cdots\vee\Sigma^\sphdim$ by mapping the $i$th copy of $\Sigma^\sphdim$ to $Y$ by $f_i$.

The maps $q$, $\delta$, and $f$ fit into a diagram
\[\xymatrix{
\chsimp{\sphdim}{\thelen}/\partial \chsimp{\sphdim}{\thelen} \ar[r]^-\delta \ar[d]_-q^-\sim & \Sigma^\sphdim\vee\cdots\vee\Sigma^\sphdim \ar[r]^-f & Y \\
\Sigma^\sphdim \ar@/_5pt/@{-->}[rru]
}\]
Since $q$ has a continuous homotopy inverse, there is a unique homotopy class of maps $\Sigma^\sphdim\to Y$ extending $[f\delta]$ up to homotopy, namely the homotopy class of $[f_1]+\cdots+[f_\thelen]$.


By the naturality of the subdivision, we also have maps
\[q\:\sdit(\chsimp{\sphdim}{\thelen}/\partial \chsimp{\sphdim}{\thelen})\to\Sigma^\sphdim,\ \ \ \delta\:\sdit(\chsimp{\sphdim}{\thelen}/\partial \chsimp{\sphdim}{\thelen})\to\sdit(\Sigma^\sphdim)\vee\cdots\vee\sdit(\Sigma^\sphdim)\]
(the map $q$ is the composition of the subdivision of the original $q$ with the iterated last vertex map $\sdit(\Sigma^\sphdim)\to\Sigma^\sphdim$) that will serve to add representatives $f_1,\ldots,f_\thelen\:\sdit(\Sigma^\sphdim)\to Y$.

\subsection{Generalized mapping cylinders}\label{s:gmc}

In the above approach, in order to construct simplicial maps $\Sigma^\sphdim\to Y$, we replaced the domain $\Sigma^\sphdim$ by a homotopy equivalent simplicial set. This will be very useful for the proof of part (a) of Theorem~\ref{t:undecide}. For part (b), the domain has to be left unchanged,
 and thus a different construction has to be used.\footnote{There is a further issue with the subdivision---it is not polynomial. The construction of a
representative of a multiple $b[f]$ of a map $f\:\Sigma^\sphdim\to Y$ requires $b$ simplices, and this number is exponential in the number of bits of~$b$.} It is
 roughly ``dual'' to the previous one: it
 replaces the target $Y$ by a homotopy equivalent simplicial set.

Thus, instead of subdividing the sphere $\Sigma^\sphdim$, we will replace the target space $Y$ by a ``generalized mapping cylinder''. This solution works also for domains other than $\Sigma^\sphdim$. Thus, for a map $f\:X\to Y$, we will be interested in diagrams, commutative up to homotopy, of the following form.
\[\xymatrix{
& Y \ar[d]^-{i_Y}_-\sim \\
X \ar[r]_-{i_X} \ar@{-->}[ru]^-f & M
}\]

\begin{defi}
Let $M$ be a pointed simplicial set with two simplicial subsets $X,Y\subseteq M$ containing the basepoint of $M$. Let $i_X\colon X\rightarrow M$ and $i_Y\colon Y\rightarrow M$ be the corresponding inclusion maps, and let $f\colon|X|\rightarrow|Y|$ be a pointed continuous map. We say that $M$ is a \emph{generalized mapping cylinder} for $f$, with \emph{upper rim} $X$ and
\emph{lower rim} $Y$, if $i_Y$ is a homotopy equivalence and $i_X\sim i_Y\circ f$. We denote this situation by $M\:X\gmc{f}Y$.
\end{defi}

We remark that the above definition depends only on the homotopy class of $f$; we may thus say that $M$ is a generalized mapping cylinder for the homotopy class $[f]$.

By Lemma~\ref{lem:mapping-cylinder} and the remark
following it, the reduced mapping cylinder $\rcyl(f)$ of $f\colon
X\rightarrow Y$ is an example of such a generalized mapping cylinder with upper rim $X$ and lower rim $Y$.
Moreover, if $f$ is a homotopy equivalence with a homotopy inverse $g$,
one can easily see from the definition that a generalized mapping cylinder $M$ for $f$ is also a generalized mapping cylinder $M^{\operatorname{op}}$ for $g$ with upper rim $Y$ and lower rim $X$ (i.e., the roles of upper and lower rim
are interchanged).\footnote{When $f$ is injective, one can use $Y$ as a generalized mapping cylinder for $f$.}

The important property of generalized mapping cylinders, that we are going to use heavily, is that they may be used for attaching cells.

\begin{prop}\label{p:gmc_cell_attach}
Let $M\:\Sigma^\sphdim_1\vee\cdots\vee\Sigma^\sphdim_\thelen\gmc fY$ be a generalized mapping cylinder for a pointed map $f$, whose restriction to the $i$th summand is $f_i\:|\Sigma^\sphdim_i|\to|Y|$. Then the composition $Y\xrightarrow{i_Y}M\xrightarrow{\operatorname{proj}}M/(\Sigma_1^\sphdim\vee\cdots\vee\Sigma_\thelen^\sphdim)$ extends to a homotopy equivalence
\[Y\cup(e_1^{\sphdim+1}\cup\cdots\cup e_\thelen^{\sphdim+1})\xrightarrow{\ \sim\ } M/(\Sigma_1^\sphdim\vee\cdots\vee\Sigma_\thelen^\sphdim)\]
where the cell $e_i^{\sphdim+1}$ on the left is attached to $Y$ along the map $f_i$.
\end{prop}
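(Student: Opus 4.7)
Write $X:=\Sigma^\sphdim_1\vee\cdots\vee\Sigma^\sphdim_\thelen$, so that the space on the left-hand side of the desired equivalence is precisely the reduced mapping cone $\rcone(f)=\rcyl(f)/X$: indeed, $\rcone(\Sigma^\sphdim_i)$ is homeomorphic to the disk $D^{\sphdim+1}$ with boundary $\Sigma^\sphdim_i\cong S^\sphdim$, and in the pointed setting one has $\rcone(\bigvee_i\Sigma^\sphdim_i)=\bigvee_i\rcone(\Sigma^\sphdim_i)$. Thus the pointed mapping cone is $Y$ with a cell $e^{\sphdim+1}_i$ attached along each $f_i$, exactly matching the left-hand side. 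It therefore suffices to produce a homotopy equivalence $\rcone(f)\xrightarrow{\sim}M/X$ whose restriction to $Y\subseteq\rcone(f)$ agrees with $Y\xrightarrow{i_Y}M\to M/X$.

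The plan for constructing this equivalence is to build first an auxiliary map $\Phi\colon\rcyl(f)\to M$ that descends to the desired $\bar\Phi$ on quotients. Pick a pointed homotopy $H\colon X\times[0,1]\to M$ with $H_0=i_Y\circ f$ and $H_1=i_X$, provided by the generalized mapping cylinder structure. With the convention $\rcyl(f)=\bigl(X\times[0,1]\sqcup Y\bigr)/{((x,0)\sim f(x))}$, set $\Phi(x,t):=H(x,t)$ on the cylinder part and $\Phi|_Y:=i_Y$; the identification $(x,0)\sim f(x)$ is respected because $H_0(x)=i_Y(f(x))$. By construction $\Phi$ carries $X\cong X\times\{1\}\subseteq\rcyl(f)$ into $X\subseteq M$ (and in fact by the identity, since $H_1=i_X$), while its restriction to $Y$ is the homotopy equivalence $i_Y$. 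Since $Y\hookrightarrow\rcyl(f)$ is itself a homotopy equivalence (Lemma~\ref{lem:mapping-cylinder}(a) applied to the reduced cylinder, together with the remarks immediately following it), two-out-of-three forces $\Phi$ to be a homotopy equivalence. Because $\Phi$ sends $X$ into $X$, it induces a continuous map $\bar\Phi\colon\rcyl(f)/X\to M/X$, which by construction restricts on $Y$ to the required composition $Y\xrightarrow{i_Y}M\to M/X$.

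The remaining step, and the one I expect to be the main obstacle, is to upgrade $\bar\Phi$ from a continuous map to a homotopy equivalence. Both inclusions $X\hookrightarrow\rcyl(f)$ and $X\hookrightarrow M$ are inclusions of simplicial subsets, hence cofibrations on geometric realization, and $\Phi$ is a homotopy equivalence restricting to the identity on the common subspace $X$. Under these hypotheses the induced map on cofibers is again a homotopy equivalence, by the standard homotopy invariance of cofibers---provable either via the gluing lemma for homotopy pushouts along cofibrations, or via the five lemma applied to the long exact sequences of the two cofibrations combined with Whitehead's theorem. Combined with the identification $\rcone(f)=Y\cup(e_1^{\sphdim+1}\cup\cdots\cup e_\thelen^{\sphdim+1})$ from the first paragraph, this yields the desired homotopy equivalence extending $Y\xrightarrow{i_Y}M\to M/X$; once the cofiber invariance is invoked, the rest is straightforward bookkeeping about the reduced cone on a wedge of spheres.
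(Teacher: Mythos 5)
Your proof is correct, and it reaches the paper's conclusion by a mildly different route. The paper argues entirely by citation: it identifies the left-hand side with the mapping cone of $f$ and chains three standard facts \cite[Theorems~I.14.19, I.14.18 and VII.1.6]{Bredon-at} --- the cone of $f$ is homotopy equivalent to the cone of $i_Y\circ f$ because $i_Y$ is a homotopy equivalence, to the cone of $i_X$ because $i_X\sim i_Y\circ f$, and finally to $M/X$ because $i_X$ is the inclusion of a subcomplex --- adding a one-line remark that all these equivalences respect $Y$. You instead build the comparison map explicitly: the pointed homotopy $H$ from $i_Y\circ f$ to $i_X$ assembles into $\Phi\:\rcyl(f)\to M$, two-out-of-three shows $\Phi$ is a homotopy equivalence, and a single application of the gluing lemma for cofibrations passes to the quotients. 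What your version buys is that the claim that the resulting equivalence extends $\operatorname{proj}\circ i_Y$ is visible by construction, rather than needing the paper's closing remark. Two small caveats. First, for $\Phi$ to descend to the \emph{reduced} cylinder you must take the homotopy $i_X\sim i_Y\circ f$ to be pointed (fine under the paper's conventions on pointed homotopies), but note that the formula you display for $\rcyl(f)$ is actually that of the unreduced cylinder; with the unreduced cone you would pick up an extra edge and obtain the left-hand side only after an easy collapse. Second, your fallback argument via the five lemma plus Whitehead's theorem is not available in the stated generality, since the proposition makes no simple-connectivity assumption on the quotients ($\sphdim$ could be $1$ and $Y$ arbitrary); the gluing-lemma route you list first is the one to rely on.
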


\begin{proof}
Put $X=\Sigma^\sphdim_1\vee\cdots\vee\Sigma^\sphdim_\thelen$. Then the space from the statement, obtained from $Y$ by attaching cells, is the mapping cone of $f$. By \cite[Theorem~I.14.19]{Bredon-at} the mapping cone of $f\:X\to Y$ is homotopy equivalent to that of $i_Yf\:X\to M$ (since $i_Y$ is a homotopy equivalence). Further, by \cite[Theorem~I.14.18]{Bredon-at} it is also homotopy equivalent to the mapping cone of $i_X\:X\to M$ (since $i_Yf\sim i_X$). By \cite[Theorem~VII.1.6]{Bredon-at} the mapping cone of $i_X$ is homotopy equivalent to $M/X$.

All the involved maps respect $Y$ (which is naturally a subspace of all the mapping cones and also maps by $\operatorname{proj}i_Y$ to the quotient $M/X$), proving that the resulting homotopy equivalence is indeed an extension of the composition $\operatorname{proj}i_Y$.
\end{proof}


Generalized mapping cylinders can be composed in an obvious
way:
\begin{lemma}
\label{lem:cylinder-composition} Let $f\colon X\rightarrow
Y$ and $g\colon Y\rightarrow Z$ be pointed continuous maps, and let $M$
and $N$ be generalized mapping cylinders for $f$ and $g$,
respectively. Let $N\circ M:=N\cup_Y M$ be the simplicial set
obtained by identifying the lower rim of $M$ with the upper
rim of $N$. Then $N\circ M$ is a generalized mapping
cylinder for $g\circ f$.
\end{lemma}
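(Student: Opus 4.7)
The plan is to verify the two requirements from the definition of a generalized mapping cylinder for $g\circ f\colon X\to Z$, namely that the inclusion $i_Z\colon Z\hookrightarrow N\circ M$ is a homotopy equivalence and that $i_X\simeq i_Z\circ g\circ f$ as maps $X\to N\circ M$. The copies of $X$, $Y$, $Z$ inside $N\circ M$ are the images of the upper rim of $M$, the identified lower-rim-of-$M$/upper-rim-of-$N$, and the lower rim of $N$, respectively.

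For the homotopy-equivalence part, the key point is that $Y\hookrightarrow M$ and $Z\hookrightarrow N$ are inclusions of simplicial subsets that are homotopy equivalences; by the homotopy extension property for such pairs, each is in fact a strong deformation retract. First I would extend the strong deformation retraction of $M$ onto $Y$ by the identity on $N$; this is well-defined on $N\circ M=N\cup_Y M$ precisely because the retraction fixes $Y$ pointwise, and it exhibits $N$ as a strong deformation retract of $N\circ M$. Composing with the strong deformation retraction of $N$ onto $Z$ gives a deformation retraction of $N\circ M$ onto $Z$, so $i_Z$ is a homotopy equivalence.

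For the homotopy $i_X\simeq i_Z\circ g\circ f$, I would chain the two given homotopies through the common subspace $Y$. Let $j_M\colon M\hookrightarrow N\circ M$ and $j_N\colon N\hookrightarrow N\circ M$ denote the inclusions, and write $i^M_X,i^M_Y,i^N_Y,i^N_Z$ for the rim inclusions into $M$ and $N$. Postcomposing the homotopy $i^M_X\simeq i^M_Y\circ f$ (valid in $M$) with $j_M$ and the homotopy $i^N_Y\simeq i^N_Z\circ g$ (valid in $N$) with $j_N\circ(-)\circ f$, one obtains in $N\circ M$:
\[
i_X \;=\; j_M\circ i^M_X \;\simeq\; j_M\circ i^M_Y\circ f \;=\; j_N\circ i^N_Y\circ f \;\simeq\; j_N\circ i^N_Z\circ g\circ f \;=\; i_Z\circ g\circ f,
\]
where the middle equality uses the pushout identification $j_M\circ i^M_Y=j_N\circ i^N_Y$.

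I do not expect a serious obstacle. The only subtle point is invoking the strong form of the deformation retractions so that they glue along $Y$ inside the pushout $N\cup_Y M$; this is handled cleanly by the homotopy extension property for inclusions of simplicial subsets, already used implicitly in Section~\ref{s:general-constructions}.
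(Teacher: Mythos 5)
Your proposal is correct and follows essentially the same route as the paper: the paper likewise chains the two rim homotopies through the strictly commuting square at $Y$ to get $i_X\sim i_Z\circ g\circ f$, and proves the homotopy-equivalence condition by taking a deformation retraction $r\colon M\to Y$ (which exists because the rim inclusion is a cofibration and a homotopy equivalence) and gluing it with the identity on $N$ over $Y$, exactly as you do. The only cosmetic difference is that the paper stops after showing the inclusion $N\hookrightarrow N\cup_Y M$ is a homotopy equivalence and then uses that $Z\hookrightarrow N$ already is one, whereas you compose the two retractions explicitly onto $Z$.
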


\begin{proof} Consider the diagram
$$
\xymatrix@R=5pt@C=20pt{
X \ar[dd]_{f} \ar[dr]^{i_X} \\
& M \ar[dr]^{i_M} \\
Y \ar[dd]_{g} \ar[ru]^{i_Y} \ar[dr]^{j_Y} & & \leftbox{N}{{}\cup_{Y}M} \\
& N \ar[ru]_{i_N} \\
Z \ar[ru]_{j_Z}
}
$$
where $i_X,\  i_Y,\ j_Y,\ i_Z,\ i_M,\ i_N$ are inclusions, both triangles commutes up to homotopy and the square commutes strictly. Consequently, the triangle formed by the spaces $X$, $Z$ and $N\cup_{Y}M$ commutes up to homotopy, too. To show that $N\cup_{Y}M$ is a generalized mapping cylinder for $gf$, it suffices to prove that the inclusion $i_N$ is a homotopy equivalence.

It is well known (see \cite[Theorem~4.5]{Hatcher}) that to every inclusion $i_Y\:Y\to M$ which is a homotopy equivalence there is a deformation retraction $r\:M\to Y$. Then the map
$h\:N\cup_{Y}M\to N$ defined as $j_Y\circ r$ on $M$ and as the identity on $N$ is a homotopy inverse to $i_N$.
\end{proof}

We will also need simplicial maps representing (the homotopy classes of) a constant number of specific maps, such as the Whitehead product $[\iota,\iota]\:S^{2d-1}\to S^d$ of the identity on $S^d$ with itself, the Whitehead product $[\iota_1,\iota_2]\:S^{2d-1}\to S^d_1\vee S_2^d$
of the inclusions $S_i^d\to S^d_1\vee S^d_2$ and the Hopf map $\eta\:S^3\to S^2$. In each case, it is possible
 to construct these explicitly, but  we will  use the following general lemma:

\begin{lemma}
\label{l:constant-gmc} Let $X$ and $Y$ be finite simplicial sets and let $f \colon |X| \rightarrow |Y|$ be an
arbitrary but fixed pointed continuous map. Then there exists a generalized mapping cylinder $X\gmc{f}Y$. It is of dimension $\max\{\dim X+1,\dim Y\}$.
\end{lemma}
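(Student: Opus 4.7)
The plan is to combine the simplicial approximation theorem with the composition property of generalized mapping cylinders (Lemma~\ref{lem:cylinder-composition}) and the fact that reduced simplicial mapping cylinders already yield such GMCs.

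First, I would apply Theorem~\ref{t:s-apx-sset} to $f\colon |X|\to|Y|$ to obtain a positive integer $\sddim$ and a simplicial map $g\colon\sdit(X)\to Y$ such that $|g|\simeq f\circ|\lastvertex_X^{\sddim}|$, where $\lastvertex_X^{\sddim}\colon\sdit(X)\to X$ is the iterated last vertex map. We may assume all spaces and maps are pointed (the basepoint of $X$ is a vertex, and both subdivision and simplicial approximation can be arranged to preserve it; if needed, approximate on the subcomplex consisting of the basepoint first). Form the reduced simplicial mapping cylinder $M_2:=\rcyl(g)$. By the discussion following Lemma~\ref{lem:mapping-cylinder}, $M_2$ is a generalized mapping cylinder $\sdit(X)\gmc{g}Y$, with upper rim $\sdit(X)$ and lower rim $Y$.

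Next, I would produce a GMC whose upper rim is $X$ and whose lower rim is $\sdit(X)$. Since $\lastvertex_X^{\sddim}$ is a (simplicial) homotopy equivalence, the reduced mapping cylinder $\rcyl(\lastvertex_X^{\sddim})$ is a GMC $\sdit(X)\gmc{\lastvertex_X^\sddim} X$; by the ``op'' remark right after the definition of generalized mapping cylinders, the same underlying simplicial set, viewed with the roles of the two rims swapped, is a GMC $M_1\colon X\gmc{h}\sdit(X)$ for some homotopy inverse $h$ of $\lastvertex_X^{\sddim}$. Applying Lemma~\ref{lem:cylinder-composition} to $M_1$ and $M_2$, the composite $M:=M_2\circ M_1=M_2\cup_{\sdit(X)}M_1$ is a GMC $X\gmc{g\circ h}Y$. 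Finally, $g\circ h\simeq f\circ\lastvertex_X^{\sddim}\circ h\simeq f\circ\id_X=f$, so $M$ is the required generalized mapping cylinder for $f$.

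For the dimension bound, note that $\dim\sdit(X)=\dim X$, since normal subdivision preserves dimension. Hence $\dim M_1=\max\{\dim X+1,\dim X\}=\dim X+1$ and $\dim M_2=\max\{\dim X+1,\dim Y\}$, and since $M_1$ and $M_2$ are glued along $\sdit(X)$ without introducing new simplices, $\dim M=\max\{\dim M_1,\dim M_2\}=\max\{\dim X+1,\dim Y\}$, as claimed.

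The only genuinely delicate step is the invocation of simplicial approximation together with the need to keep track of basepoints, but since $(X,x_0)$ is pointed with $x_0$ a vertex, the constructions $\sdit$, $\lastvertex$, $\rcyl$, and the approximation $g$ all descend to the pointed setting, and nothing more than bookkeeping is required.
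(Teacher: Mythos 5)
Your proof is correct and follows essentially the same route as the paper: simplicial approximation (Theorem~\ref{t:s-apx-sset}) yields $g\:\sdit(X)\to Y$, the reduced cylinder of the iterated last vertex map is reversed via the ``op'' remark, and Lemma~\ref{lem:cylinder-composition} glues the two cylinders into a generalized mapping cylinder for $f$. Your explicit verification of the dimension bound is a nice addition, since the paper asserts it without spelling out the count.
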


The point here is that, in contrast with Theorem~\ref{t:s-apx-sset}, we can prescribe the exact triangulations of the upper and lower rim,
which will make it easy to compose the resulting generalized mapping cylinders.

\begin{proof} By the simplicial approximation theorem for simplicial sets
(Theorem~\ref{t:s-apx-sset}), there exist an iterated barycentric subdivision $X'=\sdit(X)$ of $X$ and a simplicial map $g\colon X'\rightarrow Y$
 homotopic to $f\circ|\ell|$, where $\ell\:X'\to X$
is the natural homotopy equivalence (the iterated last vertex map).

Let $M:=\rcyl(\ell)$ and $N:=\rcyl(g)$ be the corresponding reduced simplicial mapping cylinders.
Since $\ell$ is a homotopy equivalence, we can also view $M$ as a generalized mapping cylinder $M^\mathrm{op}$ for a homotopy inverse
$h\:|X|\to|X'|$ of $|\ell|$, with upper rim $X$ and lower rim $X'$.
Thus, $N\circ M^\mathrm{op}$ is a generalized mapping cylinder for $|g|\circ h\sim f\circ|\ell|\circ h\sim f$.
\end{proof}

The following proposition plays
a crucial role in our simplicial constructions:

\begin{prop}\label{p:poly-gmc} Let $Y$ be a finite simplicial set and let $f_1,\dots,f_\thelen\:\sdit\Sigma^\sphdim\to Y$ be given simplicial maps.
 Then there is an algorithm that, given an integer
vector $\cc=(c_1,\ldots,c_\thelen)$, constructs
a generalized mapping cylinder $\Sigma^\sphdim\gmc{f}Y$ for
the homotopy class
\[[f]=c_1[f_1]+\cdots+c_\thelen[f_\thelen]\in\pi_\sphdim(Y)=[\Sigma^\sphdim,Y]_*,\]
in time polynomial in the (binary) encoding size of~$\cc$.
\end{prop}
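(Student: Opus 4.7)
The key challenge is the polynomial dependence on the \emph{binary} encoding of $\cc$: naively representing $c[f]$ by $|c|$ copies of $f$ would give size exponential in the encoding. The plan exploits two observations. First, by Lemma~\ref{lem:cylinder-composition}, the size of a composite $N\circ M$ of generalized mapping cylinders is the sum of sizes of $N$ and $M$ (minus the overlap on the common rim), so GMC composition is \emph{additive}. Second, by Lemma~\ref{lem:scalar-multiplication-degree}, precomposing a representative of $[f]\in \pi_\sphdim(Y)$ with a degree-$c$ self-map $\Sigma^\sphdim\to \Sigma^\sphdim$ yields a representative of $c[f]$. Thus the task reduces to building, for each $c\in\Z$, a GMC $D_c\colon \Sigma^\sphdim\gmc{\deg c}\Sigma^\sphdim$ of size polynomial in $\log|c|$, and then gluing and composing it with GMCs $M_i\colon \Sigma^\sphdim\gmc{f_i}Y$ representing the individual $[f_i]$'s.

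First I would construct several constant-size building blocks via Lemma~\ref{l:constant-gmc}: a GMC $D\colon \Sigma^\sphdim\gmc{d}\Sigma^\sphdim$ for a degree-$2$ self-map $d$, and a GMC $N\colon \Sigma^\sphdim\gmc{n}\Sigma^\sphdim$ for a degree-$(-1)$ map $n$. For every positive integer $\thelen'$ I would also build a ``pinch'' GMC $P_{\thelen'}\colon \Sigma^\sphdim\gmc{\iota_1+\dots+\iota_{\thelen'}}\bigvee_{\thelen'}\Sigma^\sphdim$ of size $O(\thelen')$, namely by composing the reversed reduced mapping cylinder $\rcyl(q)^{\mathrm{op}}$, where $q\colon \chsimp{\sphdim}{\thelen'}/\partial\chsimp{\sphdim}{\thelen'}\to \Sigma^\sphdim$ is the homotopy equivalence from Section~\ref{s:sum}, with $\rcyl(\delta)$ for the accompanying simplicial map $\delta$. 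Finally, each input map $f_i\colon \sdit\Sigma^\sphdim\to Y$ is converted into a GMC $M_i\colon \Sigma^\sphdim\gmc{f_i}Y$ of size polynomial in the input by composing $\rcyl(f_i)$ with $\rcyl(\lastvertex_{\Sigma^\sphdim}^{\sddim})^{\mathrm{op}}$.

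Iterating Lemma~\ref{lem:cylinder-composition} $j$ times on $D$ produces a GMC $D_{2^j}\colon \Sigma^\sphdim\gmc{\deg 2^j}\Sigma^\sphdim$ of size $O(j)$; crucially, doubling is implemented by composition on the sphere-to-sphere side (adding a constant) rather than by duplicating the whole cylinder. For $|c|=\sum_{j\in S}2^j$ I would wedge all $D_{2^j}$, $j\in S$, along their common lower rim $\Sigma^\sphdim$ (obtaining a GMC with upper rim $\bigvee_{|S|}\Sigma^\sphdim$) and precompose with $P_{|S|}$ to obtain $D_{|c|}$ of size $O(\log^2|c|)$; for negative $c$, postcompose with $N$. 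Then $M_i\circ D_{c_i}$ is a GMC for $c_i[f_i]$ by Lemma~\ref{lem:scalar-multiplication-degree}, and wedging all $\thelen$ of these along the common lower rim $Y$ and precomposing with $P_\thelen$ produces the desired GMC $\Sigma^\sphdim\gmc{f}Y$ for $[f]=\sum_i c_i[f_i]$.

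Every operation used (reduced mapping cylinders, composition along a common rim, wedges along a common subcomplex, and a single subdivision in bounded dimension) is a standard polynomial-time simplicial construction, and the additivity of composition ensures that the total size is polynomial in $\size(Y)+\sum_i\size(f_i)+\sum_i\log(|c_i|+1)$. The main technical step needing verification is that wedging two GMCs along a common rim actually yields a GMC of the expected form, namely that the inclusion of the common rim remains a homotopy equivalence and the upper rim becomes the wedge of the original upper rims; this follows from the homotopy extension property for cellular pairs applied to each summand. The conceptual obstacle the construction must overcome is the exponential blow-up of any naive doubling, and the composition-based implementation of degree multiplication is precisely what avoids it.
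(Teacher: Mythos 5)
Your proposal is correct and follows essentially the same route as the paper's proof: constant-size generalized mapping cylinders for degree $2$ and $-1$ maps (Lemma~\ref{l:constant-gmc}), composition to realize powers of $2$ (Lemma~\ref{lem:poly-size-power-2-attachment}), the binary expansion combined with the $\chsimp{\sphdim}{\thelen}$-based summation gadget (Lemma~\ref{l:poly-size-sum}), and the wedge construction over a common target (Lemma~\ref{lem:poly-size-wedge}), followed by precomposition with the degree-$c_i$ cylinders as in the paper. The only cosmetic deviations are that you glue the cylinders directly along their common lower rim instead of wedging at the basepoint and attaching the cylinder of the fold map $\nabla$, and that you package the $q,\delta$ machinery as an explicit ``pinch'' cylinder; both are harmless reformulations of the paper's lemmas.
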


The proof will be given in a series of lemmas. Before going into the proof, we will generalize this proposition slightly. A homotopy class of a pointed map $f\:\Sigma^\sphdim_1\vee\cdots\vee\Sigma^\sphdim_s\to Y$ is determined uniquely by its restrictions $\Sigma^\sphdim_k\to Y$. When each restriction is expressed as an integral combination of the $f_i$, we may use Proposition~\ref{p:poly-gmc} together with Lemma~\ref{lem:poly-size-wedge} to provide a generalized mapping cylinder for $f$.

\begin{lemma}
\label{lem:poly-size-power-2-attachment} There is an algorithm that, given an integer of the form $c=\pm 2^d$, $d\in \N$, constructs
a generalized mapping cylinder $N_c\:\Sigma^\sphdim\gmc{}\Sigma^\sphdim$ for the map of degree~$c$.
Moreover, if $\sphdim$ is fixed then $\size(N_c)$ is linear in $d+1$ and the running time of the algorithm is polynomial in $d+1$,
which is the encoding size of $c$.
\end{lemma}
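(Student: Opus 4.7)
The plan is to avoid building a representative of degree $2^d$ by literally summing $2^d$ copies of the identity (which would use exponentially many simplices) and instead exploit that degrees multiply under composition, so the degree $2^d$ map is homotopic to the $d$-fold composition of a degree $2$ self-map of $\Sigma^\sphdim$. The basic building block is a single ``degree $2$'' generalized mapping cylinder, and we glue $d$ copies of it in a chain.

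First I would apply Lemma~\ref{l:constant-gmc} to the fixed pointed continuous map $g_2\:|\Sigma^\sphdim|\to|\Sigma^\sphdim|$ of degree $2$ (and, separately, to the degree $-1$ map $g_{-1}$) to obtain once-and-for-all generalized mapping cylinders $M_2\:\Sigma^\sphdim\gmc{g_2}\Sigma^\sphdim$ and $M_{-1}\:\Sigma^\sphdim\gmc{g_{-1}}\Sigma^\sphdim$. Since $\sphdim$ is fixed and these maps are fixed, $M_2$ and $M_{-1}$ are simplicial sets of constant size $C=C(\sphdim)$, computable in constant time.

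Next, given $c=\pm 2^d$, I would construct $N_c$ inductively. For $c=2^d$, define
\[
N_{2^d} \;:=\; \underbrace{M_2\circ M_2\circ\cdots\circ M_2}_{d\text{ copies}},
\]
where the composition of generalized mapping cylinders is the iterated gluing along shared rims as in Lemma~\ref{lem:cylinder-composition}. By repeated application of that lemma, $N_{2^d}$ is a generalized mapping cylinder for the $d$-fold composite $g_2^{\circ d}$, which has degree $2^d$; hence, by Lemma~\ref{lem:scalar-multiplication-degree} and the fact that two maps $\Sigma^\sphdim\to\Sigma^\sphdim$ are homotopic iff they have the same degree, $N_{2^d}$ represents the degree $2^d$ map as required. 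For $c=-2^d$, set $N_{-2^d}:=M_{-1}\circ N_{2^d}$; Lemma~\ref{lem:cylinder-composition} again gives a generalized mapping cylinder, this time for a composite of degree $-2^d$.

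Finally I would bound size and running time. Each composition $N\circ M_2 = N\cup_{\Sigma^\sphdim} M_2$ adds at most $\size(M_2)=C$ nondegenerate simplices to $\size(N)$ (the shared rim $\Sigma^\sphdim$, of constant size, is identified), and the attachment is computed in constant time for fixed $\sphdim$ because only the lower rim of $N$ and the upper rim of $M_2$, both of constant size, need to be matched up. After $d$ (or $d+1$) such compositions we obtain $\size(N_c)\le (d+1)\cdot C = O(d+1)$ and total running time $O(d+1)$, both linear in the unary, hence polynomial in the binary, encoding size of $c$. The only slightly delicate point, and the one to be checked most carefully, is that the iterated gluing really is associative as a simplicial construction and that repeatedly identifying the ``lower rim $\Sigma^\sphdim$ of the previous stage'' with the ``upper rim $\Sigma^\sphdim$ of $M_2$'' does not inadvertently collapse anything outside the shared rim; this follows from the fact that in each $M_2$ the two rims are disjoint simplicial subsets meeting only at the basepoint vertex, so the gluings only identify the prescribed rim copies of $\Sigma^\sphdim$.
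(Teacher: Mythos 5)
Your proposal is correct and follows essentially the same route as the paper: precompute constant-size generalized mapping cylinders for the degree $2$ and degree $-1$ maps via Lemma~\ref{l:constant-gmc}, then compose $d$ of them (plus possibly the degree $-1$ cylinder) using Lemma~\ref{lem:cylinder-composition}, with the linear size and polynomial time bounds following from the constant size of each factor. The extra care you take about degree multiplicativity and the gluings identifying only the shared rims is a fine elaboration of what the paper leaves implicit.
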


\begin{proof}
Consider maps $g_{-1},g_2\colon |\Sigma^\sphdim|\rightarrow |\Sigma^\sphdim|$ of
degrees $-1$ and $2$, respectively, and choose a generalized mapping cylinder
$N_i$ for each $g_i$ according to Lemma~\ref{l:constant-gmc}.
For $\sphdim$ constant, these are fixed simplicial sets.

Thus, by Lemma~\ref{lem:cylinder-composition},
\[N_{2^d}:=\underbrace{N_2\circ \cdots \circ N_2}_{d~\textrm{factors}}\]
is a generalized mapping cylinder for $(g_2)^d$, a
map of degree $2^d$. By further composing this with $N_{-1}$,
if necessary, we obtain a generalized mapping cylinder $N_c$
for a map of degree $c$.

Moreover, if $\sphdim$ is fixed then we can precompute the generalized mapping
cylinders $N_{-1}$, $N_2$, which leads to $\size(N_c)$ and running time as requested.
\end{proof}

\begin{lemma}
\label{lem:poly-size-wedge}
Suppose that $X_1,\ldots,X_\thelen$ and\, $Y$ are pointed simplicial sets
 and that $M_i\:X_i\gmc{}Y$ are generalized mapping cylinders for pointed maps $f_i\colon|X_i|\to|Y|$, $1\leq i\leq\thelen$.
Then there is an algorithm that constructs a generalized mapping cylinder $M\:X_1\vee \cdots \vee X_\thelen\gmc{f}Y$ for the map $f\:|X_1\vee \cdots \vee X_\thelen|\to |Y|$ with restrictions $f|_{|X_i|}=f_i$.

Moreover, if $p$ is fixed then $\size(N)$ is linear in $\sum_i\size(M_i)$ and the running time is polynomial.
\end{lemma}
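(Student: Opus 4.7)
The natural construction is to glue the given cylinders $M_1,\ldots,M_\thelen$ along their common lower rim $Y$, yielding
\[
M \;:=\; M_1 \cup_Y M_2 \cup_Y \cdots \cup_Y M_\thelen,
\]
the simplicial set obtained by identifying, for each $i$, the copy of $Y$ sitting inside $M_i$ with a single ``shared'' copy. I would take the upper rim of $M$ to be the union of the $X_i$ inside $M$. Since each $X_i\subseteq M_i$ contains the basepoint and the $M_i$'s are glued along $Y$ (which contains the basepoint), the $X_i$'s meet only at the basepoint, so this union is canonically identified with $X_1\vee\cdots\vee X_\thelen$. The map $f\:|X_1\vee\cdots\vee X_\thelen|\to|Y|$ whose restriction to each $|X_i|$ is $f_i$ is well-defined because the $f_i$ are pointed.

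To verify that $M$ is a generalized mapping cylinder for $f$, I would check the two defining properties. First, the inclusion $i_Y\:Y\hookrightarrow M$ is a homotopy equivalence: for each $i$, the inclusion $Y\hookrightarrow M_i$ is a homotopy equivalence and, being a simplicial subcomplex inclusion (hence a cofibration), admits a strong deformation retraction $r_i\:M_i\to Y$ fixing $Y$ pointwise. Since the $r_i$ agree on the common $Y$ (they restrict to the identity), they can be glued to a strong deformation retraction $r\:M\to Y$, witnessing that $i_Y$ is a homotopy equivalence. Second, on each $X_i$ we have $i_{X_i}\sim i_Y\circ f_i$ inside $M_i$ (by definition of $M_i$ as a cylinder for $f_i$), and these pointed homotopies can be glued along the basepoint to give a pointed homotopy $i_{X_1\vee\cdots\vee X_\thelen}\sim i_Y\circ f$ in $M$.

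For the quantitative claim, gluing along $Y$ just identifies the $\thelen$ copies of $Y$ into one, so
\[
\size(M)\;\le\;\sum_{i=1}^{\thelen}\size(M_i),
\]
and the gluing itself is a constant number of simplicial attachments, each performable in time polynomial in the sizes of the inputs when $p$ (and hence the dimensions involved) is fixed, by the remarks in Section~\ref{s:general-constructions}. Thus the construction runs in polynomial time.

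The only subtle point is the simultaneous strong deformation retraction $r$; the rest is formal bookkeeping about wedges and attachments. This subtlety is handled by the standard fact that a cofibration which is a homotopy equivalence admits a strong deformation retraction (used already implicitly in Lemma~\ref{lem:cylinder-composition}), so no genuine obstacle arises.
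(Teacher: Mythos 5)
Your construction (gluing $M_1,\ldots,M_\thelen$ along a single shared copy of the lower rim $Y$) is genuinely different from the paper's, and it has a gap at the step where you identify the union of the upper rims with $X_1\vee\cdots\vee X_\thelen$. The definition of a generalized mapping cylinder only requires $X_i$ and $Y$ to be simplicial subsets of $M_i$ containing the basepoint; it does \emph{not} require $X_i\cap Y=\{*\}$ inside $M_i$. Your inference ``the $M_i$'s are glued along $Y$, hence the $X_i$'s meet only at the basepoint'' is therefore a non sequitur: in $M_1\cup_Y\cdots\cup_Y M_\thelen$ the images of $X_i$ and $X_j$ intersect in $(X_i\cap Y)\cap(X_j\cap Y)$, which can be much larger than the basepoint. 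A concrete failure is allowed by the paper's own conventions (see the footnote stating that for an injective $f$ one may take $M=Y$ itself as a generalized mapping cylinder): with $X_1=X_2=Y$, $f_1=f_2=\id$ and $M_1=M_2=Y$, your glued object is just $Y$, and it contains no copy of $X_1\vee X_2$ arising as the union of the upper rims, so the output is not a generalized mapping cylinder with the required upper rim. The rest of your argument (gluing the strong deformation retractions, which exist since the inclusions $Y\hookrightarrow M_i$ are cofibrations and homotopy equivalences, and gluing the pointed homotopies $i_{X_i}\sim i_Y\circ f_i$ along the basepoint) is fine \emph{once} one knows the upper rims meet pairwise only in the basepoint, and the size and running-time claims are unproblematic.

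The gap is repairable: either add (and then verify at every application) the hypothesis that $X_i\cap Y=\{*\}$ in each $M_i$, or first separate the rims, e.g.\ by replacing each $M_i$ by its composition with the reduced mapping cylinder $\rcyl(\id_{X_i})$ as in Lemma~\ref{lem:cylinder-composition}, after which your gluing goes through verbatim. The paper sidesteps the issue entirely by a different two-step construction: it forms the wedge $M'=M_1\vee\cdots\vee M_\thelen$, which is automatically a generalized mapping cylinder $X_1\vee\cdots\vee X_\thelen\gmc{\,f_1\vee\cdots\vee f_\thelen\,}Y\vee\cdots\vee Y$ because only basepoints are identified, and then composes $M'$ with the (reduced) mapping cylinder of the folding map $\nabla\:Y\vee\cdots\vee Y\to Y$, invoking Lemma~\ref{lem:cylinder-composition}. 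That route needs no hypothesis on how the rims sit inside the $M_i$, at the modest cost of carrying $\thelen$ extra copies of $Y$ plus the folding cylinder, which does not affect the linear size bound.
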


\begin{proof}
The wedge sum $M'=M_1\vee\cdots\vee M_\thelen$ is a generalized mapping cylinder
\[M'\:X_1\vee \cdots \vee X_\thelen\gmc{f_1\vee \cdots \vee f_\thelen}Y\vee \cdots \vee Y.\]
We attach to $M'$ the mapping cylinder of the folding map $\nabla\:Y\vee\cdots \vee Y\to Y$ to obtain the required generalized mapping cylinder $M$.
\end{proof}

\begin{lemma}
\label{l:poly-size-sum}
Let $M_1,\ldots,M_\thelen\:\Sigma^\sphdim\gmc{}Y$ be generalized mapping cylinders for $[f_1],\ldots,[f_\thelen]\in\pi_\sphdim(Y)$. Then there is an algorithm that constructs a generalized mapping cylinder $M\:\Sigma^\sphdim\gmc{}Y$ for the homotopy class $[f_1]+\ldots+[f_\thelen]\in\pi_\sphdim(Y)$, in polynomial time if $p$ is fixed.
\end{lemma}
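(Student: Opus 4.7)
The plan is to realize the sum geometrically using the ``chain of simplices'' model $\chsimp{\sphdim}{\thelen}/\partial\chsimp{\sphdim}{\thelen}$ from Section~\ref{s:sum}, and then package everything as a composition of generalized mapping cylinders built from the ingredients already at hand. The point is that Section~\ref{s:sum} already did the homotopy-theoretic work on the nose; what remains is to interface it with the GMC formalism in a way that keeps sizes under control.

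First I would wedge the given cylinders via Lemma~\ref{lem:poly-size-wedge} (folding the multiple copies of $Y$) into a single generalized mapping cylinder
\[M'\:\Sigma^\sphdim_1\vee\cdots\vee\Sigma^\sphdim_\thelen\ \gmc{f_1\vee\cdots\vee f_\thelen}\ Y.\]
Next, recall from Section~\ref{s:sum} the simplicial maps
$q\:\chsimp{\sphdim}{\thelen}/\partial\chsimp{\sphdim}{\thelen}\to\Sigma^\sphdim$ (a homotopy equivalence) and
$\delta\:\chsimp{\sphdim}{\thelen}/\partial\chsimp{\sphdim}{\thelen}\to\Sigma^\sphdim_1\vee\cdots\vee\Sigma^\sphdim_\thelen$.
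Form the reduced simplicial mapping cylinders $N:=\rcyl(\delta)$ and $P:=\rcyl(q)$, which are GMCs with upper rim $\chsimp{\sphdim}{\thelen}/\partial\chsimp{\sphdim}{\thelen}$ and lower rims $\Sigma^\sphdim_1\vee\cdots\vee\Sigma^\sphdim_\thelen$ and $\Sigma^\sphdim$, respectively. Since $q$ is a homotopy equivalence, the reversal $P^{\operatorname{op}}\:\Sigma^\sphdim\gmc{q^{-1}}\chsimp{\sphdim}{\thelen}/\partial\chsimp{\sphdim}{\thelen}$ is again a GMC (by the remark after the definition of generalized mapping cylinder).

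Now I would compose via Lemma~\ref{lem:cylinder-composition}:
\[M\ :=\ M'\circ N\circ P^{\operatorname{op}}\ :\ \Sigma^\sphdim\ \gmc{}\ Y.\]
The rims match at each gluing: $P^{\operatorname{op}}$'s lower rim equals $N$'s upper rim $\chsimp{\sphdim}{\thelen}/\partial\chsimp{\sphdim}{\thelen}$, and $N$'s lower rim equals $M'$'s upper rim $\Sigma^\sphdim_1\vee\cdots\vee\Sigma^\sphdim_\thelen$. The resulting GMC represents the homotopy class of $(f_1\vee\cdots\vee f_\thelen)\circ\delta\circ q^{-1}$, which is exactly $[f_1]+\cdots+[f_\thelen]\in\pi_\sphdim(Y)$ by the identification established in Section~\ref{s:sum}.

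For polynomial running time with $\sphdim$ fixed, all constituents are small. The simplicial set $\chsimp{\sphdim}{\thelen}$ is a union of $\thelen$ copies of $\Delta^\sphdim$, so $\size(\chsimp{\sphdim}{\thelen})=O(\thelen)$; the simplicial maps $q$ and $\delta$ are constructed explicitly in Section~\ref{s:sum}; hence $\size(N)$ and $\size(P)$ are also $O(\thelen)$. Lemma~\ref{lem:poly-size-wedge} gives $\size(M')$ linear in $\sum_i\size(M_i)$, and the composition operation of Lemma~\ref{lem:cylinder-composition} simply glues along a common rim, so sizes add. There is no point at which the construction blows up. The only conceptual subtlety (rather than a real obstacle) is that Section~\ref{s:sum} phrased the sum of maps in terms of honest simplicial representatives, while here we have only GMCs---but inserting $P^{\operatorname{op}}$ is precisely the GMC-level substitute for inverting the homotopy equivalence $q$, so the argument goes through verbatim.
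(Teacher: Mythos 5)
Your proposal is correct and follows essentially the same route as the paper's own proof: both use the chain $\Sigma^\sphdim \xleftarrow{q} \chsimp{\sphdim}{\thelen}/\partial\chsimp{\sphdim}{\thelen} \xrightarrow{\delta} \Sigma^\sphdim_1\vee\cdots\vee\Sigma^\sphdim_\thelen \xrightarrow{f} Y$, take the (reversed) mapping cylinder of $q$ as a GMC for a homotopy inverse $\overline q$, the mapping cylinder of $\delta$, and the wedge/folding cylinder of Lemma~\ref{lem:poly-size-wedge} for $f$, and compose them via Lemma~\ref{lem:cylinder-composition}, using $f\delta\overline q\sim f_1+\cdots+f_\thelen$ from Section~\ref{s:sum}. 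Your added size bookkeeping is consistent with what the paper leaves implicit.
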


\begin{proof}
Let us consider the following chain of maps:
\[\Sigma^\sphdim\xleftarrow{\ q\ }\chsimp{\sphdim}{\thelen}/\partial\chsimp{\sphdim}{\thelen}\xrightarrow{\ \delta\ }\Sigma^\sphdim_1\vee\cdots\vee\Sigma^\sphdim_\thelen\xrightarrow{\ f\ }Y,\]
where $f$ restricts to $f_i$ on the $i$th summand. The first two maps are simplicial and thus their mapping cylinders provide generalized mapping cylinders for any homotopy inverse $\overline q$ of $q$ and for $\delta$, respectively. A generalized mapping cylinder for $f$ was constructed in Lemma~\ref{lem:poly-size-wedge}. Composing these cylinders gives the result, since
$f\delta\overline q\sim f_1+\cdots+f_\thelen$;
see Section~\ref{s:sum}.
\end{proof}

\begin{proof}[Proof of Proposition~\ref{p:poly-gmc}]
Let $M_i\:\Sigma^\sphdim\gmc{f_i}Y$, $i=1,\ldots,\thelen$, be generalized mapping cylinders.

Using the binary expansion of an integer $c$, Lemma~\ref{l:poly-size-sum},
 and Lemma~\ref{lem:poly-size-power-2-attachment}, we can construct the generalized mapping cylinder $N_c$ for every
map $\Sigma^\sphdim\to\Sigma^\sphdim$ of degree $c$
in time polynomial (at most quadratic)
in the bit length of $c$. The composition $M_iN_{c_i}$ is a generalized mapping cylinder for $c_i[f_i]$. Lemma~\ref{l:poly-size-sum} then constructs a generalized mapping cylinder for the sum $c_1[f_1]+\cdots+c_\thelen[f_\thelen]$.
\end{proof}

\subsection{Proofs of the main results}\label{s:proofs}

In Section~\ref{s:constructions} we described the relevant spaces as cell complexes. It remains to construct them as finite simplicial complexes and the map $f\:A\to Y$ as a simplicial map.

\heading{Proof of Theorem~\ref{t:undecide}~(a).} We will give details only for $\thedim$ even. Using the notation from Section~\ref{s:fixedtarget}, we triangulate the target sphere $Y=S^\thedim$ in an arbitrary manner and fix simplicial maps
\[w_{+},w_{-}\:\sdit(\Sigma^{2\thedim-1})\to S^\thedim\]
that represent the homotopy classes of the Whitehead square and its negative,
\[[w_\pm]=\pm[\iota,\iota]\in\pi_{2\thedim-1}(S^\thedim)\]
(by the simplicial approximation theorem, a sufficiently fine subdivision
$\sdit(\Sigma^{2\thedim-1})$ and the required simplicial maps exist,
and they can be hard-wired into the algorithm).
Let now $\bb$ be the vector of the right hand sides of an arbitrary system of the form \eqref{e:qsym} and let $1\leq\theeqn\leq s$. By adding $|b_\theeqn|$ times the map $w_\pm$, we obtain a simplicial map
\[A'_\theeqn:=\sdit\big(\chsimp{2\thedim-1}{|b_\theeqn|}/\partial \chsimp{2\thedim-1}{|b_\theeqn|}\big)\xrightarrow{\ f'_\theeqn\ }S^\thedim\]
that represents $b_\theeqn[\iota,\iota]$.\footnote{When $b_\theeqn=0$, we take $A'_\theeqn=\sdit(\Sigma^{2\thedim-1})$ and $f'_\theeqn\:\sdit(\Sigma^{2\thedim-1})\to S^\thedim$ the constant map onto the basepoint.} Finally,
 we take $A'=A'_1\vee\cdots\vee A'_s$ and specify $f'\:A'\to S^\thedim$ by its restrictions to the $A'_\theeqn$, namely, the maps $f'_\theeqn$.

We recall that the space $X$ is constructed as the mapping cylinder of a map $g\:A\to W$ that was expressed in terms of the Whitehead products $[\nu_i,\nu_j]$ and the coefficients of the system \eqref{e:qsym}.
In the simplicial setup it will be more convenient to use generalized
mapping cylinders for this purpose. As explained during the discussion of the generalized extension problem
in the beginning of Section~\ref{s:constructions}, the extension problems are equivalent.
Using a fixed representatives
$w'_{\pm}:\sdit(\Sigma^{2\thedim-1})\to\Sigma^\thedim\vee\Sigma^\thedim$,
we may construct the generalized mapping cylinder $X'\:A'\gmc{}W$,
 with the inclusion denoted by $i'\:A'\to X'$, of the composition
\[\underbrace{A'_1\vee\cdots\vee A'_s}_{A'} \xrightarrow{\ q\vee\cdots\vee q\ }
\underbrace{\Sigma^{2\thedim-1}_1\vee\cdots\vee\Sigma^{2\thedim-1}_s}_A \xrightarrow{\ g\ }
\underbrace{\Sigma^\thedim_1\vee\cdots\vee\Sigma^\thedim_r}_W.\]
Thus, we have constructed an extension problem, given by $i'$ and $f'$, and
by Proposition~\ref{p:fixed-target-k-even}, its solvability is equivalent to the solvability of the system \eqref{e:qsym} that we started with.

Finally, we replace the simplicial sets $A'$ and $X'$ by the simplicial complexes $B_*(\sd(A'))$ and $B_*(\sd(X'))$
(see Proposition~\ref{p:doublesubdiv}).
The map $f'$ is replaced by the composition $f'\gamma_{A'}$ in the diagram
\[\xymatrix{
B_*(\sd(A'))\ar[r]^-{\gamma_{A'}}\ar@{c->}[d] &A'\ar[r]^{f'}\ar@{c->}[d]&S^\thedim\\
B_*(\sd(X'))\ar[r]^-{\gamma_{X'}}& X'&
}\]
where the maps $\gamma_{A'}$ and $\gamma_{X'}$ were also defined in Proposition~\ref{p:doublesubdiv}.

Since both $\gamma_{X'}$ and $\gamma_{A'}$ are homotopy equivalences, the extendability of $|f'|$ is equivalent to that of $|f'\gamma_{A'}|$
by Corollary~\ref{c:ext}.
\proofend

\heading{Proof of Theorem~\ref{t:undecide}~(b).}
Again, we work out the case $\thedim$ even. Let $A$, $X$, $Y$, $f$ and $g$ be as in Proposition~\ref{p:fixed-target-k-even} and fix an arbitrary pair of simplicial sets $(X',A')$ whose geometric realization is homotopy equivalent to $(X,A)$. Using generalized mapping cylinders for this purpose, we may assume that $A'=\Sigma^{2\thedim-1}$. We fix some simplicial maps
\[w'_\pm\:\sdit(\Sigma^{2\thedim-1})\to\Sigma^\thedim\vee\Sigma^\thedim\]
representing the Whitehead product and its negative. According to Proposition~\ref{p:gmc_cell_attach}, the cell complex $Y$ of \eqref{e:target} is homotopy equivalent to the quotient $M/S$, where $M$ is an arbitrary generalized mapping cylinder $M\:S\gmc{\varphi}T$ for the map $\varphi\:|S|\to|T|$ between the geometric realizations of the simplicial sets
\[S=\bigvee\nolimits_{i<j}\Sigma_{ij}^{2\thedim-1}\vee\bigvee\nolimits_{i}
\Sigma_{ii}^{2\thedim-1},\ \ \ T=\bigvee\nolimits_i\Sigma^\thedim_i\vee\bigvee\nolimits_\theeqn
\Sigma^{2\thedim-1}_\theeqn,\]
whose restrictions to the spheres of $S$ are the attaching maps $\varphi_{ij}$ and $\varphi_{ii}$ for the cells of $Y$; see Section~\ref{s:constructions}. The generalized mapping cylinder $M$ is constructed by Proposition~\ref{p:poly-gmc}.

Since the image of $f\:A\to Y$ lies in $T$, the replacement of $Y$ by $M/S$ results in replacing $f$ by the composition
\[\xymatrix{
\tilde f\:\Sigma^{2\thedim-1} \ar[r]^-f & T \ar@{ ->}[r]^-{i_{T}} & M \ar[r]^-{\operatorname{proj}} & M/S.
}\]
(the homotopy equivalence $Y\simeq M/S$ restricts to $\operatorname{proj}i_T$ on $T$ by Proposition~\ref{p:gmc_cell_attach}).

It remains to replace $\tilde f$ by a simplicial map. But since $f$ is a combination of the Whitehead products and the remaining maps $i_T$ and $\operatorname{proj}$ are simplicial, we may achieve this by replacing $M/S$ further by the generalized mapping cylinder $Y'\:\Sigma^{2\thedim-1}\gmc{\tilde f}M/S$ as in Section~\ref{s:gmc}. We denote the inclusion $\Sigma^{2\thedim-1}\to Y'$
by $f'$. From the definition of the generalized mapping cylinder, $i_{M/S} \tilde f\sim f'$ and $i_{M/S}$ is a homotopy equivalence, and therefore the extendability of $\tilde f$ is equivalent to that of $f'$. This finishes the construction of a simplicial replacement of the extension problem.

To make everything into simplicial complexes, we apply $B_*\sd$ to all the involved simplicial sets $A'=\Sigma^{2\thedim-1}$, $X'$, $Y'$ and the simplicial map $f'$.\proofend

\heading{Proof of Theorem~\ref{t:sharpP}.}
Let us fix some simplicial representatives
\[h\:\sdit(\Sigma^3)\to\Sigma^2,\ \ \ w\:\sdit(\Sigma^3)\to\Sigma^2\vee\Sigma^2,\ \ \ m_2,m_{-1}\:\sdit(\Sigma^2)\to\Sigma^2\]
for the Hopf map $\eta$, the Whitehead product $[\iota_1,\iota_2]$, and the maps of degree $2$ and $-1$, respectively. We may then build the generalized mapping cylinder
\[M\:\Sigma^3_1\vee\cdots\vee\Sigma^3_s\gmc{\varphi_\aa}\Sigma^2_1\vee\cdots\vee\Sigma^2_r,\]
for the map $\varphi_\aa$ whose restriction to the $k$th summand $\Sigma^3_k$ is given by
\[\varphi_\theeqn=\sum_{1\leq i\leq r} a_{ii}^{(\theeqn)} \iota_i\circ \eta+\sum_{1\leq i<j\leq r}a_{ij}^{(\theeqn)}[\iota_i,\iota_j].\]
We construct  Anick's simplicial complex $\anick{\aa}$ as
\[\anick{\aa}=B_*\sd\big(M/(\Sigma^3_1\vee\cdots\vee\Sigma^3_s)\big).\]
By Proposition~\ref{p:gmc_cell_attach}, it is homotopy equivalent to the
cellular complex obtained from the wedge $\Sigma^2_1\vee\cdots\vee\Sigma^2_r$
by attaching $4$-cells along the maps with homotopy classes~$\varphi_\theeqn$.

To get the statement of  Theorem~\ref{t:sharpP}, it is now sufficient to realize that the algorithmic construction above can be carried out  in the time polynomial in the binary encoding of the vector $a_{ij}^{(\theeqn)}$, $1\leq i\leq j\leq r$, $1\leq \theeqn\leq s$.\proofend

\heading{Acknowledgement.} We are grateful to Maurice Rojas
for providing us useful information on Diophantine equations.

\appendix
{\renewcommand\thesection{\appendixname\ \Alph{section}.}
\section{Converting a simplicial set into a simplicial complex}}
\label{a:subdivisions}
Here we outline a proof of Proposition~\ref{p:doublesubdiv}.

Since a simplicial set also encodes an ordering of the vertices within each simplex, there is
another ``barycentric subdivision'' $\sd_*(X)$ associated  with any simplicial set $X$, obtained
by reversing the order of the vertices in every simplex of $\sd(X)$ (in $\sd(X)$, the inclusion chains
of simplices are ordered according to ascending dimension, and in $\sd_*(X)$ according to descending dimension).
Thus, for example, $\sd(\Delta^1)$ can be described pictorially as $\xymatrix@1{\bullet \ar[r] & \bullet & \bullet \ar[l]}$
while $\sd_*(\Delta^1)$ is $\xymatrix@1{\bullet & \bullet \ar[l] \ar[r] & \bullet}$.
 The barycentric subdivision $\sd_*(X)$ is related to the original
simplicial set $X$ via an initial vertex map $\sd_*(X)\to X$,
which is a homotopy equivalence.

Moreover there is a universal way of associating a simplicial complex with any simplicial set $X$: it has the same vertex
set as $X$ and a collection of vertices spans a (unique) simplex if and only if there \emph{exists} a simplex in the original simplicial set $X$ with this vertex set. An alternative, equivalent definition of $B_*X$ is that it is the simplicial complex associated in this way
with $\sd_*(X)$.

\begin{proof}[Proof of Proposition~\ref{p:doublesubdiv}]
The face operators $\partial_i$ can be iterated to obtain more general face operators. Since each $\partial_i$ leaves out the $i$-th vertex of a simplex, by iterating we obtain face operators that leave out a set of vertices. When this set is $I\subseteq\{0,\ldots,n\}$, we write the corresponding operator as
$\partial_I$. It is easy to observe that we can express $\partial_I$ as
\[\partial_I=\partial_{i_1}\cdots\partial_{i_k},\]
where $i_1<\cdots<i_k$ is the ordering of the elements of $I=\{i_1,\ldots,i_k\}$. We call $k$ the \emph{codimension}
 of $\partial_I$. Similarly, we can iterate the $s_i$ and obtain general
degeneracy operators~$s_I$.


Since we are interested in the computational side of the story, we will describe the simplicial complex $B_*(\sd(X))$ explicitly. Its vertices are the chains
\[\xymatrix{
\bs=\sigma_0 \ar@{ >->}[r]^-{f_1} & \sigma_1 \ar@{ >->}[r]^-{f_2} & \ar@{.}[r] & \ar@{ >->}[r]^-{f_k} & \sigma_k,
}\]
where $\sigma_0,\ldots,\sigma_k$ are simplices of $X$ with $\sigma_k$ non-degenerate, and each $f_i$ is a face operator of codimension at least $1$, for which $\sigma_{i-1}=f_i\sigma_i$.

We say that a chain
\[\xymatrix{
\bt=\tau_0 \ar@{ >->}[r]^-{g_1} & \tau_1 \ar@{ >->}[r]^-{g_2} & \ar@{.}[r] & \ar@{ >->}[r]^-{g_\ell} & \tau_\ell
}\]
(still with $\tau_\ell$ non-degenerate and all face operators $g_i$ of codimension at least $1$) is a subchain of $\bs$, which we write as $\bs>\bt$,
if there exists an injective monotone map (a subsequence) $\varphi\:\{0,\ldots,\ell\}\to\{0,\ldots,k\}$ with $\ell<k$ and a commutative diagram
\[\xymatrix{
\sigma_{\varphi(0)} \ar@{ >->}[r] \ar@{->>}[d] & \sigma_{\varphi(1)} \ar@{ >->}[r] \ar@{->>}[d] & \ar@{.}[r] & \ar@{ >->}[r] & \sigma_{\varphi(\ell)} \ar@{->>}[d] \\
\tau_0 \ar@{ >->}[r]^-{g_1} & \tau_1 \ar@{ >->}[r]^-{g_2} & \ar@{.}[r] & \ar@{ >->}[r]^-{g_\ell} & \tau_\ell
}\]
(the top maps are the appropriate compositions of the $f_i$), where every arrow $\xymatrix@1@C=15pt{\sigma_{\varphi(i)} \ar@{->>}[r] & \tau_i}$ is an iterated degeneracy operator $p_i$ for which $p_i\tau_i=\sigma_{\varphi(i)}$. The commutativity means that the respective compositions of operators are equal. For each $\varphi$, there exists at most one subchain $\bt$, but for a given $\bt$, the choice of $\varphi$ is not unique. The composition $\sigma_0\to\sigma_{\varphi(0)}\to\tau_0$ gives a canonical operator $\sigma_0\to\tau_0$. It is not too hard to show\footnote{The important ingredients are that every simplex can be expressed uniquely as a degeneracy of a non-degenerate simplex, and that every operator can be written uniquely as a degeneracy of a face.} that this operator depends only on $\bs$ and $\bt$ and not on the choice of $\varphi$.


The $n$-simplices of $B_*(\sd(X))$ are then formed by the subsets $\{\bs^0,\ldots,\bs^n\}$ for all decreasing sequences $\bs^0>\cdots>\bs^n$ of chains; we order the vertices in each simplex according to the subchain relation.

The simplicial map $\gamma\:B_*(\sd(X))\to X$ is defined on vertices by sending $\bs$ to the last vertex of $\sigma_0$. For a simplex specified by $\bs^0>\cdots>\bs^n$, we have a canonical chain
\[\xymatrix{
(\sigma^0)_0 \ar[r] & (\sigma^1)_0 \ar[r] & \ar@{.}[r] & \ar[r] & (\sigma^n)_0
}\]
of operators and we use these to map the last vertices $\lastvertex((\sigma^i)_0)$ of the faces $(\sigma^i)_0$ to $(\sigma^n)_0$. In this way, we obtain an (ordered) collection of vertices of $(\sigma^n)_0$. The value of $\gamma$ on the sequence $\bs^0>\cdots>\bs^n$ is then the simplex of $(\sigma^n)_0$ spanned by these vertices (it might be degenerate, e.g.~when some of the $(\sigma^{i-1})_0\to(\sigma^i)_0$ preserve the last vertex).

According to \cite[Prop.~4.6.3]{FritschPiccinini:CellularStructures-1990} and \cite[Cor.~4.3]{Jardine:SimplicialApproximation-2004}, the horizontal map and the vertical map in the triangle
\[\xymatrix{
\sd_*(\sd(X)) \ar[r]^-\pi \ar[d] & B_*(\sd(X)) \ar[ld]^-\gamma \\
X
}\]
are homotopy equivalences (the vertical map is the composition of the initial vertex map with the last vertex map). Since the diagram commutes,%
\footnote{This is not too hard to show, but we do not want to dwell into the exact definition of $\sd_*(\sd(X))$. The main point is that the preimages under $\pi$ of the simplex $\bs^0>\cdots>\bs^n$ are given by the choices of the subsequences $\varphi$. The commutativity is then implied by the independence of the operators $(\sigma^{i-1})_0\to(\sigma^i)_0$ on these choices.}
 the map $\gamma$ must be a homotopy equivalence, too.
%
\end{proof}

{\renewcommand\thesection{\appendixname\ \Alph{section}.}
\section{Extending maps into $(\thedim-1)$-reduced simplicial sets}}
\label{a:reduced}
Here we prove the claim made after Theorem~\ref{t:undecide}
 regarding the construction of the target space $Y$
 as a $(k-1)$-reduced simplicial set.
It is usual in effective algebraic topology that certain computations with simplicial sets only work when at least some of the inputs are $0$-reduced or $1$-reduced. A typical example is the computation of the homology groups of a loop space $\Omega X$ of a simplicial set $X$. First, we remark, that it is impossible to compute these homology groups for general $X$, as otherwise we would obtain an algorithmic computation of $\pi_1(X)$, which is known to be impossible by (cite). On the positive side, there is a very old method for the computation of these homology groups which, however, works only for $1$-reduced $X$ (using the so-called cobar construction and homological perturbation theory; see (cite)).\footnote{On the other hand it is possible to compute these homology groups for any $1$-connected $X$. This is another application of the ``brute force'' version of Theorem~\ref{t:reduced_replacement} that is explained just prior to its statement.}

It is thus natural to ask if the undecidability of the extension problem of Theorem~\ref{t:undecide} might only be caused by $Y$ being $(\thedim-1)$-connected but not $(\thedim-1)$-reduced. In this section, we will prove a version of Theorem~\ref{t:undecide} with $(\thedim-1)$-reduced $Y$. The simplicial set $Y'$ appearing in the proof of part (a) of Theorem~\ref{t:undecide} might be chosen to be either $\Sigma^\thedim$ or $\Sigma^\thedim\vee\Sigma^\thedim$ (depending on the dimension), both of which are $(\thedim-1)$-reduced and no further work is needed.

To finish the proof of part (b), we need to replace $Y'$ by some $(\thedim-1)$-reduced simplicial set without changing the extendability. To this end, we introduce a very useful notion of an $n$-equivalence. Let $0\leq n\leq\infty$. A continuous map $Y\to Z$ is said to be an \emph{$n$-equivalence} if it induces an isomorphism on all homotopy groups up to dimension $n-1$ and a surjection in dimension $n$ (the $\infty$-equivalences are usually called weak homotopy equivalences). We will need the following basic property of $n$-equivalences.

\begin{prop}
Let $X$ be a cell complex of dimension $n$, $f\:A\to Y$ a continuous map defined on a subcomplex $A\subseteq X$ and $h\:Y\to Z$ an $n$-equivalence. Then $f$ is extendable to $X$ if and only if $hf$ is extendable to $X$.
\end{prop}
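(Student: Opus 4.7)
The forward implication is immediate: if $\bar f\:X\to Y$ extends $f$, then $h\bar f\:X\to Z$ extends $hf$. For the converse, suppose $\bar g\:X\to Z$ extends $hf$. My plan is to replace $h$ by the inclusion $i_Y\:Y\hookrightarrow M:=\Cyl(h)$ into its mapping cylinder. By Lemma~\ref{lem:mapping-cylinder}, the retraction $r\:M\to Z$ is a homotopy equivalence with $ri_Y=h$, so $i_Y$ is itself an $n$-equivalence. The long exact sequence of the pair $(M,Y)$ then yields $\pi_i(M,Y)=0$ for all $i\leq n$.

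Next I would lift $\bar g\:X\to Z$ to a map $G\:X\to M$ that restricts to $i_Y f$ on $A$. Starting from $i_Z\bar g\:X\to M$, where $i_Z\:Z\hookrightarrow M$, the relation $i_Z h=i_Z r i_Y\sim i_Y$ shows that $(i_Z\bar g)|_A=i_Z hf\sim i_Y f$. Applying the homotopy extension property of the cell pair $(X,A)$, I can modify $i_Z\bar g$ to the desired $G$ with $G|_A=i_Y f$ strictly.

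Finally, I would deform $G$ rel $A$ into $Y$ by cellular obstruction theory, processing the cells of $X\setminus A$ in order of increasing dimension. At stage $i$, the current map already sends $A$ together with the $i$-dimensional part of $X$ into $Y$; on each $(i+1)$-cell $e$ of $X\setminus A$ it then restricts to a map $(D^{i+1},S^i)\to(M,Y)$, which represents an element of $\pi_{i+1}(M,Y)=0$ and hence can be deformed rel $S^i$ into $Y$. These cellular deformations can be combined and extended globally via the homotopy extension property to a deformation rel $A$. Since $\dim X=n$, the induction terminates at stage $n$ and yields the desired $\bar f\:X\to Y$ with $\bar f|_A=f$.

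The main obstacle is bridging the global homotopy-group information provided by the $n$-equivalence hypothesis on $h$ and a cell-by-cell construction on $X$. Two devices together resolve this: the mapping cylinder converts $h$ into the cofibration $i_Y$ without altering its homotopy type, and the long exact sequence translates the $n$-equivalence hypothesis into the vanishing of the relative homotopy groups $\pi_i(M,Y)$ for $i\leq n$, which is precisely the obstruction-theoretic statement needed to drive the cellular induction. Corollary~\ref{c:ext} then guarantees that any homotopy-level ambiguity in the construction does not affect the final extendability statement.
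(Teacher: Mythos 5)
Your argument is correct, but it takes a different route from the paper: the paper disposes of the nontrivial direction in one line by citing Spanier (Section~7.6, Theorem~22), which is exactly the statement that an $n$-equivalence can be "lifted against" a relative CW pair of dimension at most $n$. What you have written is, in effect, a self-contained proof of that cited theorem, assembled from ingredients already available in the paper: Lemma~\ref{lem:mapping-cylinder} gives $r i_Y=h$ and $i_Z h\sim i_Y$, so $i_Y$ is an $n$-equivalence and the long exact sequence of $(M,Y)$ gives $\pi_i(M,Y)=0$ for $i\leq n$; the homotopy extension property lets you replace $i_Z\bar g$ by a map $G$ with $G|_A=i_Yf$ strictly; and the final cell-by-cell compression of $G$ into $Y$ rel $A$ is the standard compression lemma (Hatcher, Lemma~4.6), whose skeleton-by-skeleton induction terminates because $\dim X=n$ and the relevant relative groups vanish through dimension $n$. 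The only fine print you gloss over is the usual one for the compression lemma (cells of dimension $0$ and $1$, and the fact that the vanishing of $\pi_i(M,Y)$ must hold for every basepoint in $Y$, which it does since $h$ is an $n$-equivalence at every basepoint); also, your closing appeal to Corollary~\ref{c:ext} is not actually needed once $G|_A=i_Yf$ holds on the nose. What the paper's citation buys is brevity; what your argument buys is a proof using only the mapping-cylinder lemma and the homotopy extension property, at the cost of re-deriving a standard textbook result.
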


\begin{proof}
If $f$ has an extension $g$, then $hf$ extends to $hg$. The other direction is \cite[Section~7.6, Theorem~22]{Spanier}.
\end{proof}

With the previous proposition in mind, we construct a replacement of $Y$ by brute force, i.e.~by going through all ``candidate replacements'' and checking if they give equivalent extension problems. In detail, we fix $(X',A')$ as in Section~\ref{s:proofs} and make a list of all pairs $(Z',h')$, where $Z'$ is a finite $(\thedim-1)$-reduced simplicial set and $h'\:Y'\to Z'$ a simplicial map. In each step we test whether $h'$ is a $(2\thedim)$-equivalence. If that is the case,
 then the problem of extending the composition $h'f'\:A'\to Z'$ to $X'$ is equivalent to that of $f'\:A'\to Y'$, which we proved to be undecidable.

By the Hurewicz theorem, $h'$ is a $(2\thedim)$-equivalence if and only if $\Cone(h')$ has zero homology groups up to dimension $2\thedim$. This can be tested easily using a Smith normal form algorithm. The pair $(Z',h')$ with the above properties exists by the following theorem, finishing the proof of Theorem~\ref{t:undecide} with $(\thedim-1)$-reduced target.

\begin{theorem}\label{t:reduced_replacement}
Let $n\geq\thedim\geq 2$ and let $Y$ be a $(\thedim-1)$-connected simplicial set whose homology groups $H_i(Y)$, $i\leq n$, are finitely generated. Then there exist a finite $(\thedim-1)$-reduced simplicial set $Z$ and an $n$-equivalence $\psi\colon Y\to Z$.
\end{theorem}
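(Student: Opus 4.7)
The plan is to construct $Z$ as a finite $(\thedim{-}1)$-reduced CW complex together with a continuous $n$-equivalence $\psi\colon Y\to Z$, and then to convert $Z$ to a simplicial complex via $B_{*}\sd$ (Proposition~\ref{p:doublesubdiv}) and $\psi$ to a simplicial map via simplicial approximation (Theorem~\ref{t:s-apx-sset}). The construction proceeds in three stages: establish finite generation of the relevant $\pi_{i}(Y)$; build a finite CW approximation mapping \emph{into} $Y$; invert it up to an $n$-equivalence mapping \emph{out of} $Y$.

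First, since $Y$ is $(\thedim{-}1)$-connected, the Hurewicz theorem gives $\pi_{\thedim}(Y)\cong H_{\thedim}(Y)$, which is finitely generated by hypothesis; finite generation of $\pi_{i}(Y)$ for $\thedim<i\le n$ then follows by induction on $i$ using Serre's mod-$\mathcal{C}$ theorem for $\mathcal{C}$ the class of finitely generated abelian groups. Next, I would build a finite $(\thedim{-}1)$-reduced CW complex $W$ of dimension at most $n{+}1$ together with an $(n{+}1)$-equivalence $\phi\colon W\to Y$ via the classical skeleton-by-skeleton construction: set $W^{(\thedim-1)}=\{\ast\}$ and, for each $m\in\{\thedim,\dots,n\}$, attach finitely many $m$-cells to realize a generating set of $\pi_{m}(Y)$ (extending $\phi$ along chosen representatives), then finitely many $(m{+}1)$-cells to kill the kernel of $\pi_{m}(W^{(m)})\to\pi_{m}(Y)$; this kernel is finitely generated because $\pi_{m}(W^{(m)})$ is (as $W^{(m)}$ is a simply connected finite CW complex).

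The heart of the argument is the inversion step. Treating $Y$ as a CW complex through its geometric realization, the standard lifting theorem for $(n{+}1)$-equivalences (e.g.\ \cite[Prop.~4.22]{Hatcher}) yields a bijection $\phi_{*}\colon[Y^{(n)},W]\xrightarrow{\cong}[Y^{(n)},Y]$; lifting the identity inclusion produces $\psi_{n}\colon Y^{(n)}\to W$ with $\phi\circ\psi_{n}\simeq\mathrm{incl}$. I would then extend $\psi_{n}$ cell by cell over $Y$. For an $(n{+}1)$-cell $e$ of $Y$, the extension obstruction $[\psi_{n}\circ\partial e]\in\pi_{n}(W)$ vanishes, since $\phi_{*}$ is an isomorphism on $\pi_{n}$ and $\phi_{*}[\psi_{n}\circ\partial e]=[\partial e]=0\in\pi_{n}(Y)$. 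For an $m$-cell of $Y$ with $m\ge n{+}2$, the obstruction lies in $\pi_{m-1}$ of the current target and may be nonzero; I would kill it by attaching an $m$-cell to the current target along this class. Because the $Y$ produced in the proof of Theorem~\ref{t:undecide} is finite, only finitely many additional cells, all of dimension $\ge n{+}2$, are attached, and one obtains a finite $(\thedim{-}1)$-reduced CW complex $Z\supseteq W$ together with the extension $\psi\colon Y\to Z$.

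To verify that $\psi$ is an $n$-equivalence, note that the cells of $Z\setminus W$ all have dimension $\ge n{+}2$, so $\pi_{i}(Z)=\pi_{i}(W)$ for $i\le n$; combining this with $\phi\circ\psi\simeq\mathrm{incl}$ on $Y^{(n)}$ and the isomorphism $\phi_{*}\colon\pi_{i}(W)\xrightarrow{\cong}\pi_{i}(Y)$ forces $\psi_{*}=\phi_{*}^{-1}$ to be an isomorphism on $\pi_{i}$ for $i\le n$, so $\psi$ is in fact an $(n{+}1)$-equivalence. The main obstacle is the obstruction-theoretic extension beyond dimension $n{+}1$: one must ensure both that only finitely many cells need to be attached to $W$ (ensured by the finiteness of $Y$) and that the attached cells (all of dimension $\ge n{+}2$) do not disturb the low-dimensional homotopy groups already matched.
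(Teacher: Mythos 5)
Your CW-level strategy---build a finite approximation $\phi\colon W\to Y$ inducing isomorphisms on $\pi_i$ for $i\le n$, lift the inclusion of $Y^{(n)}$ through $\phi$, and extend cell by cell while enlarging the target by cells of dimension $\ge n+2$ to kill obstructions---is sound as far as it goes, and it is genuinely different from the paper's argument. But it does not prove the statement, because the theorem asks for a finite \emph{$(\thedim-1)$-reduced simplicial set} $Z$, i.e., one vertex and no nondegenerate simplices in dimensions $1,\dots,\thedim-1$, and your final conversion step destroys exactly this property: a simplicial complex with more than one vertex is not even $0$-reduced, so the output of $B_*(\sd(\cdot))$ (Proposition~\ref{p:doublesubdiv}) or of any simplicial approximation can never be $(\thedim-1)$-reduced. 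Reducedness is the entire point of the theorem (it is the certificate of $(\thedim-1)$-connectedness discussed after Theorem~\ref{t:undecide}), and it also cannot be rescued by redoing your cell attachments simplicially: attaching an $m$-cell simplicially forces you to subdivide the attaching sphere and cone it off, and the cone contributes nondegenerate simplices in all low dimensions. This is precisely where the paper uses different machinery: it maps $Y$ simplicially into a $(\thedim-1)$-reduced \emph{minimal Kan complex} $Z$ (via $\Ex^\infty$ and minimality), and Proposition~\ref{p:finiteReduced} then builds the finite reduced model \emph{inside} a Kan pair obtained by horn filling, using the relative Hurewicz theorem to represent the needed classes by single simplices whose faces already lie in the reduced part; the Kan extension condition plays the role of your obstruction-theoretic cell attachments while preserving reducedness. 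Note also that the paper never inverts its finite approximation: the map out of $Y$ is just the corestriction of the simplicial map $\iota$ to its image, which lies in the finite model---so it is simplicial on $Y$ itself, which is what the brute-force search in the appendix uses (your $\psi$ is only continuous, and approximating it simplicially would require subdividing $Y$, which that application does not allow).

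Two smaller points. First, your finiteness of $Z$ rests on ``the $Y$ produced in the proof of Theorem~\ref{t:undecide} is finite'': you attach one new cell for each cell of $Y$ of dimension $\ge n+2$, so you prove the statement only for such $Y$. Some finiteness hypothesis is in fact unavoidable (for $Y=\C P^\infty$ and $n\ge 3$ no finite-dimensional $Z$ can receive an $n$-equivalence, since a class $u\in H^2(Z)$ pulling back to the generator would have $\psi^*(u^m)=x^m\neq 0$ while $u^m=0$ for large $m$; the paper's own proof likewise needs the image $\iota(Y)$ to be finite), but it should be stated as a hypothesis rather than imported from the application. Second, $\phi$ cannot in general be made an $(n+1)$-equivalence with $W$ finite, because $\pi_{n+1}(Y)$ need not be finitely generated ($H_{n+1}(Y)$ is not assumed finitely generated); fortunately you only use that $\phi_*$ is an isomorphism on $\pi_i$ for $i\le n$ (which your construction does provide, and which already gives the lift on $Y^{(n)}$ and the vanishing of the obstructions over $(n+1)$-cells), and correspondingly your conclusion should be that $\psi$ is an $n$-equivalence, not an $(n+1)$-equivalence.
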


We believe that the theorem should be known in some form but we were not able to find it in the literature. For its proof, we will need a couple of more advanced notions. Accordingly, the proof will be more sketchy.

Let $\Delta^n$ be the standard $n$-simplex (regarded as a simplicial set).
For $0\leq k\leq n$, the \emph{$k$th $n$-horn} is the simplicial subset $\Lambda^n_k\subseteq \Delta^n$
spanned by all the proper faces of $\Delta^n$ with the exception of the $k$-th face.

A simplicial set $Z$ is said to be a \emph{Kan complex} if every simplicial map $f\:\Lambda^n_k\to Z$ can be extended to a simplicial map $\Delta^n\to Z$. The map $f$ is called a \emph{horn in $Z$} and we say that this horn \emph{can be filled} if the extension exists.

A usual method for constructing Kan complexes is the \emph{successive
filling of horns}, which is described, e.g., in 
\cite[Proofs of Prop.~4.5.5 and 4.5.6]{FritschPiccinini:CellularStructures-1990}.

Given a simplicial set $Y$ and a horn $f\colon
\Lambda^n_k\rightarrow Y$ in $Y$,
we can form a larger simplicial set $Y\cup_f \Delta^n$ by
attaching $\Delta^n$ to $Y$ along $f$; this larger simplicial
set (continuously) deformation retracts to $Y$, and, by construction, the
horn $f\colon \Lambda^n_k\rightarrow Y$ can be filled in the
larger simplicial set. We refer to this operation as a \emph{single horn filling}.

If we simultaneously attach fillings for all unfillable horns
in $Y$, we obtain a simplicial set $\Ex(Y)$ that contains $Y$
and such that all horns in $Y$ can be filled in $\Ex(Y)$.
Iterating this procedure\footnote{Formally, we define $\Ex^{0}(Y):=Y$
and $\Ex^{n}(Y):=\Ex(\Ex^{n-1}(Y))$ for $n\geq 1$.}, we
obtain a sequence $Y \subseteq \Ex(Y)\subseteq \Ex^2(Y)\subseteq \ldots$, where, by construction,
every horn in $\Ex^{n}(Y)$, can be filled in $\Ex^{n+1}(Y)$.
Let $\Ex^\infty(Y)$ be the union of the simplicial sets
$\Ex^n(Y)$, $n\in \N$.

\begin{prop}\hfill
\begin{itemize}
\item
The simplicial set $\Ex^\infty(Y)$ is a Kan complex.

\item
The inclusion $Y\to\Ex^\infty(Y)$ is an $\infty$-equivalence.

\item
If $Y$ is $(d-1)$-reduced then so is $\Ex(Y)$ and consequently also $\Ex^\infty(Y)$.

\item
Let $L\subseteq\Ex^\infty(Y)$ be a finite simplicial subset. Then $L$ lies in a simplicial subset obtained from $Y$ by a finite sequence of single horn fillings.
\end{itemize}
\end{prop}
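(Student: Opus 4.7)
The plan is to prove the four items in an order that respects their dependencies: first item~(4), a purely combinatorial compactness statement, which then powers items~(1) and~(2); item~(3) is handled independently by a direct dimension count.

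For item~(4), I would observe that every nondegenerate simplex of $\Ex^\infty(Y)$ is introduced at a specific stage by a specific single horn filling, as either the top simplex or the missing $k$-th face of the attached $\Delta^n$. Given a finite simplicial subset $L \subseteq \Ex^\infty(Y)$, the finitely many nondegenerate simplices of $L$ not already in $Y$ pick out a finite initial set of horn fillings responsible for them. Each such horn filling $\Lambda^n_k \to \Ex^{m-1}(Y)$ references only finitely many simplices, each of which is either in $Y$ or was introduced by an earlier filling (at strictly smaller stage); closing the initial set under this ``depends on'' relation therefore terminates in at most $m$ steps with finite branching, producing a finite collection of horn fillings that can be applied in any order consistent with stages to build a simplicial subset containing~$L$. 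Item~(1) is then immediate: a horn $f\:\Lambda^n_k \to \Ex^\infty(Y)$ has finite image, which by~(4) factors through some $\Ex^N(Y)$, and the horn is therefore filled in $\Ex^{N+1}(Y) \subseteq \Ex^\infty(Y)$.

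For item~(3), the key observation is that every horn $\Lambda^n_k \to Y$ with $n \leq d$ is automatically fillable when $Y$ is $(d-1)$-reduced: every simplex of $\Lambda^n_k$ has dimension at most $d-1$, so it must be mapped to the unique degenerate simplex of its dimension built from the basepoint, and the common extension is the degenerate $n$-simplex on the basepoint. Thus only horns of dimension $n \geq d+1$ contribute genuinely new simplices to $\Ex(Y)$, and these have dimensions $n-1$ and $n$, both $\geq d$. In particular, $\Ex(Y)$ has the same $(d-1)$-skeleton (a single vertex) as $Y$, so it remains $(d-1)$-reduced, and iterating gives the conclusion for $\Ex^\infty(Y)$.

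For item~(2), I would use the standard fact that $|\Lambda^n_k|$ is a strong deformation retract of $|\Delta^n|$, so each single horn filling is a strong deformation retraction on geometric realizations; these retractions glue across the (possibly infinite) collection of simultaneous fillings used to pass from $\Ex^m(Y)$ to $\Ex^{m+1}(Y)$, making each inclusion $\Ex^m(Y) \hookrightarrow \Ex^{m+1}(Y)$ a weak equivalence. Combined with item~(4), which guarantees that any map $S^n \to \Ex^\infty(Y)$ or nullhomotopy thereof has image in some finite stage $\Ex^N(Y)$, we conclude $\pi_*(\Ex^\infty(Y)) = \mathop{\mathrm{colim}}_N \pi_*(\Ex^N(Y)) \cong \pi_*(Y)$, so the inclusion $Y \to \Ex^\infty(Y)$ is an $\infty$-equivalence. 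The main subtlety I anticipate is the careful stage/dependency bookkeeping in item~(4), since the remaining items reduce to it together with standard facts about horn fillings and filtered colimits.
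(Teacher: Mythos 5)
Your proposal is correct and follows essentially the same route as the paper: finite pieces of $\Ex^\infty(Y)$ live in finite stages (your dependency-closure argument for the fourth item is just a more explicit form of the paper's induction on the stage), horns are then filled one stage later, the $\infty$-equivalence comes from the deformation retractions given by horn fillings together with the colimit of homotopy groups, and the reducedness claim is the same dimension count the paper dismisses as clear. No substantive differences or gaps.
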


\begin{proof}
Any horn in $\Ex^\infty(Y)$ lies in some $\Ex^n(Y)$ and is thus fillable in $\Ex^{n+1}(Y)\subseteq\Ex^\infty(Y)$. The proof of the second point is similar, using a deformation retraction of $\Ex^n(Y)$ onto $Y$. The third point is clear. For the last point, $L$ lies in some $\Ex^n(Y)$. By induction, $L\cap\Ex^{n-1}(Y)$ uses only a finite number of single horn fillings. A finite number of single horn fillings is required to cover $L$.
\end{proof}

Thus, there exists a homotopy equivalence $Y\to\Ex^\infty(Y)$ of any simplicial set with a Kan complex. For the proof of Theorem~\ref{t:reduced_replacement}, we will need that every Kan complex contains a minimal Kan complex as a deformation retract \cite[Theorem~9.5]{May:SimplicialObjects-1992} and that a minimal $(\thedim-1)$-connected Kan complex is in fact $(\thedim-1)$-reduced. Thus, composing the inclusion $Y\to\Ex^\infty(Y)$ with the deformation retraction yields an $\infty$-equivalence $\iota\:Y\to Z$ of $Y$ with a $(\thedim-1)$-reduced Kan complex $Z$.

\heading{Proof of Theorem~\ref{t:reduced_replacement}. }
Let $Z^0\subseteq Z$ denote the image of $\iota$ and $\iota_0\:Y\to Z^0$ the restriction of $\iota$. By the following proposition, there exists a simplicial set $Z'$, containing $Z^0$ as a subset, and a simplicial $(n+1)$-equivalence $\psi\:Z'\to Z$. Since this map is the identity on $Z^0$, there is a canonical map $h\:Y'\to Z'$ making the diagram
\[\xymatrix@C=20pt@R=10pt{
& & Z' \ar[dd]^-\psi \\
Y \ar[r]^-{\iota_0} \ar@/^10pt/@{-->}[rru]^-{h'} \ar@/_10pt/[rrd]_-\iota & Z^0 \ar@{c->}[ru] \ar@{c->}[rd] \\
& & Z
}\]
commutative (namely the composition of $\iota_0\:Y\to Z^0$ with the inclusion $Z^0\embed{}Z'$). Since $\iota$ is an $\infty$-equivalence and $\psi$ induces an isomorphism on homotopy groups up to dimension $n$, the same is true for $h'$ and in particular, it is an $n$-equivalence.\proofend

\begin{prop}\label{p:finiteReduced}
Let $n\geq\thedim\geq 2$ and let $Z$ be a $(\thedim-1)$-connected Kan complex whose homology groups $H_i(Z)$, $i\leq n$, are finitely generated. Then there exist a finite $(\thedim-1)$-reduced simplicial set $Z'$ and an $n$-equivalence $\psi\colon Z'\to Z$.

If $Z^0$ is an arbitrary finite $(\thedim-1)$-reduced subset of $Z$, then $Z'$ can be chosen to contain $Z^0$ as a subset $\psi$ to be the identity on $Z^0$.
\end{prop}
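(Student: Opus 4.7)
The plan is to build $Z'$ by induction on dimension, starting from $Z^0$ (or a single vertex if no $Z^0$ is prescribed) and producing at each stage $m \geq \thedim - 1$ a finite $(\thedim-1)$-reduced simplicial set $Z'_m$ of dimension at most $m+1$, equipped with a simplicial map $\psi_m \colon Z'_m \to Z$ that extends the given inclusion and induces isomorphisms on $\pi_i$ for $i < m$ and a surjection on $\pi_m$. Running the induction up to $m = n$ then produces the desired $n$-equivalence $\psi := \psi_n \colon Z' := Z'_n \to Z$. The base case $m = \thedim - 1$ is trivial because $Z^0$ is $(\thedim-1)$-reduced and $Z$ is $(\thedim-1)$-connected, so both sides have vanishing $\pi_i$ for $i < \thedim$.

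The inductive step from $m$ to $m+1$ consists of two substeps. First, kill $K := \ker \pi_m(\psi_m)$: pick simplicial representatives $\beta_1,\ldots,\beta_t \colon \Sigma^m \to Z'_m$ of a finite generating set of $K$ (since $Z'_m$ is not Kan in general, this may require first enlarging $Z'_m$ by finitely many horn fillings in dimensions $\geq \thedim$, which keeps it finite and $(\thedim-1)$-reduced). Since each $\psi_m \circ \beta_j$ is null-homotopic in $Z$ and $Z$ is a Kan complex, one obtains a simplicial null-homotopy, equivalently a simplicial extension of $\psi_m \circ \beta_j$ over the reduced cone $\rcone(\Sigma^m)$. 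Form the simplicial pushout $\tilde{Z}'_m := Z'_m \cup_{\bigvee_j \Sigma^m} \bigvee_j \rcone(\Sigma^m)$ and extend $\psi_m$ to $\tilde{\psi}_m$ accordingly; this introduces only simplices of dimensions $m, m+1 \geq \thedim$, preserving both finiteness and $(\thedim-1)$-reducedness. After this substep, $\pi_m(\tilde{\psi}_m)$ is an isomorphism, and lower homotopy groups are unaffected because one attaches only $(m+1)$-cells. Second, if $m+1 \leq n$, make $\pi_{m+1}$ surjective: choose simplicial representatives $\alpha_1,\ldots,\alpha_s \colon \Sigma^{m+1} \to Z$ (they exist because $Z$ is Kan) of a finite generating set of $\operatorname{coker} \pi_{m+1}(\tilde{\psi}_m)$, and set $Z'_{m+1} := \tilde{Z}'_m \vee \bigvee_l \Sigma^{m+1}$, with $\psi_{m+1}$ extending $\tilde{\psi}_m$ by $\alpha_l$ on the $l$-th wedge summand. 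Wedging with the $m$-connected $\Sigma^{m+1}$ leaves $\pi_i$ untouched for $i \leq m$ and preserves $(\thedim-1)$-reducedness.

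The main substantive input, besides the Kan property of $Z$, is the finite generation at each stage of $K$ and of $\operatorname{coker} \pi_{m+1}(\tilde{\psi}_m)$, so that the inductive step is effected by finitely many cell attachments. This follows from Serre's mod-$\mathcal{C}$ Hurewicz theorem, with $\mathcal{C}$ the class of finitely generated abelian groups, applied in two places: to $Z$, which is $(\thedim-1)$-connected with finitely generated $H_i$ for $i \leq n$, giving finite generation of $\pi_i(Z)$ in that range; and to the finite simply connected simplicial sets $Z'_m$, whose integral homology is finitely generated by finiteness (and $\thedim \geq 2$ supplies simple connectivity throughout). The most delicate point I anticipate is the first substep: because $Z'_m$ is not a Kan complex, representing elements of $\pi_m(Z'_m)$ by simplicial maps from $\Sigma^m$ requires an auxiliary horn-filling enlargement, and one must check that this can be arranged finitely and while staying in the $(\thedim-1)$-reduced world. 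Both are true because $\Ex^\infty$ preserves $(\thedim-1)$-reducedness (its non-trivial horn fillings live only in dimensions $\geq \thedim$) and because any given homotopy class can be represented using only finitely many fillings, as already used in the proof of Theorem~\ref{t:reduced_replacement}.
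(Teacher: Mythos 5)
Your proposal is correct, but it follows a genuinely different route from the paper's proof. You build the finite $(\thedim-1)$-reduced approximation by the classical homotopy-theoretic induction: at each stage kill the kernel on $\pi_m$ by coning off simplicial representatives and then wedge on spheres to hit generators of the cokernel on $\pi_{m+1}$, using the horn-filling mechanism (the $\Ex$-construction from the preceding proposition) to get honest simplicial representatives while staying finite and $(\thedim-1)$-reduced, and using Serre's mod-$\mathcal{C}$ Hurewicz theorem to guarantee that the kernels and cokernels are finitely generated. The paper instead follows Hatcher's Proposition 4C.1 in a simplicial guise: it forms the (collapsed) mapping cylinder of $\psi^{n-1}$, enlarges the resulting pair to a Kan pair $(K,L)$ by filling horns, realizes finitely many generators of $H_n(K,L)$ by $n$-simplices via the \emph{relative} Hurewicz theorem, and then cuts down to a finite subcomplex using the fact that only finitely many single horn fillings are needed. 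The trade-off: the paper's argument needs no Serre-class theory --- finite generation is obtained from the elementary exact sequence $H_n(K)\to H_n(K,L)\to H_{n-1}(L)$ --- and attaches simplices in only one dimension per inductive stage, whereas your argument imports the heavier mod-$\mathcal{C}$ machinery (applied both to $Z$ and to the finite intermediate complexes) but is arguably more transparent and, if one tracks it, produces a simplicial map $\psi$ rather than a composite with a continuous deformation retraction. Two small points you leave implicit and should record: after enlarging $Z'_m$ by horn fillings you must extend $\psi_m$ over the new simplices, which works precisely because $Z$ is a Kan complex (fill the image horns in $Z$); and the null-homotopies of $\psi_m\circ\beta_j$ must be taken pointed so that they descend to the reduced cones --- again unproblematic since $Z$ is Kan, but worth a sentence.
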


To prove Proposition~\ref{p:finiteReduced}, we follow the
argument in \cite[Proposition~4C.1]{Hatcher} but make the
attachment maps simplicial by using the idea of filling
horns described above.

\begin{proof}[Proof of Proposition~\ref{p:finiteReduced}]
We proceed by induction on $n$. For $n\leq\thedim-1$, we can take $Z'=Z^0$.

Assume that we have constructed a finite simplicial set $Z^{n-1}$
and a map $\psi^{n-1}\colon Z^{n-1}\rightarrow Z$ that is an $(n-1)$-equivalence.

Let $\hat Z$ be the simplicial set obtained from the simplicial mapping cylinder $\Cyl(\psi^{n-1})$ of $\psi^{n-1}\:Z^{n-1}\to Z$ by collapsing $Z^0\times\Delta^1$ onto the base $Z^0$. Since $\psi^{n-1}$ is the identity on $Z^0$, the usual deformation retraction of $\Cyl(\psi^{n-1})$ onto $Z$ induces a deformation retraction of $\hat Z$ onto $Z$. We
enlarge the pair $(\hat Z,Z^{n-1})$ to a Kan pair $(K,L)$ by
filling horns. Since $Z^{n-1}$ is $(\thedim-1)$-reduced, so is $L$.

By the assumption on $\psi^{n-1}$ and by the
Hurewicz theorem, we have $H_i(K,L)=0$ for
$i\leq n-1$. Consider the exact sequence of homology groups
for the pair $(K,L)$:
$$\ldots \to H_n(K) \to H_{n}(K,L) \to H_{n-1}(L) \to H_{n-1}(K)\to H_{n-1}(K,L)=0.$$
Pick  generators $\gamma_j$ of $H_n(K,L)$. Since $H_n(K)\cong H_n(Z)$ and $H_{n-1}(L)\cong H_{n-1}(Z^{n-1})$ are finitely generated, a finite number of generators suffices. By the relative Hurewicz theorem,
$H_n(K,L)\cong \pi_n(K,L,\ast)$ (simplicial homotopy groups,
since we are working with a Kan pair). Thus, we can choose
$n$-simplices $g_j$ of $K$ representing the
$\gamma_j$, whose faces lie in $L$.

Let $K'$ be the simplicial subset of $K$ spanned by $L$ and the simplices $g_j$. Then the natural homomorphism $H_n(K',L)\to H_n(K,L)$ is surjective by the choice of the simplices $g_j$. Since these relative homology groups are zero in lower dimensions, it follows from the long exact sequence of the triple $(K,K',L)$ that $H_i(K,K')=0$ for $i\leq n$. In effect, the inclusion $K'\to K$ is an $n$-equivalence (by the relative Hurewicz theorem again). Composing with an arbitrary deformation retraction of $K$ onto $Z$ we obtain an $n$-equivalence $K'\to Z$ satisfying all the required properties except that $K'$ is infinite.

Thus, it remains to replace $K'$ by a finite simplicial set. Since there are only finitely many
simplices $g_j$, there is a simplicial set
$L'$ with $Z^{n-1} \subseteq L' \subseteq L$, obtained
from $Z^{n-1}$ by filling finitely many horns, and such that
the boundaries of all the simplices $g_j$ lie in $L'$. We take $Z^n$ to be the finite simplicial set spanned by $L'$ and the simplices $g_j$. Since $|L'|$ is a deformation retract of $|L|$, we get that
 $|Z^n|$ is a deformation retract of $|K'|$.
Thus, $Z^n$ has all the required properties.
\end{proof}

\bibliographystyle{alpha}
\bibliography{Postnikov}

\end{document}